\documentclass{article}

\usepackage[sectionbib]{natbib}
\usepackage{chapterbib}

\usepackage{lmodern}
\usepackage{tgpagella}
\usepackage{helvet}
\usepackage{setspace}
\usepackage[indent]{parskip}

\usepackage{colortbl}    
\usepackage[utf8]{inputenc}
\usepackage[T1]{fontenc}
\usepackage[english,french]{babel}
\usepackage{longtable}   
\usepackage{hyperref}
\usepackage{amssymb}
\usepackage{pifont}
\usepackage[babel=true]{microtype}
\usepackage{geometry}
\usepackage{amsmath,amsfonts,amssymb,amsthm}
\usepackage{mathtools}
\usepackage{mathrsfs}
\usepackage{nccmath}
\usepackage{dsfont}
\usepackage{bbm}
\usepackage{cancel}

\renewcommand{\epsilon}{\varepsilon}
\renewcommand{\leq}{\leqslant}
\renewcommand{\geq}{\geqslant}

\newtheorem{proposition}{Proposition}[section]
\newtheorem{definition}{Definition}[section]

\usepackage{etoolbox}
\usepackage{hyperref}

\makeatletter
\newcommand{\assumptionname}[1]{%
  \gdef\@currentlabelname{#1}%
}
\makeatother

\renewcommand{\eqref}[1]{eq.\,\ref{#1}}

\makeatletter
\newcounter{assump}
\renewcommand{\theassump}{\arabic{assump}}

\newcommand{\assumplabel}[2]{%
  \phantomsection%
  \def\@currentlabelname{#2}%
  \label{#1}%
}

\newcommand{\assumplabelstep}[2]{%
  \refstepcounter{assump}%
  \phantomsection%
  \def\@currentlabelname{#2}%
  \label{#1}%
  \textbf{(\theassump)}\ %
}

\usepackage{graphicx}

\usepackage{wrapfig}
\usepackage{pdflscape}
\usepackage{booktabs}
\usepackage{multirow}
\usepackage{array}
\usepackage{tabularx}
\usepackage{longtable}
\usepackage{makecell}
\usepackage[export]{adjustbox}
\usepackage{pgfplots}
\pgfplotsset{compat=1.18}

\newcolumntype{P}[1]{>{\RaggedRight\arraybackslash}p{#1}}
\newcolumntype{X}[1]{>{\RaggedRight\arraybackslash}p{#1}}
\newcolumntype{M}[1]{>{\centering\arraybackslash}m{#1}}

\usepackage[
  font={small,sf},
  labelfont=bf,
  format=plain,
  margin=0pt
]{caption}

\usepackage[list=true]{subcaption}
\usepackage{xcolor}

\definecolor{hardgreen}{RGB}{102,194,165}
\definecolor{softgreen}{RGB}{166,216,84}
\definecolor{myyellow}{RGB}{255,217,47}
\definecolor{mypink}{RGB}{231,138,195}
\definecolor{myblue}{RGB}{141,160,203}
\definecolor{mygray}{RGB}{179,179,179}
\definecolor{gold}{RGB}{229,196,148}
\definecolor{UMColor}{RGB}{255,86,96}

\definecolor{grey_cbf1}{HTML}{999999}
\definecolor{orange_cbf1}{HTML}{E69F00}
\definecolor{blue_cbf1}{HTML}{56B4E9}
\definecolor{green_cbf1}{HTML}{009E73}
\definecolor{yellow_cbf1}{HTML}{F0E442}
\definecolor{navy_cbf1}{HTML}{0072B2}
\definecolor{red_cbf1}{HTML}{D55E00}
\definecolor{purple_cbf1}{HTML}{CC79A7}

\definecolor{black_cbf2}{HTML}{000000}
\definecolor{orange_cbf2}{HTML}{E69F00}
\definecolor{blue_cbf2}{HTML}{56B4E9}
\definecolor{green_cbf2}{HTML}{009E73}
\definecolor{yellow_cbf2}{HTML}{F0E442}
\definecolor{navy_cbf2}{HTML}{0072B2}
\definecolor{red_cbf2}{HTML}{D55E00}
\definecolor{purple_cbf2}{HTML}{CC79A7}
\usepackage{hyperref}
\hypersetup{
  colorlinks   = true,
  linkcolor    = black,
  citecolor    = brown,
  urlcolor     = brown,
  breaklinks   = true
}

\usepackage{cleveref}

\usepackage[noend]{algpseudocode}
\usepackage[titlenumbered,ruled,noend,vlined,norelsize]{algorithm2e}
\newcounter{algorithmct}

\usepackage{listings}
\usepackage[inline]{enumitem}
\usepackage{siunitx}
\usepackage{numprint}

\usepackage{tikz}
\usetikzlibrary{
  arrows,
  arrows.meta,
  shapes,
  shapes.geometric,
  positioning,
  shadows,
  trees,
  mindmap,
  backgrounds,
  tikzmark,
  calc
}
\usepackage[edges]{forest}
\usepackage{xkcdcolors}

\tikzset{
  > = stealth,
  every node/.append style = {
    draw = none,
    text = black
  },
  every path/.append style = {
    arrows = ->
  },
  hidden/.style = {
    draw = black,
    shape = circle,
    inner sep = 2pt
  }
}

\tikzstyle{node.box}  = [rectangle, rounded corners, minimum width=3cm, minimum height=1cm, align=center, draw=black]
\tikzstyle{node.text} = [minimum width=3cm, minimum height=1cm, align=center]
\tikzstyle{arrow}     = [thick,->,>=stealth]

\colorlet{linecol}{black!75}
\usepackage[most]{tcolorbox}

\newtcbox{\gfmath}{on line,
  boxrule=0.8pt,
  arc=2.5mm,
  colframe=blue!70!black,
  colback=blue!3!white,
  left=3pt,right=3pt,top=2pt,bottom=2pt,
  boxsep=1pt
}

\newtcbox{\ipwmath}{on line,
  boxrule=0.8pt,
  arc=2.5mm,
  colframe=red!70!black,
  colback=red!3!white,
  left=3pt,right=3pt,top=2pt,bottom=2pt,
  boxsep=1pt
}

\newtcbox{\corrmath}{on line,
  boxrule=0.8pt,
  arc=2.5mm,
  colframe=hardgreen,
  colback=green!3!white,
  left=3pt,right=3pt,top=2pt,bottom=2pt,
  boxsep=1pt
}
\usepackage[title,header,toc]{appendix}
\usepackage{minitoc}

\usepackage{import}
\usepackage{xifthen}
\usepackage{pdfpages}
\usepackage{transparent}

\usepackage{fancyhdr}
\fancypagestyle{plain}{
  \fancyhf{}

}
\makeatletter
\def\@makechapterhead#1{%
  \vspace*{25pt}%
  {\parindent \z@ \raggedright \normalfont
    \ifnum \c@secnumdepth >\m@ne
      {\Large\scshape Chapter \thechapter\par\nobreak}
      \vskip 10pt
    \fi
    \interlinepenalty\@M
    {\Huge\bfseries #1\par\nobreak}
    \vskip 12pt
    \hrule height 0.6pt
    \vskip 30pt
  }}

\def\@makeschapterhead#1{%
  \vspace*{25pt}%
  {\parindent \z@ \raggedright \normalfont
    \interlinepenalty\@M
    {\Huge\bfseries #1\par\nobreak}
    \vskip 12pt
    \hrule height 0.6pt
    \vskip 30pt
  }}
\makeatother

\newcommand{\clearrightpage}{%
  \clearpage
  \ifodd\value{page}
  \else
    \thispagestyle{empty}%
    \null
    \clearpage
  \fi
}

\newcommand{\cleartooddpage}{%
  \clearpage
  \ifodd\value{page}
  \else
    \thispagestyle{empty}%
    \null
    \clearpage
  \fi
}

\usepackage{relsize,etoolbox}
\AtBeginEnvironment{quote}{\smaller}

\let\oldquote\quote
\let\endoldquote\endquote
\renewenvironment{quote}[2][]
  {\if\relax\detokenize{#1}\relax
     \def\quoteauthor{#2}%
   \else
     \def\quoteauthor{#2~---~#1}%
   \fi
   \oldquote}
  {\par\nobreak\smallskip\hfill(\quoteauthor)%
   \endoldquote\addvspace{\bigskipamount}}

\newcommand{\partwithoutnumber}[2][]{%
  \begingroup
  \let\oldthispagestyle\thispagestyle
  \renewcommand{\thispagestyle}[1]{\oldthispagestyle{empty}}%
  \part{#2}%
  \if\relax\detokenize{#1}\relax
  \else
    \label{#1}%
  \fi
  \endgroup
}
\usepackage{afterpage}

\usepackage{comment}
\usepackage{here}
\usepackage{makeidx}
\usepackage{pifont}

\usepackage{glossaries}
\makeglossaries

\usepackage{snaptodo}
\snaptodoset{block rise=2em}
\snaptodoset{margin block/.style={font=\tiny}}

\newcommand{\trajrowcolor}[2]{%
  \foreach \x [count=\i from 0] in {#2} {%
    \ifnum\x=1
      \filldraw[#1] (\i*0.32,0) rectangle ++(0.28,0.28);
    \else
      \draw[black] (\i*0.32,0) rectangle ++(0.28,0.28);
    \fi
  }%
}

\renewcommand{\emph}[1]{{\it #1}}

\newcommand{\paren}[2][a]{%
\IfEqCase{#1}{%
{a}{\left(#2\right)}%
{0}{(#2)}%
{1}{\big(#2\big)}%
{2}{\Big(#2\Big)}%
{3}{\bigg(#2\bigg)}%
{4}{\Bigg(#2\Bigg)}%
}[\PackageError{paren}{Undefined option to paren: #1}{}]%
}

\newcommand{\set}[2][a]{
\IfEqCase{#1}{%
{a}{\left\{#2\right\}}%
{0}{\{#2\}}%
{1}{\big\{#2\big\}}%
{2}{\Big\{#2\Big\}}%
{3}{\bigg\{#2\bigg\}}%
{4}{\Bigg\{#2\Bigg\}}%
}[\PackageError{set}{Undefined option to set: #1}{}]%
}

\newcommand{\brac}[2][a]{%
\IfEqCase{#1}{%
{a}{\left[#2\right]}%
{0}{[#2]}%
{1}{\big[#2\big]}%
{2}{\Big[#2\Big]}%
{3}{\bigg[#2\bigg]}%
{4}{\Bigg[#2\Bigg]}%
}[\PackageError{brac}{Undefined option to brac: #1}{}]%
}

\renewcommand{\leq}{\leqslant}
\renewcommand{\geq}{\geqslant}

\makeatletter
\newcommand{\para}{\mathrel{\mathpalette\new@parallel\relax}}
\newcommand{\new@parallel}[2]{%
  \begingroup
  \sbox\z@{$#1T$}
  \resizebox{!}{\ht\z@}{\raisebox{\depth}{$\m@th#1/\mkern-5mu/$}}%
  \endgroup
}
\makeatother

\title{Estimating treatment duration effects via clone-censor-weight: a breast cancer case study}
\date{ }
\author{\small\hfill{} Charlotte \textsc{Voinot}, No\'emie \textsc{Simon-Tillaux}, Emma \textsc{Torrini}, Stefan \textsc{Michiels}, \\ \small Bernard \textsc{Sebastien}, Fabrice \textsc{Andr\'e}, Cl\'ement \textsc{Berenfeld}, and Julie \textsc{Josse}.}

\begin{document}

\maketitle

\section{Introduction}
\subsection{Context}

Randomized clinical trials remain the reference design for evaluating treatment effects, but their eligibility criteria, controlled settings, and limited sample sizes may restrict the transportability of their findings to routine care. Observational data therefore play an important role in evaluating treatment strategies in real-world populations, including strategies that differ by treatment duration \citep{hernan2016,Hernan2020}. In this setting, one may distinguish \emph{static} treatment strategies, which prescribe the same treatment pattern for all individuals, from \emph{dynamic} strategies, which adapt treatment decisions over time according to evolving patient characteristics. For example, a static duration strategy may compare ``stop treatment after 2 years'' versus ``continue treatment for 5 years,'' whereas a dynamic strategy may recommend stopping treatment only when a biomarker, toxicity profile, or disease status crosses a pre-specified threshold during follow-up \citep{Murphy2003,Cain_2010}. 

Evaluating such longitudinal strategies in observational data is challenging because treatment and covariate histories evolve jointly over time. Post-baseline covariates may both affect subsequent treatment decisions and predict the survival outcome, while themselves being affected by prior treatment, thereby inducing time-varying confounding \citep{Robins2000MSM,Hernan2000-hiv}. When the outcome is time-to-event, an additional difficulty arises: individuals with longer observed treatment durations must, by construction, survive and remain event-free long enough to receive them, which may induce immortal time bias under naive comparisons \citep{Suissa_2008}. In this article, we study static treatment duration strategies under two progressively more complex settings: first, when confounding is fully captured by baseline covariates, and second, when it also involves time-varying covariates measured during follow-up. In both settings, we rely on the cloning--censoring--weighting (CCW) framework \citep{Hernan_2006,Cain_2010}, which emulates a target trial by replicating each individual across the strategies under comparison, artificially censoring clones when their observed treatment history becomes incompatible with the assigned strategy, and correcting the resulting informative censoring through inverse probability of censoring weighting.

\subsection{Motivations}
Our motivating application concerns adjuvant endocrine therapy in early-stage breast cancer, and in particular tamoxifen, which is widely used in hormone receptor-positive disease. Historical randomized trials and large patient-level meta-analyses have established that 5 years of adjuvant tamoxifen reduces both recurrence and breast cancer mortality compared with no endocrine treatment, supporting its role as a standard of care \citep{tamoxifen_5y,EBCTCG2011}. Although randomized studies have also examined whether extending therapy beyond 5 years may provide additional benefit in selected patients \citep{Davies2013,Gray2013}, treatment burden remains substantial in practice.

Indeed, tamoxifen is associated with adverse effects and reduced quality of life, which often compromise long-term adherence. In routine care, observational studies have reported substantial early discontinuation, with nearly 55\% of patients stopping treatment before completing the recommended 5-year duration \citep{Van_Herk-Sukel2010-vo,Hershman2010}. This raises a clinically important question: could shorter treatment durations provide similar survival benefit for some patients while reducing treatment burden? Although randomized trials have explored shorter durations, such as 2 versus 5 years of tamoxifen, the available evidence remains limited and does not fully reflect the heterogeneity, adherence patterns, and treatment pathways observed in routine care \citep{Delozier2000-ba}.

To investigate this question in a real-world setting, we use data from a \emph{Breast Cancer} cohort, a large French nationwide prospective study of patients with localized breast cancer. Its detailed longitudinal information on treatment exposure, follow-up, and patient characteristics makes it well suited for evaluating alternative treatment duration strategies. In this work, we emulate a target trial comparing static tamoxifen duration strategies using the cloning--censoring--weighting framework.
\subsection{Methodological challenges, contributions and article outline}

The cloning--censoring--weighting (CCW) framework is increasingly used to emulate treatment duration strategies in observational survival studies, yet several methodological issues remain insufficiently clarified in the current literature \citep{Maringe2020,wang2022statistical,Dumas_2025,Amiot_2024,Petito_2020}. More broadly, CCW belongs to the wider class of longitudinal causal inference methods for time-varying treatments and survival outcomes, alongside the parametric G-formula, 
marginal structural models estimated by inverse probability weighting, and structural nested models estimated by G-estimation \citep{robins1986new,Robins2000MSM,Hernan2000-hiv,Vansteelandt2014,Keogh2023SequentialTrials}. We focus on CCW because it provides a particularly transparent operational framework for treatment duration strategies in observational survival data, especially when the aim is to account for observed treatment histories that provide admissible approximations of the target strategy, avoid immortal time bias, and compare clinically meaningful static duration rules.

This work addresses three main challenges. First, clinically realistic duration strategies often rely on relaxed intervention rules, such as grace periods, which require assumptions linking admissible observed trajectories to the target treatment strategy. In Section~\ref{sec-notations-chp4}, we formalize the corresponding causal estimands, clarify the assumptions required for cloning and admissibility, and illustrate how immortal time bias arises under naive comparisons based on observed treatment duration. Second, CCW induces artificial censoring by design, which may coexist with natural censoring. In Sections~\ref{sec-baseline} and~\ref{sec-tdconf}, we distinguish these two mechanisms, clarify when separate censoring weights can be validly combined, and study the impact of ignoring natural censoring or misspecifying censoring models. Third, IPCW may suffer from unstable weights and finite-sample variability, while alternative approaches remain little explored in this setting. We therefore compare, after cloning and artificial censoring, IPCW with outcome regression and augmented estimators under both baseline and time-varying confounding, and provide practical guidance regarding robustness, computational burden, and sensitivity to model misspecification. Finally, in Section~\ref{sec-application}, we illustrate the proposed methodology by emulating alternative tamoxifen duration strategies using data from the Breast Cancer cohort.

\section{Causal framework for treatment duration: estimands, notation, and identification challenges}\label{sec-notations-chp4}

\subsection{Notation and causal estimand: static duration strategies}

We consider a longitudinal observational setting with scheduled follow-up times indexed by $k = 0,1,\dots,K$, where treatment decisions and covariates are recorded at each time point. We assume that time $k$ corresponds to visit $k$, and at each time point we observe a binary treatment indicator
\[
A_k \in \{0,1\},
\]
as well as baseline and possibly time-dependent covariates, denoted
collectively by $X_k$. We write
\[
\bar A_k = (A_0,\dots,A_k), \qquad \bar X_k = (X_0,\dots,X_k),
\]
for the observed treatment and covariate histories up to visit $k$.
Treatment is initiated at baseline, so that $A_0 = 1$ for all individuals.

Individuals are followed from baseline until the occurrence of the event
of interest or censoring, whichever occurs first. Let $T$ denote the event time, $C$ the censoring time, $\Tilde{T}=\min(T,C)$ the observed time and $\Delta=\mathbbm{1}\{T\leq C\}$ the event status. 

The assumed data structure is a discrete-time formulation, where the event indicator corresponds to whether the event occurs between two consecutive visits. One can view this representation as a discretization of an underlying continuous-time process: between each pair of visits, the follow-up could be partitioned into a sequence of finer sub-intervals in which events (and censoring) may occur, while treatment and covariates are assumed constant between visits (as in \cite{Keogh2023SequentialTrials}). As the sub-intervals become arbitrarily small, this discrete-time description approaches the corresponding continuous-time setting.
Treatment and covariates, which are only observed at visit times, are assumed constant between visits and are therefore carried forward across the intervals
contained between two consecutive visits. 

\paragraph{Static duration strategies.} 
Treatment duration is often summarized retrospectively from the observed
treatment history $\bar A_K$. However, such summaries generally do not
correspond to well-defined interventions and therefore do not define
causal estimands. Instead, causal effects of treatment duration must be
defined in terms of sustained strategies specified at baseline. 

For a given duration $d \in \{1,\dots,K+1\}$, we define the associated
\emph{static duration strategy} $g_d$ as the baseline-assigned
hypothetical intervention 
\[
g_d(k) =
\begin{cases}
1, & k < d,\\
0, & k \ge d,
\end{cases}
\qquad k = 0,\dots,K.
\]

\paragraph{Causal estimand.}
Let $T^{(d)}$ denote the counterfactual event time that would occur under the baseline-defined strategy $g_d$, and define the corresponding survival function
\[
S^{(d)}(t) = \mathbb P(T^{(d)} > t), \qquad t \ge 0.
\]
For two distinct static duration strategies $d_0,d_1 \in \{1,\dots,K+1\}$, the causal restricted mean survival time (RMST) contrast is defined as
\[
\theta_{\mathrm{RMST}}(d_1,d_0)
=
\mathbb E\!\left[
\min(T^{(d_1)},\tau)
-
\min(T^{(d_0)},\tau)
\right]
=
\int_0^\tau
\left[
S^{(d_1)}(t)
-
S^{(d_0)}(t)
\right] \, \mathrm{d}t,
\]
where $\tau>0$ denotes the time horizon, and the duration strategies $d_0$ and $d_1$ satisfy $d_0, d_1 \le \tau$.

In the next section, we show that conditioning the outcome on observed treatment durations does not identify this causal RMST ($\mathbb{E}[T^{(d)}]$), even in the absence of time-varying confounding or informative censoring.

\subsection{Immortal time bias in observational settings}

A naive approach to evaluating treatment duration effects is to compare survival outcomes according to realized treatment durations. Let $D$ denote the observed treatment duration until discontinuation,
\[
D \equiv D(\bar A_K)
=
\max\{d\leq T \wedge C, A_d=1\}.
\]
Stratifying on $D=d$ amounts to comparing
$\mathbb E[\min(T,\tau)\mid D=d]$. However, in the absence of censoring, the event $\{D=d\}$ can occur only among individuals who remain event-free until time $d$, i.e., $\{T>d\}$. Moreover, among individuals whose observed treatment history matches $g_d$ up to $d$, consistency implies $\{T>d\}\equiv\{T^{(d)}>d\}$. Consequently,
\[
\mathbb E[\min(T,\tau)\mid D=d]
=
\mathbb E\!\left[\min(T^{(d)},\tau)\mid T^{(d)}>d\right],
\]
which differs from the marginal estimand
$\mathbb E[\min(T^{(d)},\tau)]$. Therefore, in general, 
\[
\mathbb E[\min(T,\tau)\mid D=d]
\neq
\mathbb E[\min(T^{(d)},\tau)]
\qquad
(\text{and likewise } \mathbb E[T\mid D=d]\neq \mathbb E[T^{(d)}]).
\]

This discrepancy follows from the law of total expectation:
\[
\mathbb E[\min(T^{(d)},\tau)]
=
\mathbb E\!\left[\min(T^{(d)},\tau)\mid T^{(d)}>d\right]\mathbb P(T^{(d)}>d)
+
\mathbb E\!\left[\min(T^{(d)},\tau)\mid T^{(d)}\le d\right]\mathbb P(T^{(d)}\le d).
\]
As soon as $\mathbb P(T^{(d)}\le d)>0$, the marginal mean incorporates early events before $d$, whereas the conditional quantity excludes them. This discrepancy is precisely the \textit{immortal time bias} induced by conditioning on realized duration.

To illustrate this, consider two studies differing only in the treatment assignment mechanism (Figure~\ref{fig:immortal_time}). In a randomized trial, individuals are assigned at baseline to receive either 2 or 5 years of treatment and compared according to the intention-to-treat principle\footnote{Outcomes are compared between groups defined by baseline randomization, irrespective of treatment discontinuation or post-baseline changes}. By construction, the causal contrast is zero in this illustrative example.

\begin{figure}[H]
    \centering
    \includegraphics[width=0.8\linewidth]{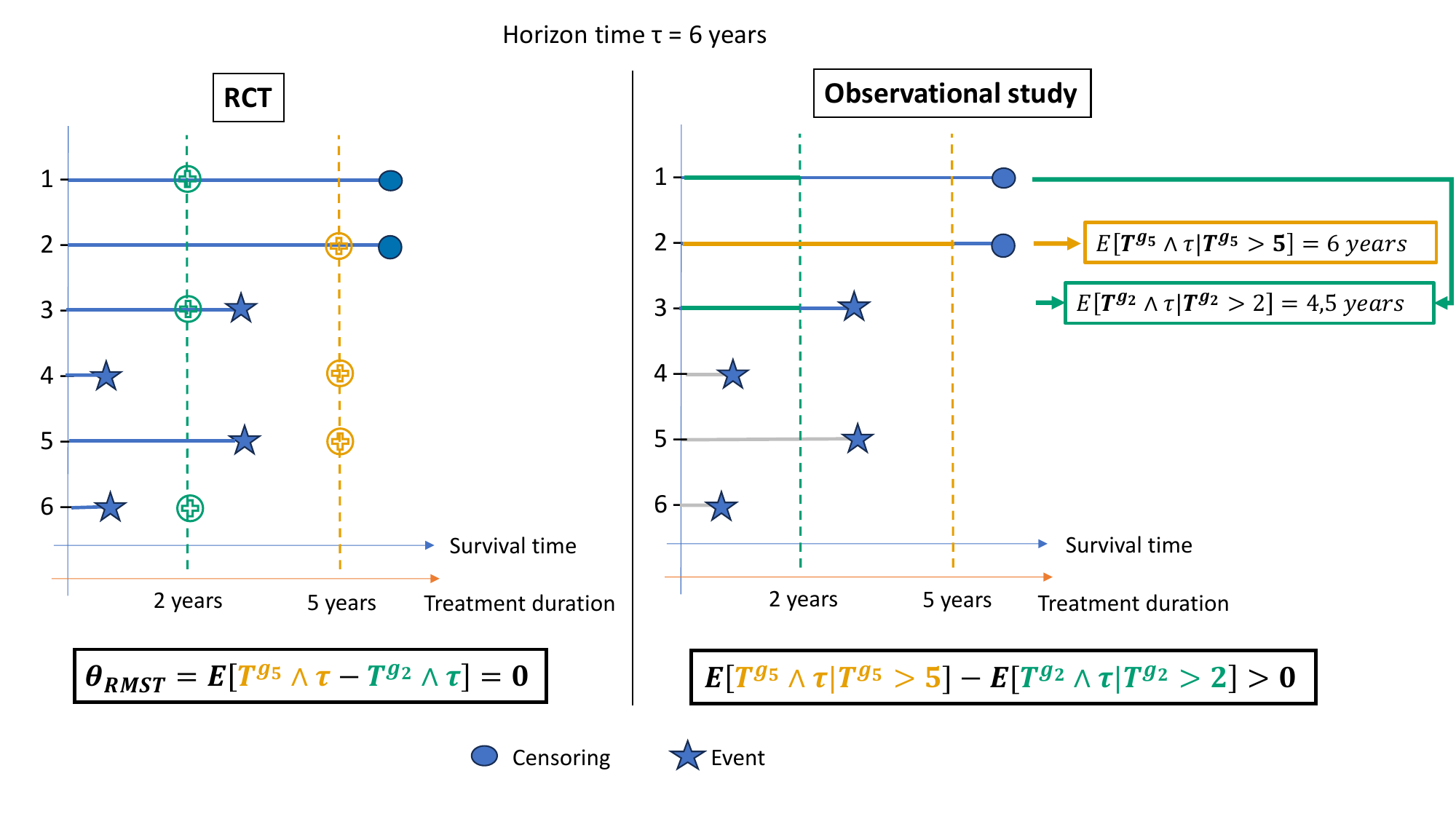}
    \caption{Difference in RMST between 2 and 5 years of treatment: illustration of immortal time bias in an observational study. In the randomized controlled trial (left panel), individuals are assigned at baseline to a treatment strategy of either 5 years (\textcolor{orange_cbf1}{orange circle}) or 2 years (\textcolor{green_cbf1}{green circle}). In the observational study (right panel), treatment duration is defined retrospectively: a continuous \textcolor{orange_cbf1}{orange line} denotes an observed treatment duration of 5 years, a continuous \textcolor{green_cbf1}{green line} denotes an observed duration of 2 years, and a \textcolor{gray}{gray line} denotes individuals whose observed treatment duration does not correspond to either strategy.}
    \label{fig:immortal_time}
\end{figure}

In contrast, in an observational study, comparisons are based on realized treatment duration in a per-protocol-type\footnote{Individuals are compared according to their observed adherence to the treatment strategy, rather than their initial assignment} analysis. Individuals~1 and~3 complete 2 years of treatment, whereas individual~2 completes 5 years. Conditioning on realized duration excludes individuals who fail or discontinue treatment before reaching the target duration, which induces a spurious association between longer treatment duration and improved survival. As a result, the empirical RMST difference favors longer treatment despite a true causal contrast of zero.

This bias does not arise from the estimand itself, but from conditioning on a post-baseline quantity defined only among survivors. A valid comparison therefore requires defining treatment duration as a baseline intervention specifying how long treatment would be maintained under a hypothetical strategy. In observational settings, such strategies are not directly observed and must be emulated from the available data.

In practice, treatment discontinuation is rarely observed exactly at prespecified times in observational studies. Observed treatment durations are therefore heterogeneous and often do not align with the exact decision times defining a baseline strategy $g_d$. Consequently, assessing adherence to $g_d$ can be ambiguous when treatment changes are observed only at irregular or patient-specific visit times. This motivates a slight relaxation of the target strategy when defining adherence in the emulation procedure, which we formalize next.

\subsection{Causal identification under post-baseline treatment decisions}

We consider two settings to study the identification of static duration strategies.

In Scenario~(1), treatment continuation at each decision time depends only on baseline confounders. In Scenario~(2), it may additionally depend on time-varying confounders measured during follow-up. In both settings, we distinguish two forms of \emph{natural censoring}: (a) independent censoring and (b) informative censoring handled under a conditionally independent censoring assumption. Our goal is to identify the causal RMST contrast $\theta_{\mathrm{RMST}}$ and clarify how the required assumptions change across these settings.

Let $T$ and $C$ denote the event and censoring times, and let $R_k=\mathbbm{1}\{\min(T,C)>k\}$ denote the at-risk indicator at time $k$. In the time-varying setting, we additionally introduce $K_k$, the indicator of natural censoring at time $k$, and $Y_k$, the indicator of the event at time $k$. We write $(X, X_k)$ for baseline and time-varying confounders, and $(X_c, X_{c,k})$ for censoring-related covariates. For simplicity, confounding and censoring-related variables are assumed disjoint.

Table~\ref{tab:assumptions_baseline} and Table~\ref{tab:assumptions_timedep} summarize the identification assumptions. In both scenarios, admissible consistency links the observed data to the relevant potential quantities under treatment histories compatible with the target strategy. In particular, any observed treatment trajectory in $\mathcal A_d$ is assumed to yield the same potential outcome as $g_d$. This strong assumption implies that deviations allowed in the adherence definition do not affect the outcome. In Scenario~(1), consistency concerns the outcome and censoring processes, whereas in Scenario~(2) it also extends to the time-varying covariate process. When the admissible set reduces to the single target strategy, 
this corresponds to the usual consistency assumption of the potential outcomes framework \citep{rubin_1974}. 

Treatment exchangeability requires treatment assignment at each decision time $k$ to be independent of $T^{(d)}$ conditional on the observed treatment and covariate history. Positivity requires both treatment options to remain possible within each such history with positive probability.

For censoring, case (a) assumes independence between $C$ and $T^{(d)}$, whereas case (b) allows censoring to depend on observed covariates and instead relies on conditional independent censoring given the observed history, together with the corresponding positivity assumptions.

\begin{table}[H]
\centering
\renewcommand{\arraystretch}{1.3}

\begin{tabular}{p{6cm} p{8.2cm}}
\hline
\textbf{Assump.} 
& \textbf{Scenario (1): Baseline covariates} \\
\hline

\scriptsize{\textbf{Admissible consistency (treatment)}} &
$\assumplabelstep{ass:adm-treat-consistency}{Admissible consistency for treatment}\bar A_k \in \mathcal A_{d,k} \;\Rightarrow\; \mathbb{I}\{T \leq k\} = \mathbb{I}\{T^{(d)} \leq k\}$ \\
\hline

\scriptsize{\textbf{Admissible consistency (censoring)}} &
$\assumplabelstep{ass:adm-cens-consistency}{Admissible consistency for censoring}\bar A_k \in \mathcal A_{d,k} \;\Rightarrow\; \mathbb{I}\{C \leq k\} = \mathbb{I}\{C^{(d)} \leq k\}$ \\
\hline

\scriptsize{\textbf{Exchangeability (treatment)}} &
$\assumplabelstep{ass:texch-baseline}{Treatment exchangeability (baseline confounders)}T^{(d)}\;\perp\!\!\!\perp\; \bar A \mid X$ \\
\hline

\scriptsize{\textbf{Exchangeability (censoring)}} &
$\assumplabelstep{ass:texch-baseline-cens}{Treatment exchangeability (baseline confounders)}C^{(d)}\;\perp\!\!\!\perp\; \bar A \mid X$ \\
\hline

\scriptsize{\textbf{Positivity (treatment)}} &
$\assumplabelstep{ass:tpos-treat-baseline}{Treatment positivity (baseline confounders)}\mathbb P(\bar A \in \mathcal A_d \mid X) > 0$ \\
\hline

\scriptsize{\textbf{Positivity (censoring) (a)}} &
$\assumplabelstep{ass:tpos-cens-baseline}{Treatment positivity (baseline confounders)}\mathbb P(C^{(d)} \geq k) > 0$ \\
\hline

\scriptsize{\textbf{Positivity (censoring) (b)}} &
$\assumplabelstep{ass:tpos-cond-baseline}{Treatment positivity (baseline confounders)}\mathbb P(C^{(d)} \geq k \mid X_c) > 0$ \\
\hline

\scriptsize{\textbf{Independent censoring (a)}} &
$\assumplabelstep{ass:cens-indep}{Independent natural censoring}C^{(d)} \;\perp\!\!\!\perp\; T^{(d)}$ \\
\hline

\scriptsize{\textbf{Conditionally independent censoring (b)}} &
$\assumplabelstep{ass:cens-cond-baseline}{Conditionally independent natural censoring (baseline confounders)}C^{(d)}\;\perp\!\!\!\perp\; T^{(d)} \mid X_c$ \\
\hline

\end{tabular}
\caption{Causal assumptions for the scenario with baseline covariates (1) under independent (a) or conditionally independent censoring (b). Conditions involving $k$ hold for all value of $k \in \{0,\dots,K-1\}$.}
\label{tab:assumptions_baseline}
\end{table}

\begin{table}[H]
\centering
\renewcommand{\arraystretch}{1.3}

\begin{tabular}{p{6cm} p{8.5cm}}
\hline
\textbf{Assump.} 
& \textbf{Scenario (2): Time-dependent covariates} \\
\hline

\scriptsize{\textbf{Admissible consistency (treatment)}} &
$\assumplabelstep{ass:sutva-treat-timedep}{Admissible consistency for treatment (time-dependent confounders)}
\bar A_k \in \mathcal A_{d,k} \;\Rightarrow\;  Y_{k+1}=Y_{k+1}^{(d)}$
\\

\hline

\scriptsize{\textbf{Admissible consistency (censoring)}} &
$\assumplabelstep{ass:sutva-cens-timedep}{Admissibile consistency for censoring (time-dependent confounders)}
\bar A_k \in \mathcal A_{d,k} \;\Rightarrow\; K_{k+1} = K_{k+1}^{(d)}$
\\

\hline

\scriptsize{\textbf{Admissible consistency (covariates)}} &
$\assumplabelstep{ass:sutva-cov-timedep}{Admissible consistency for covariates (time-dependent confounders)}
\bar A_k \in \mathcal A_{d,k} \;\Rightarrow\; X_{k+1}=X_{k+1}^{(d)}$
\\

\hline

\scriptsize{\textbf{Exchangeability (treatment)}} &
$\assumplabelstep{ass:texch-timedep}{Treatment exchangeability (time-dependent confounders)}
\forall \ell \geq k+1, Y^{(d)}_\ell, X_{\ell}^{(d)} \;\perp\!\!\!\perp\; A_k \mid \bar A_{k-1}, \bar X_{k}, C_{k}=Y_{k}=0$
\\

\hline



\scriptsize{\textbf{Positivity (treatment)}} &
$\assumplabelstep{ass:tpos-timedep}{Treatment positivity (time-dependent confounders)}
\mathbb P(\bar A_k \in \mathcal A_{d,k} \mid \bar X_{k}, K_{k}=Y_{k}=0) > 0$
\\

\hline

\scriptsize{\textbf{Positivity (censoring)}} &
$\assumplabelstep{ass:cens-cond-pos-timedep}{Conditional censoring positivity (time-dependent confounders)}
\exists \,\epsilon > 0, \forall\, k\in[K],\, \mathbb P(K_k^{(d)}=0\mid \bar X_{c,k-1}^{(d)}, Y^{(d)}_{k-1}= K_{k-1}^{(d)}=0) > \epsilon$
\\

\hline

\scriptsize{\textbf{Conditionally independent censoring}} &
$\assumplabelstep{ass:cens-cond-timedep}{Conditionally independent natural censoring (time-dependent confounders)}
\forall \ell \geq k+1, Y_\ell^{(d)}, X_\ell^{(d)} \;\perp\!\!\!\perp\; K_{k+1}^{(d)} \mid \bar A_k, \bar X_{c,k}, \bar X_{k}, K_{k}=Y_{k}=0$
\\

\hline

\end{tabular}

\vspace{0.2cm}

\caption{Causal assumptions for the scenario with time-dependent covariates (2) under conditionally independent censoring (b). All assumptions are understood to hold for all decision times $k=1,\dots,K$ and all strategies $g_d$ of interest.}
\label{tab:assumptions_timedep}

\end{table}

We now turn to the estimation of the causal effect of treatment duration. We first consider a setting with baseline confounding only, which provides a simplified framework to introduce the proposed methods before addressing time-varying confounding.

\section{Causal effect of treatment duration in the presence of baseline confounders}\label{sec-baseline}

This setting is used as a pedagogical starting point to introduce the methodology in a simplified framework. In practice, however, the main sources of bias typically arise from time-varying confounding, which we address in the next section. Still, although our motivation here is primarily pedagogical, some applied studies have used baseline covariates alone to evaluate post-baseline treatment strategies \citep{Maringe2020}.
The scenario (1) is summarized by the DAG Figure \ref{fig:dag_discrete}.

\begin{figure}[H]
\centering
\begin{tikzpicture}[->, >=stealth, node distance=1.8cm and 1.8cm, thick]
    \def\r{0.9cm}
    \node (X)  [draw, circle, minimum size=\r] at (3,1) {$X$};
    \node (A)  [draw, circle, minimum size=\r] at (0,0) {$\bar A$};
    \node (T)  [draw, circle, minimum size=\r] at (6,0) {$T$};
    \node (C)  [draw, circle, minimum size=\r] at (6,2) {$C$};
    \node (Tobs) [draw, circle, minimum size=\r] at (8,0) {$\tilde T$};
    \node (Xc) [draw, circle, minimum size=\r] at (5,1) {$X_c$};
    \node (D) [draw, circle, minimum size=\r] at (6,-1) {$D$};

    \draw[red_cbf1] (X) -- (A);
    \draw[red_cbf1] (X) -- (T);

    \draw (A) -- (T);
    \draw[navy_cbf1] [bend left=20]  (A) to (C);
    \draw[purple_cbf1] (A) -- (D);
    \draw[purple_cbf1] (Tobs) -- (D);
    \draw[navy_cbf1] (Xc) -- (C);
    \draw[navy_cbf1] (Xc) -- (T);

    \draw (T) -- (Tobs);
    \draw (C) -- (Tobs);

\end{tikzpicture}
\caption{Simplified causal diagram for the baseline-confounding setting. Here, $X$ denotes baseline confounders of treatment and the event time, $\bar A$ the treatment history, $D$ the observed duration, $X_c$ baseline covariates related to natural censoring, $T$ the event time, $C$ the natural censoring time, and $\tilde T=\min(T,C)$ the observed follow-up time. \textcolor{red_cbf1}{Red edges} represent pathways linking baseline confounders to both treatment and outcome. \textcolor{navy_cbf1}{Blue edges} represent pathways involved in the censoring mechanism. \textcolor{purple_cbf1}{Purple edges} highlight the treatment-duration structure that may induce selection mechanisms when restricting to admissible trajectories.}
\label{fig:dag_discrete}
\end{figure}

Figure~\ref{fig:dag_discrete} represents a discrete-time causal structure in which treatment is initiated at baseline and subsequently continued or discontinued over time. Baseline covariates $X$ influence both treatment decisions and the event process, while no time-varying confounders of the treatment--outcome relationship are present. Despite this simple structure, treatment duration remains a post-baseline quantity realized only among individuals who survive and remain uncensored long enough.

\subsection{Cloning-Censoring}\label{sec-cloningcensoring}

\paragraph{Cloning.}


We consider the comparison of two static treatment duration strategies $d_0$ and $d_1$. For each individual $i=1,\dots,n$, we observe
\[
(\widetilde T_i,\Delta_i, X_i, X_{c,i}, \{A_{i,k}\}_{k=0}^{\widetilde T_i}),
\]
where $X_i$ denotes baseline confounders, $X_{c,i}$ baseline covariates associated with natural censoring, and $A_{i,k}$ the treatment indicator at decision time $k$. The treatment history is observed up to the follow-up time $\widetilde T_i=\min(T_i,C_i)$, with event indicator $\Delta_i= \mathbb{I}\{T_i < C_i\}$.


To emulate the corresponding two-arm trial, each individual is cloned into two copies assigned to $g_{d_0}$ or $g_{d_1}$. Both clones share the same observed processes
$(T_i, C_i, X_i, X_{c,i}, \{A_{i,k}\}_{k=0}^K)$
and differ only by their assigned strategy. 

Each clone is artificially censored at the first deviation from its assigned strategy, yielding clone-specific observed outcomes.

\paragraph{Artificial censoring.}

For a clone $(i,d)$ assigned to strategy $g_{d}$, artificial censoring is
introduced when deviating to the assigned strategy.
We define the discrete-time indicator of deviation at decision time $k$ as 

\[
\Gamma_{i,k}^{(d)}
=
\mathbbm{1}\{\bar A_{i,k}\notin \mathcal A_{d,k}\},
\qquad k=0,\dots,K,
\]
By convention, $\Gamma_{i,0}^{(d)}=0$ for all $i$ and $d$, so that all
clones are compliant with their assigned strategy at baseline.
The artificial censoring time is
\[
G_i^{(d)}=\inf\{\, k : \Gamma_{i,k}^{(d)}=1 \,\},
\]
with the convention $G_i^{(d)}=\tau$ if $\Gamma_{i,k}^{(d)}=0$ for all
$k=0,\dots,K$. The observed and the corresponding truncated follow-up time and the event indicator for clone $(i,d)$ is
\[
\widetilde T_i^{(d)}=\min\bigl(T_i,\, C_i,\, G_i^{(d)}\bigr), \qquad \Delta_i^{(d)}
=
\mathbbm{1}\{T_i \le C_i,\ T_i \le G_i^{(d)}\}.
\]

To clarify the cloning-censoring procedure, Figure~\ref{fig:cloning} considers two treatment strategies (\textcolor{green_cbf1}{2 years} versus \textcolor{purple_cbf1}{5 years} of treatment) and illustrates how individuals are cloned and then artificially censored under each strategy.

\begin{figure}[H]
    \centering
    \includegraphics[width=0.9\linewidth]{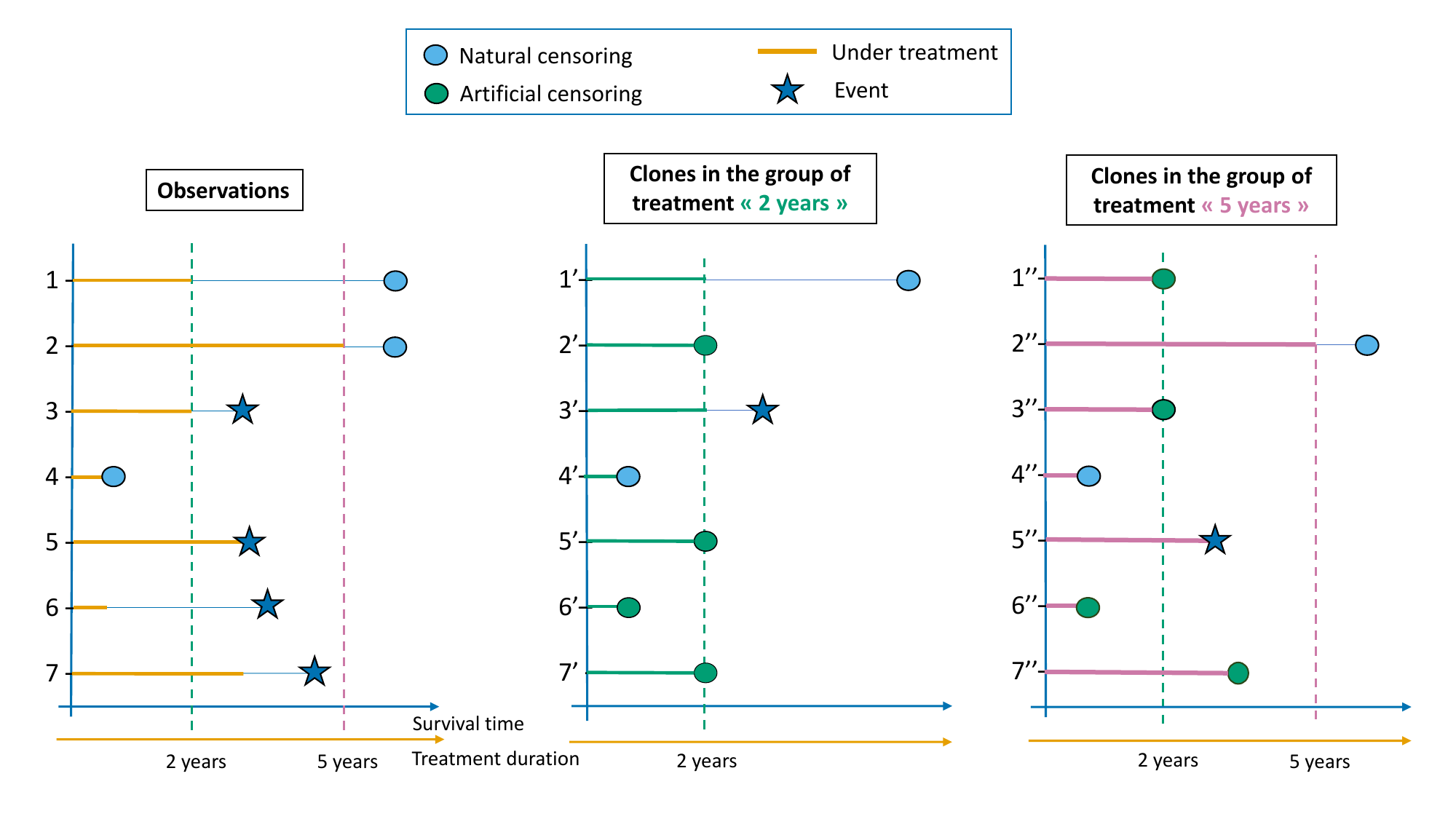}
    \caption{Cloning–-censoring procedure for comparing \textcolor{green_cbf1}{2-year} versus \textcolor{purple_cbf1}{5-year} treatment strategies}
    \label{fig:cloning}
\end{figure}
Because cloning is deterministic and performed independently of the observed data, the assigned strategy $g_d$ is independent of baseline covariates and potential outcomes. Figure~\ref{fig:dag_cloning} represents the causal structure induced by cloning and artificial censoring.


\begin{figure}[H]
    \centering
\begin{tikzpicture}[->, >=stealth, node distance=1.6cm and 1.5cm, thick]
    \def\r{1.1cm}
    \node (X) [draw, circle,minimum size=\r] at (1,2) {$X$};
    
    \node (G) [draw, circle,minimum size=\r] at (0,0) {$G^{(d)}$};


    \node (Tobs) [draw, circle, minimum size=\r] at (3,0) {$\tilde T^{(d)}$};
    \node (T) [draw, circle, minimum size=\r] at (3,3) {$T^{(d)}$};

    \node (C) [draw, circle,minimum size=\r] at (6,0) {$C^{(d)}$};

    \node (Xc) [draw, circle,minimum size=\r] at (5,2) {$X_c$};
    
    \draw[green_cbf1] (X) -- (G);
    \draw[green_cbf1] (X) -- (T);

    \draw[blue_cbf1] (Xc) -- (C);
    \draw[blue_cbf1] (Xc) -- (T);
    
    \draw (T) -- (Tobs);
    \draw (C) -- (Tobs);
    \draw (G) -- (Tobs);









\end{tikzpicture}
\caption{
Discrete-time representation of treatment assignment, event, and censoring processes after cloning--censoring. $T^{(d)}$ denotes the potential event time, $\tilde T^{(d)}=\min(T^{(d)},C^{(d)},G^{(d)})$ the observed follow-up time under cloning--censoring, $G^{(d)}$ the artificial censoring time, and $C^{(d)}$ the natural censoring time under strategy $g_d$. Here, $X$ denotes baseline confounders and $X_c$ censoring-related covariates. \textcolor{blue_cbf1}{Blue edges} represent informative natural censoring, and \textcolor{green_cbf1}{green edges} informative artificial censoring.}


\label{fig:dag_cloning}
\end{figure}

After cloning and artificial censoring, the resulting data structure can be represented by a DAG mimicking a randomized trial, in the sense that baseline covariates $X$ no longer confound strategy assignment. However, $X$ may still predict deviations from the assigned strategy and therefore drive artificial censoring. Because this censoring is informative, identifying the causal quantity requires additional assumptions on the censoring processes in the cloned dataset.

Importantly, artificial censoring is strategy-specific, since deviations are defined relative to the assigned strategy $g_d$. In contrast, natural censoring originates from the same underlying data-generating process, although its role may depend on the comparison considered.


\begin{proposition}[Sequential censoring exchangeability and positivity on the cloned risk set]
\label{prp:seq-cens-riskset}
Consider the cloned and artificially censored dataset. For each individual $(i,d)$, assigned to strategy $g_d$, we assume that the following
conditions hold among individuals.

\begin{enumerate}
\item[(i)] \emph{Conditionally independent artificial censoring.}
Under Assumptions~\ref{ass:texch-baseline} (\nameref{ass:texch-baseline}) and ~\ref{ass:adm-treat-consistency} (\nameref{ass:adm-treat-consistency}), we assume that
\[
T^{(d)} \;\perp\!\!\!\perp\; G^{(d)} \ \Big|\ X.
\]

\item[(ii)] \emph{Natural censoring exchangeability (independent or conditional).}
Either, under Assumptions~\ref{ass:cens-indep} (\nameref{ass:cens-indep}) and ~\ref{ass:adm-cens-consistency} (\nameref{ass:adm-cens-consistency}), we assume that 
\[
T^{(d)} \;\perp\!\!\!\perp\; C,
\]
or, under Assumptions~\ref{ass:cens-cond-baseline} (\nameref{ass:cens-cond-baseline}) and ~\ref{ass:adm-cens-consistency} (\nameref{ass:adm-cens-consistency}), we assume that
\[
T^{(d)} \;\perp\!\!\!\perp\; C^{(d)}
\ \Big|\ X_{c}.
\]

\item[(iii)] \emph{Positivity of the censoring process.}
There exists $\epsilon>0$ such that, for each $d$,
\[
\mathbb P\!\left(G^{(d)} \ge t\mid X\right)\ge \epsilon,
\]
and, for natural censoring, either under Assumptions~\ref{ass:cens-indep} (\nameref{ass:cens-indep}) and \ref{ass:tpos-cens-baseline} (\nameref{ass:tpos-cens-baseline}),
\[
\mathbb P(C \ge t)\ge \epsilon,
\qquad t\in[0,\tau),
\]
or under Assumptions~\ref{ass:cens-cond-baseline} (\nameref{ass:cens-cond-baseline}) and
\ref{ass:tpos-cond-baseline} (\nameref{ass:tpos-cond-baseline}),
\[
\mathbb P\!\left(C^{(d)} \ge t \mid X_{c}\right)\ge \epsilon, \qquad t\in[0,\tau).
\]
\end{enumerate}
\end{proposition}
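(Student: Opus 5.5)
The plan is to derive each of the three conditions (i)--(iii) from the listed assumptions. Each one is really a statement about how the cloned variables $G^{(d)}$, $C^{(d)}$, $\widetilde T^{(d)}$ are built from the observed data. The first step records the structural fact behind everything else. For a fixed strategy $g_d$, the artificial censoring time $G^{(d)}$ is a deterministic functional of the observed treatment history: $G^{(d)}=\inf\{k:\bar A_k\notin\mathcal A_{d,k}\}$. Hence $\sigma(G^{(d)})\subset\sigma(\bar A)$. Cloning itself is a deterministic duplication, so the assigned label $d$ carries no information about $(X,X_c,T^{(d)},C^{(d)})$.

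For (i), I start from Assumption~\ref{ass:texch-baseline}, $T^{(d)}\perp\!\!\!\perp \bar A\mid X$. Since $G^{(d)}=\phi_d(\bar A)$ for a measurable $\phi_d$, the weak union / decomposition property of conditional independence gives $T^{(d)}\perp\!\!\!\perp G^{(d)}\mid X$ at once.

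Assumption~\ref{ass:adm-treat-consistency} is needed to make this statement meaningful on the observed clone data. On $\{G^{(d)}>k\}$ we have $\bar A_k\in\mathcal A_{d,k}$, so $\mathbb I\{T\le k\}=\mathbb I\{T^{(d)}\le k\}$. Consequently the observed pair $(\widetilde T^{(d)},\Delta^{(d)})$ coincides with $(\min(T^{(d)},C,G^{(d)}),\mathbbm 1\{T^{(d)}\le C\wedge G^{(d)}\})$ up to artificial censoring. This is what lets the independence transfer from the potential event time to the observed risk set. I would verify the consistency replacement by induction over $k$ on the event $\{\widetilde T^{(d)}>k\}$.

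For (ii), case (a): Assumption~\ref{ass:adm-cens-consistency} gives $\mathbb I\{C\le k\}=\mathbb I\{C^{(d)}\le k\}$ on admissible histories. So on the cloned risk set $\{G^{(d)}>k\}$ the observed natural censoring equals $C^{(d)}$. Assumption~\ref{ass:cens-indep} then yields the independence of $T^{(d)}$ and the censoring actually acting on the clone. Case (b) is identical, with Assumption~\ref{ass:cens-cond-baseline} conditioning on $X_c$.

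The delicate point is that $C$ and $C^{(d)}$ agree only while the clone is uncompliant-free. The conclusion therefore holds for the censoring process restricted to the risk set. That is exactly the quantity entering the hazard-based IPCW identification, and I would state it in that form.

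For (iii), natural censoring is handled by translating Assumptions~\ref{ass:tpos-cens-baseline} and \ref{ass:tpos-cond-baseline} via admissible consistency, then taking a uniform $\epsilon$ over the finitely many $k\le K$; between visits, values are carried forward.

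For artificial censoring, I write
\[
\mathbb P(G^{(d)}\ge t\mid X)=\mathbb P(\bar A_{\lfloor t\rfloor}\in\mathcal A_{d,\lfloor t\rfloor}\mid X)\ \ge\ \mathbb P(\bar A\in\mathcal A_d\mid X).
\]
Assumption~\ref{ass:tpos-treat-baseline} makes the right-hand side positive. This is the main obstacle: that assumption gives only pointwise positivity, not a uniform lower bound $\epsilon$. I would first try to use finiteness of the time grid together with a bounded-away-from-zero version of positivity (strict positivity). If that is not available, I would state the uniform bound as the additional requirement that the proposition imposes, which the "we assume" phrasing of the statement appears to permit.
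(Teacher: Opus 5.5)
\textbf{The paper gives no proof of this proposition.} It is a list of conditions the authors \emph{impose}: hence ``we assume'', and the remark right after it that the conditions are untestable. Your proposal takes a different route, reducing (i)--(iii) to the primitives of Table~\ref{tab:assumptions_baseline}.

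Part (i) goes through as you say. Since $G^{(d)}=\phi_d(\bar A)$, Assumption~\ref{ass:texch-baseline} gives $T^{(d)}\perp\!\!\!\perp G^{(d)}\mid X$ directly, and admissible consistency is not even needed for the formal statement. Part (ii)(b) is Assumption~\ref{ass:cens-cond-baseline} verbatim, so there is nothing to prove. The other parts cannot be derived this way, and two of your steps fail.

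\textbf{First gap: (ii)(a).} This condition concerns the observed $C$. $T^{(d)}\perp\!\!\!\perp C$ does not follow from Assumptions~\ref{ass:cens-indep} and \ref{ass:adm-cens-consistency}, because $C$ agrees with $C^{(d)}$ only on admissible histories. Off them, $C$ can depend on $\bar A$, hence on $X$, hence on $T^{(d)}$. You notice this, but your substitute, independence ``for the censoring process restricted to the risk set'', is not implied either. Restricting to $\{G^{(d)}>k\}$ selects on a function of $\bar A$ that depends on $X$. Here is a counterexample:
\begin{itemize}
\item $X=(X_1,X_2)$ with independent binary components;
\item $T^{(d)}$ depends only on $X_1$ and $C^{(d)}$ only on $X_2$, each plus independent noise;
\item $\mathbb P(G^{(d)}>k\mid X)=0.9$ if $X_1=X_2$ and $0.1$ otherwise.
\end{itemize}
Then Assumptions~\ref{ass:texch-baseline}, \ref{ass:texch-baseline-cens}, \ref{ass:tpos-treat-baseline} and \ref{ass:cens-indep} all hold, and even $(T^{(d)},C^{(d)})\perp\!\!\!\perp\bar A\mid X$ holds. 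Yet on $\{G^{(d)}>k\}$ the variables $X_1$ and $X_2$ are dependent, so $T^{(d)}$ is dependent on the natural censoring observed there.

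Your risk-set version is also not what IPCW uses. In this example the artificially weighted Kaplan--Meier is still consistent, because the expected weighted risk set per clone is $\mathbb P(T^{(d)}\ge t,\,C^{(d)}\ge t)=\mathbb P(T^{(d)}\ge t)\,\mathbb P(C^{(d)}\ge t)$. That step rests on the \emph{joint} exchangeability $(T^{(d)},C^{(d)})\perp\!\!\!\perp\bar A\mid X$ together with Assumption~\ref{ass:cens-indep}. Neither that joint condition nor $C=C^{(d)}$ (under which (ii)(a) is Assumption~\ref{ass:cens-indep} itself) appears in the table, so it must be assumed.

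\textbf{Second gap: (iii).} You rightly observe that Assumption~\ref{ass:tpos-treat-baseline} gives only pointwise positivity of $\mathbb P(G^{(d)}\ge t\mid X)$, so the uniform $\epsilon$ is an extra hypothesis. The same is true for natural censoring in case (b), where you instead claim uniformity from finiteness of the grid.
\begin{itemize}
\item Assumption~\ref{ass:tpos-cond-baseline} gives $\mathbb P(C^{(d)}\ge k\mid X_c)>0$ for each value of $X_c$. Finitely many $k$ give uniformity in time, not in $X_c$.
\item With Gaussian $X_c$ and a censoring hazard proportional to $\exp(b^\top X_c)$, as in the paper's simulations, the infimum over $X_c$ is $0$.
\item The table only covers $k\le K-1$, while $t$ ranges up to $\tau$.
\end{itemize}
In case (a), admissible consistency alone does not turn $\mathbb P(C^{(d)}\ge t)>0$ into $\mathbb P(C\ge t)>0$. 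Bound $\mathbb P(C\ge t)\ge\mathbb P(C^{(d)}\ge t,\ \bar A\in\mathcal A_d)$, then use Assumptions~\ref{ass:texch-baseline-cens} and \ref{ass:tpos-treat-baseline} to write the right side as $\mathbb E[\mathbb P(C^{(d)}\ge t\mid X)\,\mathbb P(\bar A\in\mathcal A_d\mid X)]>0$.

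In short, (ii)(a) and all the uniform positivity bounds should be stated as additional assumptions, as the paper does, not as consequences of the table.
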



The conditions in Proposition~\ref{prp:seq-cens-riskset} are not directly testable, as they involve the counterfactual event time $T^{(d)}$. However, they can be partially assessed by verifying that, within strata of baseline covariates, individuals have non-negligible probabilities of (i) remaining compatible with each assigned strategy and (ii) remaining under follow-up over time. Importantly, the positivity requirement in (iii) is a support condition: within each relevant covariate stratum and follow-up period, individuals must be able to both remain compatible with the assigned strategy and remain under follow-up. Otherwise, deterministic deviations or losses to follow-up may lead cloned risk sets to become empty or nearly empty for one strategy at specific decision times, indicating limited support for emulating the target strategy.
One of the most widely used approaches to address informative censoring is inverse probability of censoring weighting.


In the context of treatment duration
strategies, this approach is commonly implemented within the
cloning--censoring--weighting framework.



\subsection{Inverse Probability of Censoring Weighting (IPCW)}\label{sec-baselineipcw}

In the following subsection, we focus on the setting in which selection bias arises from artificial censoring only or in combination with natural censoring.

Under Proposition~\ref{prp:seq-cens-riskset} (\nameref{prp:seq-cens-riskset}) and Assumptions~\ref{ass:cens-indep}
(\nameref{ass:cens-indep}), selection bias arises only from artificial censoring. For each strategy $g_d$, define the (unstabilized) artificial-censoring weight as
\begin{equation}
\label{eq:ipcw-art}
W_i^{(d)}(0)=1,
\qquad
W_i^{(d)}(t)
=
\frac{1}{\mathbb P\!\left(G_i^{(d)} \ge t \mid X_i\right)}, t \in[0,\tau).
\end{equation}
That is, $W_i^{(d)}(t)$ is the inverse probability that the observed treatment history remains within the admissible set associated with the assigned strategy up to time $t$, conditional on baseline covariates.

Under Proposition~\ref{prp:seq-cens-riskset} (\nameref{prp:seq-cens-riskset}) and Assumptions~\ref{ass:cens-cond-baseline} (\nameref{ass:cens-cond-baseline}) and \ref{ass:tpos-cond-baseline} (\nameref{ass:tpos-cond-baseline}), clones may be subject to both informative artificial censoring and natural censoring. Identification under strategy $g_d$ is obtained by weighting each clone by the inverse probability of remaining observable under both mechanisms. The (unstabilized) global censoring weight is defined as 

\begin{equation}
\label{eq:ipcw-tot}
W_i^{(d)}(0)=1,
\qquad
W_{i}^{(d)}(t)
=
\frac{1}{
\mathbb P\!\left(G^{(d)} \ge t, C^{(d)} \ge t \mid X_{c,i},X_i\right)
}, t \in[0,\tau).
\end{equation}

By the chain rule 
$$
\mathbb P\!\left(G^{(d)} \ge t, C^{(d)} \ge t \mid X_{c,i},X_i\right) = \mathbb P\!\left(G^{(d)} \ge t, \mid X_{c,i},X_i\right) \times \mathbb P\!\left(C^{(d)} \ge t, \mid X_{c,i},X_i, G^{(d)} \ge t \right)
$$
which does not require independence between artificial and natural censoring.

Under the additional assumption that the two sets of baseline covariates are disjoint and that, conditionally on $(X_i,X_{c,i})$, artificial and natural censoring are independent, the overall censoring mechanism factorizes as
\[
\mathbb P\!\left(G^{(d)} \ge t, C^{(d)}\ge t \mid X_{c,i},X_i\right)
=
\mathbb P\!\left(G^{(d)} \ge t \mid X_i\right)\,
\mathbb P\!\left(C^{(d)}\ge t \mid X_{c,i}\right).
\]

These assumptions are not required for identification, but may simplify modeling and estimation of the censoring mechanisms.

Importantly, artificial censoring is strategy-specific and must therefore be modeled separately within each treatment arm. In contrast, natural censoring arises from the same underlying data-generating process, although whether it should be modeled jointly or separately across arms depends on the assumptions imposed.

\begin{definition}[IPCW-Kaplan-Meier for censoring]\protect\hypertarget{def-ipcwkm-chp4}{}\label{def-ipcwkm-chp4}
We let $\widehat W_i^{(d)}(\cdot)$ denote an estimator of the individual censoring weight defined in either equation~\ref{eq:ipcw-art} or \ref{eq:ipcw-tot} for strategy $g_{d}$. We introduce
\begin{align*}
D_j^{\mathrm{IPCW}}(d) &:= \sum_{i=1}^n \widehat W_i^{(d)}(t_j) \Delta_{i}^{(d)} \mathbb{I}(\widetilde T_i^{(d)} = t_j) \\
\quad\text{and}\quad N^{\mathrm{IPCW}}_j(d) &:= \sum_{i=1}^n \widehat W_i^{(d)}(t_j) \mathbb{I}(\widetilde T_i^{(d)} \geqslant t_j),
\end{align*}
be the weighted numbers of deaths and individuals at risk at time \(t_j\). The IPCW version of the KM estimator is defined as:
\[
\widehat{S}^{(d)}_{\mathrm{IPCW}}(t)
=
\prod_{t_j \leqslant t}\left(1-\frac{D_j^{\mathrm{IPCW}}(d)}{N_j^{\mathrm{IPCW}}(d)}\right). 
\]
\end{definition}

The weighted Kaplan--Meier estimator redistributes the contribution of censored individuals to similar individuals who remain at risk. 
Individuals more likely to be censored, given their covariates, receive larger weights when they remain under follow-up. The RMST estimator is then
\begin{equation}\phantomsection\label{eq-thetaIPCWKM-chp4}{
\widehat{\theta}_{\mathrm{IPCW-KM}}(d_1,d_{0}) = \int_{0}^{\tau}\widehat{S}^{(d_1)}_{\mathrm{IPCW}}(t)-\widehat{S}^{(d_0)}_{\mathrm{IPCW}}(t)dt.
}\end{equation} 

IPCW estimation relies on correct specification of the censoring weights (see, e.g., \cite{voinot2026treatmenteffectestimationcausal}). Consistency and asymptotic normality require correct specification of both the natural censoring process $C$ and the artificial censoring process $G^{(d)}$. Since the framework involves two distinct censoring mechanisms, misspecification of either component may induce bias.

These limitations motivate alternative estimators, such as outcome regression or doubly robust methods, which may provide improved robustness to model misspecification.

\paragraph{Note.} The IPCW estimator after cloning--censoring can be viewed as a particular instance of inverse probability weighting (IPW) in a time-varying setting. While IPW usually refers to weighting by treatment assignment probabilities, here the weights are based on the probability of remaining uncensored over time. In this context, artificial censoring induced by deviations from the treatment strategy can be interpreted as a time-varying treatment assignment process, explaining the close connection with IPW \citep{Burger2024,Truong2024}. 

\subsection{G-formula}\label{sec-baseline-gf}

Under Assumptions~\ref{ass:adm-treat-consistency} (\nameref{ass:adm-treat-consistency}) and 
\ref{ass:adm-cens-consistency} (\nameref{ass:adm-cens-consistency}), and Proposition~\ref{prp:seq-cens-riskset}, the strategy-specific survival function satisfies
\[
S^{(d)}(t)
=
\mathbb P(T^{(d)}>t)
=
\mathbb E\!\left[
\mathbb P(T^{(d)}>t \mid X, X_c)
\right],
\]
where the adjustment set now includes both baseline confounders $X$ and 
censoring-related covariates $X_c$. 
The corresponding covariate-specific RMST is
\[
\mu(X,X_c,d)
=
\int_0^\tau P(T^{(d)}>t \mid X, X_c)\,\mathrm dt,
\]
and, by Fubini's theorem, the causal contrast can be written as
\[
\theta_{\mathrm{RMST}}(d_1,d_0)
=
\mathbb E\!\left[
\mu(X,X_c,d_1)-\mu(X,X_c,d_0)
\right].
\]

\begin{definition}[G-formula estimator]
Let $\widehat \mu(X,X_c,d)$ be an estimator of 
$\mu(X,X_c,d)$ obtained by outcome regression in the cloned data. 
The plug-in G-formula estimator is
\[
\widehat{\theta}_{\mathrm{G}}(d_1,d_0)
=
\frac{1}{n}\sum_{i=1}^n
\left\{
\widehat \mu(X_i,X_{c,i},d_1)\,\mathrm dt
-
\widehat \mu(X_i,X_{c,i},d_0)\,\mathrm dt
\right\}.
\]
\end{definition}

A key practical advantage of the G-formula in the post--cloning--censoring framework is that it does not require explicit modeling of the artificial or natural censoring mechanisms. Under conditionally independent censoring, it reduces to a standard outcome regression requiring adjustment for all variables related to both natural and artificial censoring. Standard asymptotic properties follow from \cite{voinot2026treatmenteffectestimationcausal}, the only difference being the enlarged adjustment set $(X,X_c)$.

To reduce reliance on either the outcome model or the censoring model, we now consider doubly robust approaches combining outcome regression and inverse probability weighting. In particular, augmented inverse probability of censoring weighting (AIPCW) remains consistent if at least one nuisance model is correctly specified.

\subsection{AIPCW (Augmented IPCW)}\label{sec:aipcw-baseline}
Building on the IPCW and G-formula formulations above and under Assumptions~\ref{ass:adm-treat-consistency}, 
\ref{ass:adm-cens-consistency}, and Proposition~\ref{prp:seq-cens-riskset}, we consider an augmented inverse probability of censoring weighted (AIPCW) approach improving robustness to misspecification.

Let $W_i^{(d)}(t)$ be the inverse probability of censoring weight defined in either Equation~\eqref{eq:ipcw-art} or \eqref{eq:ipcw-tot}. For $t\in[0,\tau]$, define the conditional expected residual survival time as
\[
Q^{(d)}(t\mid X_i)
=
\mathbb E\!\left[T^{(d)}_i\wedge\tau \mid X_i,\,T_i^{(d)}>t\right]
=
\frac{1}{S^{(d)}(t\mid X_i)}
\int_t^\tau S^{(d)}(u\mid X_i)\,\mathrm du,
\]
where $S^{(d)}(t\mid X_i)=\mathbb P(T^{(d)}_i>t\mid X_i)$ denotes the strategy-specific conditional survival function. Following the doubly robust transformation proposed by \cite{rubin2007}, define
\begin{equation}
\label{eq:TDR-baseline-Wtot}
\begin{aligned}
T^{(d)}_{\mathrm{DR},i}
=
&\;
\Delta_i^{(d)}\,\widetilde T_i^{(d)}\,
W_i^{(d)}(\widetilde T_i^{(d)})
+
(1-\Delta_i^{(d)})\,
Q^{(d)}(\widetilde T_i^{(d)} \mid X_i, X_{c,i})\,
W_i^{(d)}(\widetilde T_i\wedge\tau) \\
&-
\int_0^{\widetilde T_i\wedge\tau}
\bigl(W_i^{(d)}(t)\bigr)^2
\,Q^{(d)}(t\mid X_i,X_{c,i})\,
\mathrm d\mathbb P_{H^{(d)}}(t\mid X_i,X_{c,i}),
\end{aligned}
\end{equation}
where $\mathrm d\mathbb P_{H^{(d)}}(t\mid X_i,X_{c,i})$ denotes the conditional distribution of the total censoring time. When the total censoring process admits intensity
$\lambda_{H^{(d)}}(t|X_i,X_{c,i})
= \lambda_{G^{(d)}}(t|X_i)
+ \lambda_{C}(t|X_{c,i})$,
the correction term can equivalently be written as
\[
-\int_0^{\widetilde T_i^{(d)}}
W_i^{(d)}(t)\,
Q^{(d)}(t\mid X_i,X_{c,i})\,
\lambda_{H^{(d)}}(t\mid X_i,X_{c,i})\,
\mathrm dt,
\]
making explicit the contribution of the instantaneous censoring mechanism.

This transformation IPCW-weights observed event times, imputes censored outcomes through $Q$, and adds an augmentation term correcting residual bias. Averaging $T^{(d)}_{\mathrm{DR},i}$ over clones assigned to strategy $s$ yields an augmented inverse probability of censoring weighted estimator of the RMST under strategy $s$.

\begin{definition}[AIPCW estimator]\label{def:aipcw-baseline}
Let $\hat{T}_{\mathrm{DR}}$ be an estimator of 
$T^*_{\mathrm{DR}}$ in the cloned data. 
The corresponding RMST estimator is
\[
\widehat\theta_{\mathrm{AIPCW}}(d_1,d_0) := \frac1n \sum_{i=1}^n \hat{T}^{(d_1)}_{\mathrm{DR},i}- \hat{T}^{(d_0)}_{\mathrm{DR},i}
\]
\end{definition}

Consistency is guaranteed if either the censoring model underlying $W_i^{(d)}$ or the conditional truncated expectation $Q^{(d)}$ is correctly specified.

\paragraph{Practical considerations.}
Double robustness provides protection against misspecification of the censoring mechanisms and naturally extends IPCW estimators for RMST estimation under immortal time bias. However, implementation may be computationally demanding because several nuisance components must be estimated. In practice, careful diagnostics and pragmatic modeling choices are required to balance robustness and feasibility. 

\subsection{Numerical experiments}\label{sec-simulation_baseline}

We investigate challenges arising from the coexistence of natural and artificial censoring. We assess the impact of misspecifying each censoring process, examine whether they should be modeled jointly or separately, and compare alternative approaches such as the G-formula. The analysis focuses on the causal contrast between discontinuation at three years and continuation until five years.

\paragraph{Data-generating process.}
Data are generated according to the mechanism summarized in
Table~\ref{tab:dgp} for \(n\) individuals, with parameters chosen to mimic our practical application. Individuals are followed over a discrete time grid with \(K=4\) visits indexed by \(k=0,\dots,4\) 
, and follow-up is administratively truncated at \(\tau=10\), corresponding to the 10-year horizon of the Breast Cancer study.
At each visit, event and censoring times are generated on a continuous time scale within visit intervals.

\begin{table}[ht]
\centering
\caption{Data-generating processes used in the simulation study.}
\label{tab:dgp}
\renewcommand{\arraystretch}{1.15}
\setlength{\tabcolsep}{5pt}
\small
\begin{tabular}{p{2.5cm} p{12.7cm}}
\toprule
\textbf{Scenario} & \textbf{Data-generating mechanism / parameters} \\
\midrule

\textbf{General DGP}
&
\(
X_1 \sim \mathcal N(\mu_1,\sigma_1^2),\;
X_2 \sim \mathcal N(\mu_2,\sigma_2^2),\;
X_3 \sim \mathcal N(\mu_3,\sigma_3^2)
\)
\\[3pt]
&
for \(k \in \{0,1,\dots,K\}\),
\\[-1pt]
&
\(
A_0=1,\quad
\mathbb P(A_k=1\mid A_{k-1},X)
=
\operatorname{logit}^{-1}\!\big(
\gamma_0+\gamma_A A_{k-1}+\gamma_{X1}X_1+\gamma_{X2}X_2+\gamma_t(k)
\big),
\quad
\gamma_t(k)=
\left\{
\begin{array}{ll}
\gamma_{t,\mathrm{pre}}, & k \le 2,\\
\gamma_{t,\mathrm{post}}, & k > 2
\end{array}
\right.
\)
\\[5pt]
&
\(
\lambda_T(k\mid A_k,X)=
\exp\!\big(
\alpha_0+\alpha_A A_k+\alpha_{X1}X_1+\alpha_{X2}X_2+\alpha_{X3}X_3+\alpha_t(k)
\big),
\quad
\alpha_t(k)=
\left\{
\begin{array}{ll}
\alpha_{t,\mathrm{pre}}, & k \le 2,\\
\alpha_{t,\mathrm{post}}, & k > 2
\end{array}
\right.
\)
\\[5pt]
&
\(
\lambda_C(k\mid A_k,X)=
\beta\exp\!\big(
\beta_0+\beta_A A_k+\beta_{X2}X_2+\beta_{X3}X_3+\beta_t(k)
\big),
\quad
\beta_t(k)=
\left\{
\begin{array}{ll}
\beta_{t,\mathrm{pre}}, & k \le 2,\\
\beta_{t,\mathrm{post}}, & k > 2
\end{array}
\right.
\)
\\
\bottomrule
\end{tabular}
\end{table}
To promote positivity, we introduced additional parameters through $\gamma(k)$, $\alpha(k)$, and $\beta(k)$ unrelated to baseline covariates. In particular, $\gamma(k)$ was reduced after the second visit so that enough individuals followed the discontinuation strategy, while $\alpha(k)$ was increased after the second visit to raise the event hazard, but kept lower beforehand so that sufficient individuals remained at risk. These time-varying choices ensured adequate support for the target strategies without introducing a grace period, keeping the simulation setting simpler since immortal time bias is already induced by the longitudinal treatment strategy itself.

From this general data-generating process (DGP), we derive four scenarios ranging from low to high confounding and from low to highly informative natural censoring (parameter values are reported in Appendix~\ref{appendix_chp4} in Section~\ref{sec-coef_baseline}).

\paragraph{Methods.}
We first evaluated two naive approaches: 
(i) a complete-case analysis restricted to individuals naturally following the target duration strategies and ignoring censoring, and 
(ii) a naive G-formula applied only to individuals naturally following the target strategies.

We then applied the cloning procedure described in Figure~\ref{fig:cloning}. On the cloned data, RMST differences were estimated using Kaplan--Meier, G-formula with Cox models, and IPCW. Artificial censoring was modeled in discrete time using pooled logistic regression, whereas natural censoring was modeled using a piecewise exponential model implemented through Poisson regression with a log-offset for interval length. In both cases, one model was fitted separately within each treatment arm on person-period data (Figure~\ref{fig:simu_illustration}), so that uncensoring probabilities were updated on visit-defined intervals and used at each event time through interval-specific weights.

\begin{figure}[h]
    \centering
    \includegraphics[width=0.9\linewidth]{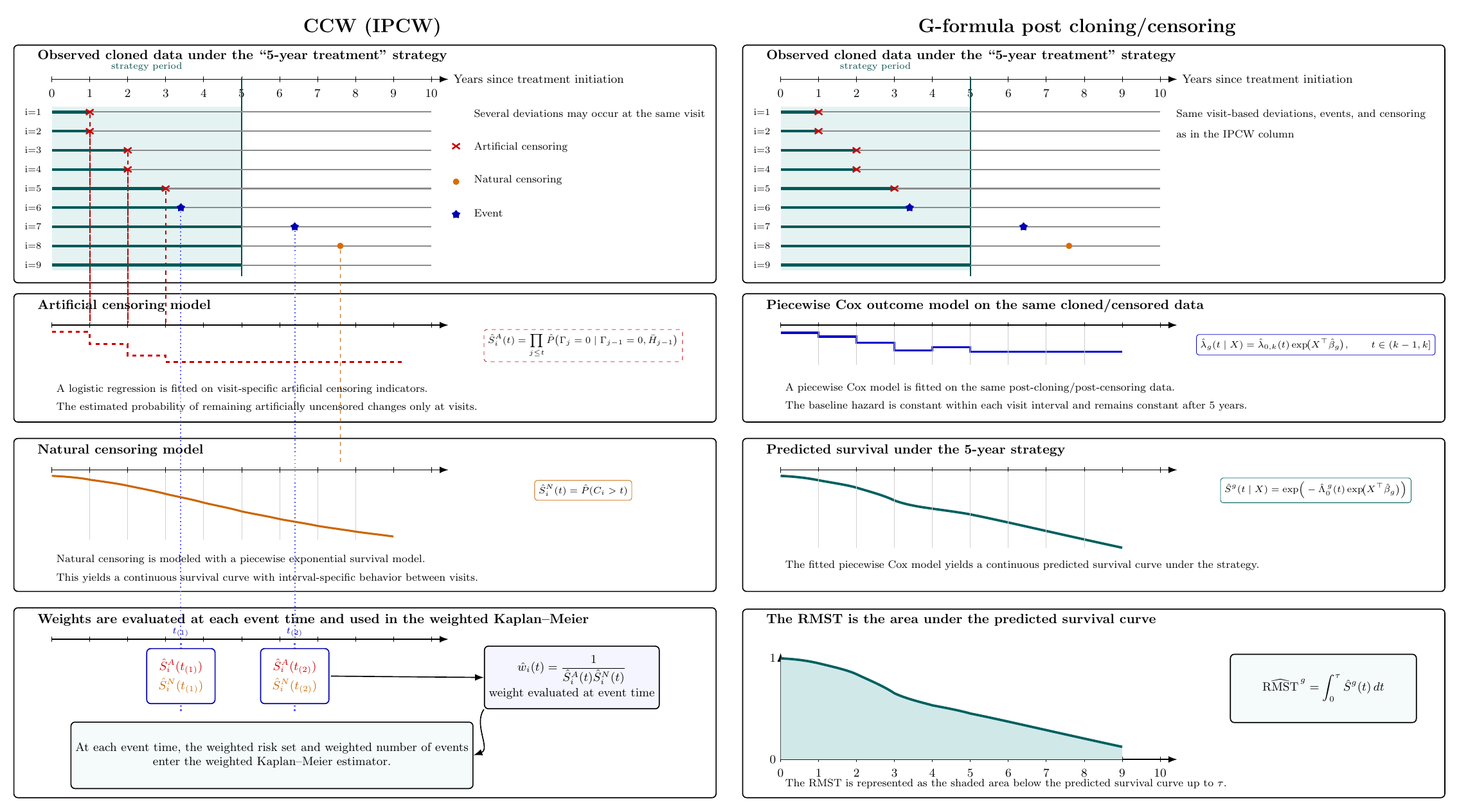}
    \caption{Illustration of IPCW and G-formula computation after cloning and censoring in the simulation study.}
    \label{fig:simu_illustration}
\end{figure}


To evaluate the consequences of model misspecification, we also considered a single regression model using either Pooled logistic or a Piecewise Exponential model to capture the overall censoring process. 



The results are summarized using boxplots across different sample sizes (\(n=500,1000,2000,4000,8000\)). 
Each boxplot represents the distribution of Monte Carlo estimates obtained from \(50\) independent replicates for each scenario and sample size. 
In Appendix, Table~\ref{tab:rmse_baseline} shows the bias, ESE, MSE and RMSE of the different estimator in one scenario (bottom right boxplot in Figure~\ref{fig:ate_baseline}, scenario 4: strong confounding and dependent censoring). 

\begin{figure}[H]
    \centering
    \includegraphics[width=1\linewidth]{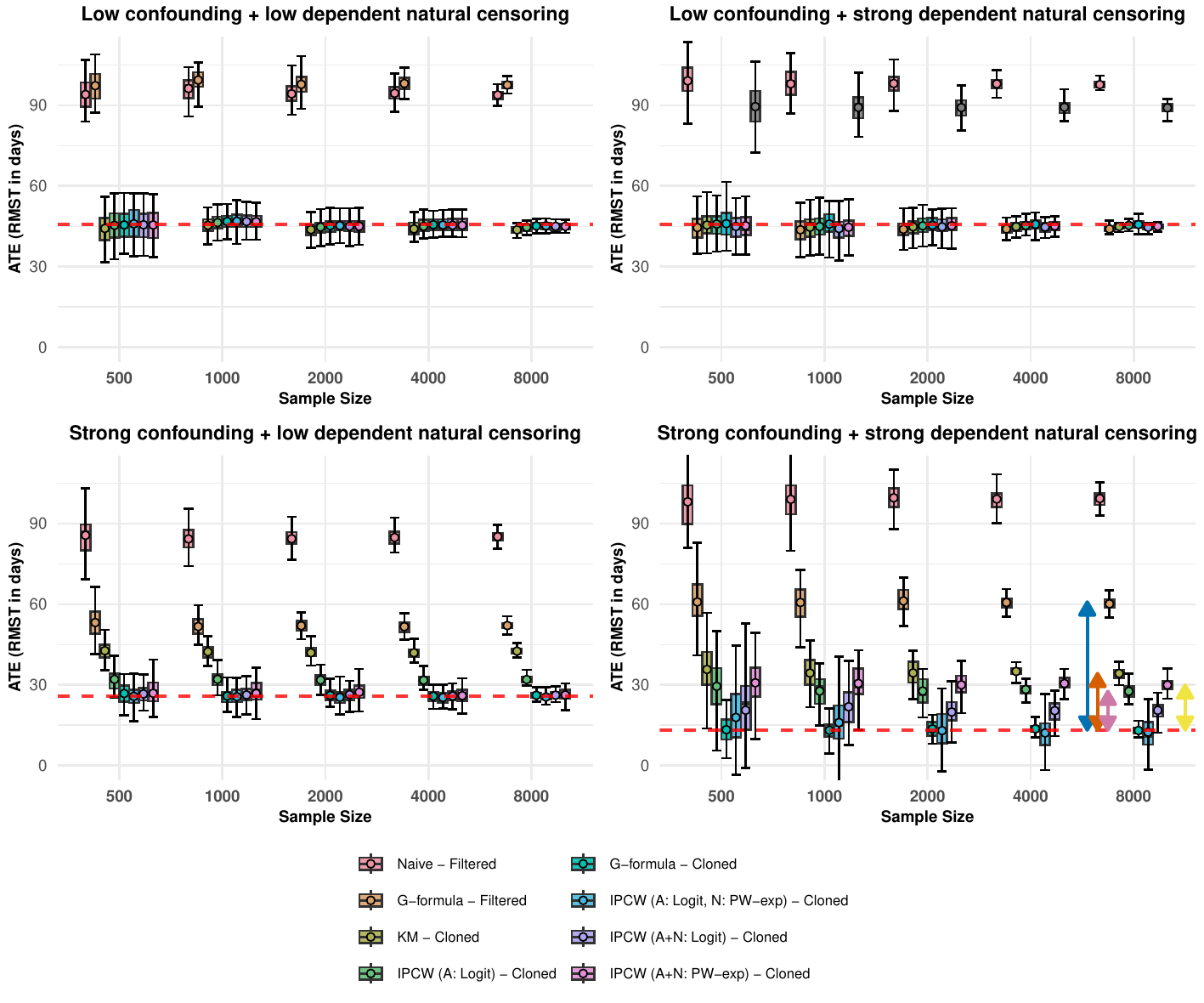}
    \caption{RMST difference estimates after cloning and artificial censoring under four simulation settings with baseline covariates, defined by combinations of low or high confounding and low or high dependent natural censoring. In estimator labels, A and N indicate the models used for artificial and natural censoring, respectively.}
    \label{fig:ate_baseline}
\end{figure}

Across the four scenarios, several patterns emerge.

First, filtering individuals based on observed treatment duration without accounting for immortal time bias leads to severe bias (Naive - Filtered), even when baseline confounding and informative natural censoring are partially addressed through a naive outcome regression restricted to adherent individuals (G-formula - Filtered). This corresponds to the \textcolor{navy_cbf1}{blue arrow}
(Figure~\ref{fig:ate_baseline} bottom right panel, 8000 individuals). 

Second, after cloning removes immortal time bias, failing to account for artificial and natural censoring still induces bias, illustrated by the \textcolor{red_cbf1}{red arrow} (KM -- Cloned).

Third, adjusting for natural censoring becomes increasingly important as censoring informativeness strengthens. When natural censoring depends more strongly on prognosis, ignoring it (IPCW (A: Logit) -- Cloned) leads to increasing bias even after cloning. This additional bias is represented by the \textcolor{purple_cbf1}{purple arrow} in Figure~\ref{fig:ate_baseline}.

When both artificial and natural censoring are correctly modeled (IPCW (A: Logit, N: PW--exp) - Cloned), the IPCW estimator performs well, although substantial variability appears in the most challenging scenario because of extreme weights.

In contrast, the G-formula after cloning (G-formula - Cloned) performs well across all scenarios and remains stable even under strong confounding or highly informative censoring. This stability may partly stem from the exponential outcome model used in the data-generating process, whose parametric structure facilitates extrapolation in sparsely supported covariate regions and may therefore favor outcome regression approaches. To investigate this point, we conducted additional simulations using a Weibull outcome model for the G-formula. In this setting, substantial bias remains even with 8,000 observations (Table~\ref{tab:rmse_baseline}), illustrating the sensitivity of the estimator to model misspecification.

Finally, modeling the two censoring processes through a single regression model (IPCW (A+N: Logit) -- Cloned and IPCW (A+N: PW--exp) -- Cloned) induces scenario-dependent bias. In more complex settings, estimation relies more heavily on the weights, making misspecification more impactful. The \textcolor{yellow_cbf2}{yellow arrow} illustrates this bias. While the single logistic regression often captures both censoring processes reasonably well, the single Cox model appears systematically biased in our settings. Using one model for both censoring processes is therefore risky overall, although it may reduce variance by avoiding products of separately estimated censoring probabilities that can generate extreme weights.

\section{Causal effect of treatment duration with time-dependent confounders}\label{sec-tdconf}

Scenario~(2), which allows for time-dependent confounders, represents a more realistic setting and is summarized by the DAG below. The corresponding identification assumptions are recalled in Table~\ref{tab:assumptions_timedep}. Because treatment decisions, censoring, and covariates now evolve sequentially over follow-up, identification relies on sequential conditional independence assumptions. This, in turn, requires formulating the problem in a discrete-time framework.

\begin{figure}[H]
    \centering
\resizebox{0.65\textwidth}{!}{
\begin{tikzpicture}[
    ->,
    >=stealth,
    node distance=1.1cm and 1.0cm,
    thick,
    scale=0.9,
    every node/.style={transform shape}
]
    \def\r{0.9cm}
    
    \node (A2) [draw, circle, minimum size=\r, below right=2.5cm and 0.3cm of X] {$\overline{A_2}$};
    \node (A1) [draw, circle, minimum size=\r, left=1.5cm of A2] {$\overline{A_1}$};
    \node (A0) [draw, circle, minimum size=\r, left=1.5cm of A1] {$A_0$};
    \node (A3) [draw, circle, minimum size=\r,  right=1.5cm of A2] {$\overline{A_3}$};
    \node (A4) [draw, circle, minimum size=\r,  right=1.5cm of A3] {$\overline{A_4}$};

    \node (Y3) [draw, circle, minimum size=\r,  below right=3.5cm and 1.5cm of X] {$Y_3$};
    \node (Y2) [draw, circle, minimum size=\r, left=1.5cm of Y3] {$Y_2$};
    \node (Y1) [draw, circle, minimum size=\r, left=1.5cm of Y2] {$Y_1$};
    \node (Y4) [draw, circle, minimum size=\r,  right=1.5cm of Y3] {$Y_4$};
    \node (Y5) [draw, circle, minimum size=\r,  right=1.5cm of Y4] {$Y_5$};

    \node (K3) [draw, circle, minimum size=\r,  below=2cm of Y3] {$K_3$};
    \node (K2) [draw, circle, minimum size=\r, left=1.5cm of K3] {$K_2$};
    \node (K1) [draw, circle, minimum size=\r, left=1.5cm of K2] {$K_1$};
    \node (K4) [draw, circle, minimum size=\r,  right=1.5cm of K3] {$K_4$};
    \node (K5) [draw, circle, minimum size=\r,  right=1.5cm of K4] {$K_5$};

    \node (L3) [draw, circle, minimum size=\r,  above=0.6cm of A3] {$\overline{X_3}$};
    \node (L2) [draw, circle, minimum size=\r, above=0.6cm of A2] {$\overline{X_2}$};
    \node (L0) [draw, circle, minimum size=\r, above=0.6cm of A0] {$X_0$};
    \node (L1) [draw, circle, minimum size=\r, above=0.6cm of A1] {$\overline{X_1}$};
    \node (L4) [draw, circle, minimum size=\r, above=0.6cm of A4] {$\overline{X_4}$};

    \node (Xc3) [draw, circle, minimum size=\r,  below=1.3cm of A3] {$\overline{X_{c,3}}$};
    \node (Xc2) [draw, circle, minimum size=\r, below=1.3cm of A2] {$\overline{X_{c,2}}$};
    \node (Xc0) [draw, circle, minimum size=\r, below=1.3cm of A0] {$X_{c,0}$};
    \node (Xc1) [draw, circle, minimum size=\r, below=1.3cm of A1] {$\overline{X_{c,1}}$};
    \node (Xc4) [draw, circle, minimum size=\r, below=1.3cm of A4] {$\overline{X_{c,4}}$};

    \draw[red_cbf2]  (L0) -- (A0);
    \draw[red_cbf2]  (L1) -- (A1);
    \draw[red_cbf2]  (L2) -- (A2);
    \draw[red_cbf2]  (L3) -- (A3);
    \draw[red_cbf2]  (L4) -- (A4);

    \draw (A0) -- (L1);
    \draw (A1) -- (L2);
    \draw (A2) -- (L3);
    \draw (A3) -- (L4);
    
    \draw (L0) -- (L1);
    \draw (L1) -- (L2);
    \draw (L2) -- (L3);
    \draw (L3) -- (L4);

    \draw[red_cbf2] (L0) -- (Y1);
    \draw[red_cbf2] (L1) -- (Y2);
    \draw[red_cbf2] (L2) -- (Y3);
    \draw[red_cbf2] (L3) -- (Y4);
    \draw[red_cbf2] (L4) -- (Y5);

    \draw[red_cbf2](Y1) -- (L1);
    \draw[red_cbf2] (Y2) -- (L2);
    \draw[red_cbf2] (Y3) -- (L3);
    \draw[red_cbf2] (Y4) -- (L4);

    \draw[purple_cbf2] (A0) -- (Y1);
    \draw[purple_cbf2] (Y1) -- (A1);
    \draw[purple_cbf2] (A1) -- (Y2);
    \draw[purple_cbf2] (Y2) -- (A2);
    \draw[purple_cbf2] (A2) -- (Y3);
    \draw[purple_cbf2] (Y3) -- (A3);
    \draw[purple_cbf2] (A3) -- (Y4);
    \draw[purple_cbf2] (Y4) -- (A4);
    \draw[purple_cbf2] (A4) -- (Y5);

    \draw (Xc0) -- (Xc1);
    \draw (Xc1) -- (Xc2);
    \draw (Xc2) -- (Xc3);
    \draw (Xc3) -- (Xc4);

    \draw[blue_cbf2] (Y1) to (Xc1);
    \draw[blue_cbf2] (Y2) to (Xc2);
    \draw[blue_cbf2] (Y3) to (Xc3);
    \draw[blue_cbf2] (Y4) to (Xc4);
    
    \draw[blue_cbf2] (Xc0) -- (Y1);
    \draw[blue_cbf2] (Xc1) -- (Y2);
    \draw[blue_cbf2] (Xc2) -- (Y3);
    \draw[blue_cbf2] (Xc3) -- (Y4);
    \draw[blue_cbf2] (Xc4) -- (Y5);

    \draw[blue_cbf2] (Xc0) -- (K1);
    \draw[blue_cbf2] (Xc1) -- (K2);
    \draw[blue_cbf2] (Xc2) -- (K3);
    \draw[blue_cbf2] (Xc3) -- (K4);
    \draw[blue_cbf2] (Xc4) -- (K5);

    \draw [bend right = 30] (A0) to (Xc1);
    \draw [bend right = 30] (A1) to (Xc2);
    \draw [bend right = 30] (A2) to (Xc3);
    \draw [bend right = 30] (A3) to (Xc4);

    \draw [bend right = 0] (A0) to (K1);
    \draw [bend right = 0] (A1) to (K2);
    \draw [bend right = 0] (A2) to (K3);
    \draw [bend right = 0] (A3) to (K4);

    \draw[blue_cbf2] (K1) -- (Xc1);
    \draw[blue_cbf2] (K2) -- (Xc2);
    \draw[blue_cbf2] (K3) -- (Xc3);
    \draw[blue_cbf2] (K4) -- (Xc4);

    \draw (Y1) -- (Y2);
    \draw (Y2) -- (Y3);
    \draw (Y3) -- (Y4);
    \draw (Y4) -- (Y5);

    \draw (A0) -- (A1);
    \draw (A1) -- (A2);
    \draw (A2) -- (A3);
    \draw (A3) -- (A4);

    \draw (K1) -- (K2);
    \draw (K2) -- (K3);
    \draw (K3) -- (K4);
    \draw (K4) -- (K5);

\end{tikzpicture}
}
\caption{Discrete-time representation of the treatment assignment process $A_k$, the event process
$Y$, and the natural censoring process $K$, with $X_t$ denoting time-varying confounders and $X_{c,t}$ time-varying censoring-related covariates. \textcolor{blue_cbf2}{Blue edges} represent informative natural censoring, while \textcolor{red_cbf2}{red edges} confounding effect and \textcolor{purple_cbf2}{purple edges} the immortal time bias.}
    \label{fig:dag_discrete2}
\end{figure}
Figure~\ref{fig:dag_discrete2} depicts a discrete-time causal DAG with time-varying processes, in which treatment decisions $A_k$, time-dependent covariates $X_k$, and censoring-related variables $X_{c,k}$ are updated over follow-up. 
Compared with Scenario~(1), this setting is more challenging because $X_k$ are time-varying confounders of the treatment--outcome relationship while also being affected by prior treatment. Consequently, treatment--covariate feedback complicates the evaluation of duration strategies and requires methods that appropriately adjust for time-varying confounding.

\subsection{Cloning-Censoring}

We refer to Section~\ref{sec-cloningcensoring} for methodological details, as the cloning and artificial censoring procedure is identical across scenarios. 
\begin{figure}[H]
    \centering
    \resizebox{0.65\textwidth}{!}{
\begin{tikzpicture}[->, >=stealth, node distance=1.1cm and 1.0cm, thick]
    \def\r{0.85cm}
    
    \node (Y3) [draw, circle, minimum size=\r] {$Y_3^{(d)}$};
    \node (Y2) [draw, circle, minimum size=\r, left=1.0cm of Y3] {$Y_2^{(d)}$};
    \node (Y1) [draw, circle, minimum size=\r, left=1.0cm of Y2] {$Y_1^{(d)}$};
    \node (Y4) [draw, circle, minimum size=\r,  right=1.0cm of Y3] {$Y_4^{(d)}$};
    \node (Y5) [draw, circle, minimum size=\r,  right=1.0cm of Y4] {$Y_5^{(d)}$};

    \node (G3) [draw, circle, minimum size=\r,  above right=1.8cm and 0cm of Y3] {$\Gamma_3^{(d)}$};
    \node (G2) [draw, circle, minimum size=\r, left=1.0cm of G3] {$\Gamma_2^{(d)}$};
    \node (G1) [draw, circle, minimum size=\r, left=1.0cm of G2] {$\Gamma_1^{(d)}$};
    \node (G0) [draw, circle, minimum size=\r,  left=1.0cm of G1] {$\Gamma_0^{(d)}$};
    \node (G4) [draw, circle, minimum size=\r,  right=1.0cm of G3] {$\Gamma_4^{(d)}$};

    
    \node (K3) [draw, circle, minimum size=\r,  below=1.3cm of Y3] {$K_3^{(d)}$};
    \node (K2) [draw, circle, minimum size=\r, left=1.0cm of K3] {$K_2^{(d)}$};
    \node (K1) [draw, circle, minimum size=\r, left=1.0cm of K2] {$K_1^{(d)}$};
    \node (K4) [draw, circle, minimum size=\r,  right=1.0cm of K3] {$K_4^{(d)}$};
    \node (K5) [draw, circle, minimum size=\r,  right=1.0cm of K4] {$K_5^{(d)}$};

    \node (L2) [draw, circle, minimum size=\r,  above left=0.4cm and 0.5cm of Y3] {$X_2^{(d)}$};
    \node (L1) [draw, circle, minimum size=\r,  left=1.0cm of L2] {$X_1^{(d)}$};
    \node (L0) [draw, circle, minimum size=\r,  left=1.0cm of L1] {$X_0^{(d)}$};
    \node (L3) [draw, circle, minimum size=\r, right=1.0cm of L2] {$X_3^{(d)}$};
    \node (L4) [draw, circle, minimum size=\r, right=1.0cm of L3] {$X_4^{(d)}$};

    \node (Xc3) [draw, circle, minimum size=\r,  below =1.2cm of L3] {$X^{(d)}_{c,3}$};
    \node (Xc2) [draw, circle, minimum size=\r, left=1.0cm of Xc3] {$X^{(d)}_{c,2}$};
    \node (Xc1) [draw, circle, minimum size=\r, left=1.0cm of Xc2] {$X^{(d)}_{c,1}$};
    \node (Xc0) [draw, circle, minimum size=\r,  left=1.0cm of Xc1] {$X^{(d)}_{c,0}$};
    \node (Xc4) [draw, circle, minimum size=\r,  right=1.0cm of Xc3] {$X^{(d)}_{c,4}$};

    \draw (L0) -- (L1);
    \draw (L1) -- (L2);
    \draw (L2) -- (L3);
    \draw (L3) -- (L4);

    \draw (Y1) -- (Y2);
    \draw (Y2) -- (Y3);
    \draw (Y3) -- (Y4);
    \draw (Y4) -- (Y5);

    \draw (K1) -- (K2);
    \draw (K2) -- (K3);
    \draw (K3) -- (K4);
    \draw (K4) -- (K5);

    \draw (G0) -- (G1);
    \draw (G1) -- (G2);
    \draw (G2) -- (G3);
    \draw (G3) -- (G4);

    \draw (Xc0) -- (Xc1);
    \draw (Xc1) -- (Xc2);
    \draw (Xc2) -- (Xc3);
    \draw (Xc3) -- (Xc4);


    \draw[green_cbf2] (L0) -- (Y1);
    \draw[green_cbf2] (L1) -- (Y2);
    \draw[green_cbf2] (L2) -- (Y3);
    \draw[green_cbf2] (L3) -- (Y4);
    \draw[green_cbf2] (L4) -- (Y5);

    \draw[green_cbf2] (L0) -- (G0);
    \draw[green_cbf2] (L1) -- (G1);
    \draw[green_cbf2] (L2) -- (G2);
    \draw[green_cbf2] (L3) -- (G3);
    \draw[green_cbf2] (L4) -- (G4);



    \draw[blue_cbf2] (Xc0) -- (Y1);
    \draw[blue_cbf2] (Xc1) -- (Y2);
    \draw[blue_cbf2] (Xc2) -- (Y3);
    \draw[blue_cbf2] (Xc3) -- (Y4);
    \draw[blue_cbf2] (Xc4) -- (Y5);

    \draw[blue_cbf2] (Xc0) -- (K1);
    \draw[blue_cbf2] (Xc1) -- (K2);
    \draw[blue_cbf2] (Xc2) -- (K3);
    \draw[blue_cbf2] (Xc3) -- (K4);
    \draw[blue_cbf2] (Xc4) -- (K5);

    \end{tikzpicture}}
\caption{
Discrete-time representation of treatment strategy assignment, event and censoring processes after cloning--censoring.
$\Gamma$ denotes artificial censoring process, $K$ natural censoring process, and $Y$ the
event process with $X_t$ denoting the time-varying confounders and $X_{c,t}$ the time-varying censoring-related covariates.
\textcolor{blue_cbf1}{Blue edges} represent informative natural censoring, while \textcolor{green_cbf1}{green edges} represents informative artificial censoring.}
    \label{fig:dag_cloning2}
\end{figure} 
Figure~\ref{fig:dag_cloning2} depicts the post-cloning structure associated with Figure~\ref{fig:dag_discrete2}. After cloning and artificial censoring, the resulting data structure can be viewed as a trial emulation with time-dependent informative censoring.



In the same manner than in Section~\ref{sec-cloningcensoring}, additional assumptions in this new framework have to be made to enable the identification of the causal quantity. 

\begin{proposition}[Sequential censoring exchangeability and positivity on the cloned risk set with time-varying covariates]
\label{prp:seq-cens-riskset-timedep}
Consider the cloned and artificially censored dataset. For each individual $(i,d)$, and decision time $k=1,\dots,K$, define the (observed)
at-risk indicator
\[
R^{(d)}_{k-1}=\mathbbm{1}\{\widetilde T^{(d)}>k-1\}
=\mathbbm{1}\{\min(T,C,G^{(d)})>k-1\}.
\]

Let
\[
\overline X^{(d)}_{k-1}
=
\bigl(X_{0}^{(d)}, X_{1}^{(d)}, \dots, X_{k-1}^{(d)}\bigr)
\]
denote the history of time-dependent covariates that would be observed under strategy $d$ up to time $k-1$, restricted to trajectories compatible with the strategy.

For each decision time $k$, assume that the following conditions hold among individuals in the cloned, strategy-specific risk set
$\{R^{(d)}_{i,k-1}=1\}$.

\begin{enumerate}
\item[(i)] \emph{Conditionally independent artificial censoring (risk-set version).}
Under Assumption~\ref{ass:texch-timedep}
(\nameref{ass:texch-timedep}), \ref{ass:sutva-treat-timedep}, \ref{ass:sutva-cens-timedep} and \ref{ass:sutva-cov-timedep} (SUTVA for time-dependent confounders), we assume that
\[
T^{(d)} \;\perp\!\!\!\perp\; \Gamma_{k}^{(d)}
\ \Big|\ \overline X^{(d)}_{k-1},\ R^{(d)}_{k-1}=1 .
\]

\item[(ii)] \emph{Natural censoring exchangeability.}
Under Assumptions~\ref{ass:cens-cond-timedep}
(\nameref{ass:cens-cond-timedep}) and \ref{ass:sutva-treat-timedep}, \ref{ass:sutva-cens-timedep} and \ref{ass:sutva-cov-timedep} (SUTVA for time-dependent confounders), we assume that
\[
T^{(d)} \;\perp\!\!\!\perp\; K_{k}^{(d)}
\ \Big|\ \overline X^{(d)}_{c,k-1},\ R^{(d)}_{k-1}=1 .
\]

\item[(iii)] \emph{Positivity on the risk set.}
There exists $\epsilon>0$ such that, for each $k$,
under Assumption~\ref{ass:tpos-timedep}
(\nameref{ass:tpos-timedep}),
\[
\mathbb P\!\left(\Gamma_{k}^{(d)}=0 \mid \overline X^{(d)}_{k-1},\ R^{(d)}_{k-1}=1\right)\ge \epsilon,
\]

and under Assumptions~\ref{ass:cens-cond-pos-timedep}
(\nameref{ass:cens-cond-pos-timedep}),
\[
\mathbb P\!\left(K_{k}^{(d)}=0 \mid \overline X^{(d)}_{c,k-1},\ R^{(d)}_{k-1}=1\right)\ge \epsilon.
\]
\end{enumerate}
\end{proposition}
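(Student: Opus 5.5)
The plan is to derive each of the three conditions of Proposition~\ref{prp:seq-cens-riskset-timedep} from the primitive assumptions of Table~\ref{tab:assumptions_timedep}. On the risk set we translate observed quantities into their counterfactual versions through admissible consistency, and then read the conditional independences off the sequential exchangeability assumptions. Throughout, fix a strategy $g_d$ and a decision time $k$, and work on the event $\{R^{(d)}_{k-1}=1\}$. On this event we have $\min(T,C,G^{(d)})>k-1$, so $\bar A_{k-1}\in\mathcal A_{d,k-1}$, $Y_{k-1}=K_{k-1}=0$, and no deviation has occurred.

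\textbf{Step 1: consistency on the risk set.} Iterating Assumptions~\ref{ass:sutva-treat-timedep}, \ref{ass:sutva-cens-timedep} and \ref{ass:sutva-cov-timedep} forward from $k=0$ gives, on $\{R^{(d)}_{k-1}=1\}$, the identities $\bar X_{k-1}=\overline X^{(d)}_{k-1}$, $\bar X_{c,k-1}=\overline X^{(d)}_{c,k-1}$, $Y_{j}=Y^{(d)}_{j}$ and $K_j=K_j^{(d)}$ for all $j\le k-1$. The induction works because membership in $\mathcal A_{d,j}$ at step $j$ licenses the identification at step $j+1$. As a consequence, the risk-set event itself coincides with a function of counterfactual and observed-treatment quantities, namely $\{Y^{(d)}_{k-1}=K^{(d)}_{k-1}=0,\ \bar A_{k-1}\in\mathcal A_{d,k-1}\}$.

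\textbf{Step 2: artificial censoring, condition (i).} The indicator $\Gamma^{(d)}_k$ is a deterministic function of $A_k$ given $\bar A_{k-1}$, since $\bar A_{k-1}$ is admissible on the risk set. Assumption~\ref{ass:texch-timedep} states that $(Y^{(d)}_\ell,X^{(d)}_\ell)_{\ell\ge k+1}$ is independent of $A_k$ given $(\bar A_{k-1},\bar X_k)$ and no prior event or censoring. The event time $T^{(d)}$ is a function of the sequence $(Y^{(d)}_\ell)_\ell$. Its components up to $k$ are fixed on the risk set (equal to zero), up to the interval $(k-1,k]$ which is handled by the same assumption at the previous index. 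Hence $T^{(d)}\perp\!\!\!\perp \Gamma^{(d)}_k$ conditionally on the history. Substituting the counterfactual history from Step~1 then yields (i).

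A subtle point must be made explicit here. Exchangeability conditions on $\bar X_k$, while (i) conditions on $\overline X^{(d)}_{k-1}$. I would resolve this mismatch by reading the convention that $X_k$ is measured at the decision time $k$ as part of the history indexed $k-1$ in the proposition, so the two conditioning sets agree. Conditioning on $\bar A_{k-1}$ is absorbed because it is fixed (admissible) on the risk set, up to the admissible variability which consistency renders irrelevant for the potential outcomes.

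\textbf{Step 3: natural censoring and positivity, conditions (ii) and (iii).} For (ii) the argument is identical, using Assumption~\ref{ass:cens-cond-timedep} with index $k-1$ together with consistency for $K$. There is one caveat: that assumption also conditions on $\bar X_{k-1}$ and $\bar A_{k-1}$. Dropping them to condition only on $\overline X^{(d)}_{c,k-1}$ requires the disjointness/no-confounding structure encoded in Figure~\ref{fig:dag_cloning2}, where $K$ has parents only in $X_c$. I would invoke d-separation in that DAG at this point. For (iii), note that $\{\Gamma^{(d)}_k=0\}=\{\bar A_k\in\mathcal A_{d,k}\}$ on the risk set, and take the same $\epsilon$ as the minimum of the treatment and censoring positivity bounds. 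Assumption~\ref{ass:tpos-timedep} then gives a positive conditional probability, and Assumption~\ref{ass:cens-cond-pos-timedep} gives the $K$-part after translation via Step~1.

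The main obstacle I expect is Step~2 (and its analogue in Step~3). The primitive assumptions are stated with conditioning sets (full $\bar X_k$, $\bar A_{k-1}$, and the mixed observed/counterfactual indexing) that differ from those in the proposition. Passing between them needs either the DAG structure or a careful reinterpretation of the indices. Turning the per-time-point independence of $(Y^{(d)}_\ell)_{\ell\ge k+1}$ into an independence statement for the full $T^{(d)}$ also requires that the pre-$k$ part of $T^{(d)}$ is degenerate on the risk set.
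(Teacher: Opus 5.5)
\textbf{Gap: the paper assumes (i)--(iii) rather than proving them, and your derivation of them from the Table~\ref{tab:assumptions_timedep} assumptions does not go through.} Despite its label, the statement posits (i)--(iii) (``assume that the following conditions hold''). The Table~\ref{tab:assumptions_timedep} assumptions are only cited alongside them, and (i)--(iii) are then used directly as hypotheses for the weights in \eqref{eq:ipcw-tot-timedep} and for Proposition~\ref{prp-ipcwkm_multiply}. Deriving them from Table~\ref{tab:assumptions_timedep} is a stronger claim than the paper makes. You flag the right obstacle, but your resolutions either change the statement or import assumptions. The conditioning sets in (i)--(iii) differ from those of the primitives, and neither conditional independence nor a positivity bound survives such changes without further argument.

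\textbf{Condition (i).} This is the sharpest failure. Assumption~\ref{ass:texch-timedep} at index $k$ conditions on $\bar X_k$ and $Y_k=0$, whereas (i) conditions only on $\overline X^{(d)}_{k-1}$ and $R^{(d)}_{k-1}=1$. The DAG you want to use (Figure~\ref{fig:dag_cloning2}) contains $X^{(d)}_k\to\Gamma^{(d)}_k$ and $X^{(d)}_k\to Y^{(d)}_{k+1}$. The fork through the unconditioned $X^{(d)}_k$ is therefore open, and (i) as written is generically false. The design of Table~\ref{tab:dgp-timedep} is an instance: the current $X_3$ enters both the treatment model and the event hazard. Re-indexing so that $X_k$ counts as history at $k-1$ is probably what the authors intend, but it corrects the statement rather than proving it.

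Your handling of the interval $(k-1,k]$ is also wrong. Exchangeability at index $k-1$ concerns $A_{k-1}$, not $A_k$, and at index $k$ it holds only given $Y_k=0$. So nothing controls the dependence between $\mathbb I\{T^{(d)}\le k\}$ and $\Gamma^{(d)}_k$ on $\{R^{(d)}_{k-1}=1\}$; Figure~\ref{fig:dag_discrete2} even has $Y_k\to A_k$. A coherent version must condition on survival to $k$.

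Finally, discarding $\bar A_{k-1}$ is automatic only when $\mathcal A_{d,k-1}$ is a singleton. With a grace period, admissible consistency (an equality between observed and potential values) yields no independence. You would instead have to chain the exchangeability assumptions at earlier decision times, in a joint rather than per-$\ell$ form, via weak union and contraction.

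\textbf{Condition (ii).} Passing from the conditioning set $(\bar A_{k-1},\bar X_{c,k-1},\bar X_{k-1})$ of Assumption~\ref{ass:cens-cond-timedep} to $\overline X^{(d)}_{c,k-1}$ alone needs something like $K^{(d)}_k\perp\!\!\!\perp(\bar A_{k-1},\bar X_{k-1})\mid\overline X^{(d)}_{c,k-1}$ on the risk set. Nothing in Table~\ref{tab:assumptions_timedep} provides it:
treatment exchangeability in Scenario~(2) does not involve $K^{(d)}$;
covariate consistency is stated for $X$ only, so even your Step~1 identity $\bar X_{c,k-1}=\overline X^{(d)}_{c,k-1}$ is an extra assumption.
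Figure~\ref{fig:dag_cloning2} encodes this independence only by omitting the treatment nodes and the edges $A_{k-1}\to K_k$ and $K_k\to X_{c,k}\to Y_{k+1}$ of Figure~\ref{fig:dag_discrete2}. D-separation there assumes (ii) rather than derives it.

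\textbf{Condition (iii).} Assumption~\ref{ass:tpos-timedep} is only pointwise strict positivity. There is no treatment bound to take a minimum with, so a uniform $\epsilon$ is an additional strong-positivity hypothesis. Assumption~\ref{ass:cens-cond-pos-timedep} conditions on $Y^{(d)}_{k-1}=K^{(d)}_{k-1}=0$ but not on $G^{(d)}>k-1$. A lower bound on an average over $\{G^{(d)}>k-1\}$ and its complement does not bound the probability on $\{G^{(d)}>k-1\}$, and consistency does not remove this extra conditioning.

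\textbf{Defensible options.} You can treat (i)--(iii) as primitive hypotheses, as the paper does. Alternatively, prove a corrected version that conditions on $\overline X^{(d)}_k$ and survival to $k$ in (i). It would also need explicit assumptions on how natural censoring depends on the past, and strong positivity of treatment.
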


\subsection{Inverse Probability of Censoring Weighting (IPCW): Extension to time-varying covariates}\label{sec:ipcw_timevarying}
Selection bias may arise from both artificial censoring (through deviations from the assigned duration strategy) and natural censoring, and both mechanisms may be informative through their dependence on the evolving history. The weighting principle remains unchanged: the cloned dataset is reweighted to recover the distribution that would have been observed without informative censoring.

For each strategy $g_d$, define by convention the baseline deviation indicator $\Gamma^{(d)}_{0}=0$. Let the compliance event up to decision time $k$ be
\[
\mathcal E^{(d)}_{0}=\Omega,
\qquad
\mathcal E^{(d)}_{k}
=
\Bigl\{\Gamma^{(d)}_{0}=\cdots=\Gamma^{(d)}_{k}=0\Bigr\}
=
\Bigl\{\bar A_{k}\in\mathcal A_{d,k}\Bigr\},
\quad k=1,\dots,K,
\]
and recall that $R^{(d)}_{k-1}=\mathbbm{1}\{\widetilde T^{(d)}>k-1\}$ denotes the strategy-specific at-risk indicator in the cloned dataset.

Under Proposition~\ref{prp:seq-cens-riskset-timedep}
(\nameref{prp:seq-cens-riskset-timedep}), identification is obtained by weighting each clone by the inverse probability of remaining observable under both censoring mechanisms. Define the event of remaining naturally uncensored up to $k$ as
\[
\mathcal N_{0}^{(d)}=\Omega,
\qquad
\mathcal N_{k}^{(d)}=\{K_{1}^{(d)}=0,\dots,K_{k}^{(d)}=0\},
\quad k=1,\dots,K.
\]

On the cloned risk set $\{R^{(d)}_{k-1}=1\}$, define the total censoring weight as

\begin{equation}
\label{eq:ipcw-tot-timedep}
W^{(d)}(k)
=
\frac{1}{
\mathbb P\!\left(\mathcal N_{k}^{(d)},\ \mathcal E^{(d)}_{k}
\ \big|\ \overline X^{(d)}_{k-1},\ \overline X^{(d)}_{c,k-1},\ R^{(d)}_{k-1}=1\right)
}, \qquad \ k=1,\dots,K.
\end{equation}

By the chain rule,
\begin{equation}
\label{eq:gen_weight_discrete_timedep}
\begin{aligned}
\mathbb P\!\Big(\mathcal N_{k}^{(d)},\ \mathcal E^{(d)}_{k}
\ \big|\ \overline X^{(d)}_{k-1},\ \overline X^{(d)}_{c,k-1},\ R^{(d)}_{k-1}=1\Big)
&=
\mathbb P\!\Big(\mathcal E^{(d)}_{k}
\ \big|\ \overline X^{(d)}_{k-1},\ \overline X^{(d)}_{c,k-1},\ R^{(d)}_{k-1}=1\Big) \\
&\quad \times
\mathbb P\!\Big(\mathcal N_{k}^{(d)}
\ \big|\ \overline X^{(d)}_{k-1},\ \overline X^{(d)}_{c,k-1},\ \mathcal E^{(d)}_{k},\ R^{(d)}_{k-1}=1\Big).
\end{aligned}
\end{equation}
which does not require independence between artificial and natural censoring.

Under additional conditional independence assumptions, these conditional models simplify to
\[
\mathbb P\!\left(\mathcal E^{(d)}_{k}
\ \big|\ \overline X^{(d)}_{k-1},\ \overline X^{(d)}_{c,k-1},\ R^{(d)}_{k-1}=1\right)
=
\mathbb P\!\left(\mathcal E^{(d)}_{k}
\ \big|\ \overline X^{(d)}_{k-1},\ R^{(d)}_{k-1}=1\right),
\]
and
\[
\mathbb P\!\left(\mathcal N_{k}^{(d)}
\ \big|\ \overline X^{(d)}_{k-1},\ \overline X^{(d)}_{c,k-1},\ \mathcal E^{(d)}_{k},\ R^{(d)}_{k-1}=1\right)
=
\mathbb P\!\left(\mathcal N_{k}^{(d)}
\ \big|\ \overline X^{(d)}_{c,k-1},\ R^{(d)}_{k-1}=1\right).
\]
These are modeling assumptions rather than identification requirements.

\begin{definition}[IPCW--Kaplan--Meier with time-varying covariates]
\protect\hypertarget{def-ipcwkm-chp4_timedep}{}\label{def-ipcwkm-chp4_timedep}
Let $W_{i}^{(d)}(\cdot)$ be defined as in
Equation~\eqref{eq:ipcw-tot-timedep}. 
The IPCW--Kaplan--Meier estimator of the counterfactual survival function under strategy $g_d$ is defined as in Definition~\ref{def-ipcwkm-chp4}, replacing the weights by $W_{i}^{(d)}$ evaluated at observed event times.
\end{definition}

We show in Proposition~\ref{prp-ipcwkm_multiply} (Appendix~\ref{appendix_chp4}) that the oracle version of this estimator with weights defined as in Equation~\eqref{eq:ipcw-tot-timedep} is strongly consistent and asymptotically normal.

\paragraph{Practical considerations with time-varying confounders.}
When treatment assignment and censoring depend on time-varying covariates, artificial censoring becomes intrinsically linked to the evolving treatment and covariate history. Identification then relies on sequential conditional exchangeability assumptions and estimation requires a counting-process or person--period representation of the data.

Because treatment and censoring models depend on expanding histories, nuisance-model dimensionality increases over time, amplifying the risks of misspecification and practical positivity violations in poorly supported regions of the covariate-history space. These features may exacerbate weight instability and finite-sample variability of CCW--IPCW estimators. An illustrative implementation of the IPCW estimator under independent natural censoring is provided in Algorithm~\ref{alg:ipcw_artificial_natural} (Appendix~\ref{appendix_chp4}) using the Breast Cancer dataset.
\subsection{G-formula}

After cloning and artificial censoring, longitudinal covariates remain post-baseline variables and typically lie on the causal pathway between the treatment strategy and survival. As a result, the covariate trajectory becomes \emph{strategy-dependent}. The G-formula can be written as follows.

\begin{definition}[G-formula for RMST under a strategy]

The RMST contrast is denoted as
\[
\theta_{\mathrm{RMST}}(d_1,d_0)
=
\psi(d_1)-\psi(d_0),
\]

where, for a given treatment strategy \(d\) and for $\mathbb{X}_k=(X_k,X_{c,k})$,

\begin{align*}
\psi(d)
=
\mathbb{E}\!\left[T^{(d)}\right]
&=
\sum_{k \geq 0}
\mathbb{P}\!\left(Y_k^{(d)}=0\right) \\
&=
\sum_{k \geq 0}
\int
\mathbb{P}\!\left(
Y_{k+1}=0
\mid
\bar A_k \in \bar{\mathcal A}_k^{(d)},\,
\bar{\mathbb X}_k=\bar x_k
\right)
\prod_{l \leq k}
f_d(\bar x_l \mid \bar x_{l-1})
\,d\bar x_l .
\end{align*}

with
\[
f_d(\bar x_l \mid \bar x_{l-1})
=
\mathbb{P}\!\left(
\bar{\mathbb X}_l=\bar x_l
\mid
\bar{\mathbb X}_{l-1}=\bar x_{l-1},\,
\bar A_{l-1}\in\bar{\mathcal A}_{l-1}^{(d)},\,
Y_l=0
\right).
\] 
denoting the joint distribution of the covariate trajectories under strategy \(d\).
\end{definition}

This representation highlights that the estimand involves the full distribution of \emph{strategy-specific} covariate trajectories. Consequently, a plug-in implementation of the G-formula cannot rely only on observed covariate histories: each individual contributes at most one realized trajectory, whereas the estimand requires counterfactual trajectories $\bar{\mathbb{X}}^{(d)}$. 

\paragraph{Practical considerations.}
Estimation therefore requires modeling the covariate process under intervention and propagating it over follow-up. In principle, this can be achieved through Robins' parametric G-computation algorithm \citep{robins1986new} or alternative semi- or nonparametric approaches. In our implementation (Algorithm~\ref{alg:gformula_parametric_general} in Appendix~\ref{appendix_chp4}), however, we do not model the full joint longitudinal process as in the classical parametric G-formula; instead, we reconstruct selected time-varying covariate trajectories under each target strategy and combine them with an outcome model fitted on the observed data. More generally, these approaches rely on nuisance models for both the outcome and covariate evolution and may therefore be sensitive to model misspecification.

In practice, covariate trajectories under a given strategy are typically approximated through Monte Carlo simulation by generating multiple plausible trajectories for each individual. Estimation errors may accumulate over time: similarly to IPCW, where errors in censoring probabilities propagate multiplicatively through the weights, inaccuracies in the modeled covariate process can propagate along simulated trajectories and induce substantial bias.

\paragraph{Note.} To our knowledge, no doubly robust AIPCW estimator has been specifically developed for settings with time-varying covariates within a cloning--censoring--weighting framework. More broadly, doubly robust approaches such as targeted minimum loss-based estimation (TMLE) rely on flexible estimation of nuisance functions for the outcome, treatment, and censoring mechanisms, often through data-adaptive procedures such as Super Learner. Dedicated software such as the \href{https://cran.r-project.org/web/packages/ltmle/index.html}{\texttt{ltmle}} R package \citep{ltmle} is available for longitudinal settings with treatments, time-varying covariates, and right censoring. However, similarly to Robins' G-formula \citep{robins1986new}, existing TMLE approaches are generally formulated for discrete-time longitudinal settings and are not designed off the shelf for cloning--censoring--weighting frameworks with grace periods and artificial censoring. Although methods such as \cite{Stitelman2012} provide theoretical guarantees for related settings, no dedicated implementation is available. Substantial methodological and implementation work would therefore still be required for deployment in our framework.

\subsection{Numerical experiments}\label{sec-simulation_tv}

We next evaluate the estimators in a longitudinal setting with time-dependent confounding affected by prior treatment. As in the baseline simulation, we focus on the causal contrast between the static duration strategies $(1,1,1,0,0)$ and $(1,1,1,1,1)$, and assess estimation of the corresponding RMST difference up to $\tau=10$.

\paragraph{Data-generating process.}
Data are generated for $n$ individuals followed over a discrete grid with $K=5$ visits indexed by $k=0,\dots,4$. Within each interval $(k-1,k]$, event times are generated on a continuous scale under a piecewise-constant hazard model. $X_1$ and $X_2$ denote baseline confounders, whereas $X_3$ and $X_4$ are time-varying covariates, with $X_3$ also acting as a time-varying confounder. The full data-generating process is described in Table~\ref{tab:dgp-timedep}.

As in Section~\ref{sec-simulation_baseline}, time-dependent parameters $\gamma(k)$, $\alpha(k)$, and $\beta(k)$ were introduced to promote positivity throughout follow-up. Four scenarios ranging from low to high confounding and from low to highly informative natural censoring are considered (Appendix~\ref{appendix_chp4}, Section~\ref{sec-coef_tv}).

\begin{table}[H]
\centering
\caption{Data-generating process with time-dependent confounding used in the simulation study.}
\label{tab:dgp-timedep}
\renewcommand{\arraystretch}{1.15}
\setlength{\tabcolsep}{5pt}
\small
\begin{tabular}{p{2.5cm} p{12.7cm}}
\toprule
\textbf{Scenario} & \textbf{Data-generating mechanism / parameters} \\
\midrule

\textbf{General DGP}
&
\(
X_1 \sim \mathcal N(0.5,1),\qquad
X_2 \sim \mathcal N(1.0,1),\qquad
X_{3,0}\sim \mathcal N(1,1),\qquad
X_{4,0}\sim \mathcal N(-1,1)
\)
\\[5pt]

&
for \(k=0,\dots,K-2\), for individuals still at risk at the end of interval \((k,k+1]\),
\\[-1pt]

&
\(
X_{3,k+1}
=
\beta_{\mathrm{X3,intercept}}
+
\beta_{\mathrm{X3,prev}}\,X_{3,k}
+
\beta_{\mathrm{X3,Aprev}}\,A_k
+
\varepsilon_{k+1},
\qquad
\varepsilon_{k+1}\sim\mathcal N(0,\sigma_{\mathrm{X3}}^2)
\)
\\[5pt]

&
\(
X_{4,k+1}
=
\epsilon_{\mathrm{X4,intercept}}
+
\epsilon_{\mathrm{X4,prev}}\,X_{4,k}
+
\eta_{k+1},
\qquad
\eta_{k+1}\sim\mathcal N(0,\sigma_{\mathrm{X4}}^2)
\)
\\[5pt]

&
\(
A_0=1,\quad
\mathbb P(A_{k+1}=1 \mid A_k,X_1,X_2,X_{3,k+1})
=
\operatorname{logit}^{-1}\!\Big(
\gamma_{0,\mathrm{long}}
+\gamma_{X1}X_1
+\gamma_{X2}X_2
+\gamma_{X3}(0.1\,k)X_{3,k+1}
+\gamma_{A\mathrm{prev}}A_k
+\gamma_t(k)
\Big)
\)
\\[3pt]

&
\(
\gamma_t(k)=
\left\{
\begin{array}{ll}
\gamma_{t,\mathrm{pre}}, & k \le 2,\\
\gamma_{t,\mathrm{post}}, & k > 2
\end{array}
\right.
\)
\\[5pt]

&
\(
\lambda_T(k\mid A_k,X_1,X_2,X_{3,k},X_{4,k})
=
\exp\!\Big(
\alpha_0
+\alpha_A A_k
+\alpha_{X1}X_1
+\alpha_{X2}X_2
+\alpha_{X3}X_{3,k}
+\alpha_{X4}X_{4,k}
+\alpha_t(k)
\Big)
\)
\\[3pt]

&
\(
\alpha_t(k)=
\left\{
\begin{array}{ll}
\alpha_{t,\mathrm{pre}}, & k \le 2,\\
\alpha_{t,\mathrm{post}}, & k > 2
\end{array}
\right.
\)
\\[5pt]

&
\(
\lambda_C(k\mid A_k,X_1,X_2,X_{4,k})
=
\beta\,
\exp\!\Big(
\beta_0
+\beta_A A_k
+\beta_{X1}X_1
+\beta_{X2}X_2
+\beta_{X4}X_{4,k}
+\beta_t(k)
\Big)
\)
\\[3pt]

&
\(
\beta_t(k)=
\left\{
\begin{array}{ll}
\beta_{t,\mathrm{pre}}, & k \le 3,\\
\beta_{t,\mathrm{post}}, & k > 3
\end{array}
\right.
\)
\\[5pt]

&
Event and censoring times are generated on the continuous scale within each interval,
recorded as \(T\in(0,\infty)\) and \(C\in(0,\infty)\), with observed follow-up
\(T_{\mathrm{obs}}=\min(T,C)\), administratively truncated at \(\tau\).
\\

\bottomrule
\end{tabular}
\end{table}
From the simulated longitudinal treatment trajectories, we identify individuals naturally following either $(1,1,1,0,0)$ or $(1,1,1,1,1)$. Oracle survival curves under both strategies are additionally computed from the known data-generating mechanism by integrating over the stochastic evolution of the time-dependent confounder $X_3$. These oracle quantities provide the ground truth RMST values against which estimators are evaluated.

\paragraph{Methods.}

We considered a broad range of estimators. Before cloning, we applied a naive difference--in--mean estimator and a G-formula based on observed effective treatment duration. After cloning, we considered a standard Kaplan--Meier estimator, an IPCW Kaplan--Meier accounting only for artificial censoring using a pooled logistic model, a G-formula based on predicted covariate trajectories (see pseudo-code and toy example in Appendix~\ref{sec-implementation} and \ref{sec-toy_example}), and an IPCW Kaplan--Meier separately adjusting for natural censoring with a pooled logistic model and artificial censoring with a piecewise exponential model estimated through Poisson regression.

To evaluate model misspecification, we also considered a single regression model, either pooled logistic or piecewise exponential, for the overall censoring process. In this longitudinal setting, both treatment and outcome models depend on the time-varying confounder $X_{3,k}$. We additionally assessed the impact of ignoring the time-dependent structure.

To illustrate implementation, see the pseudo-code in Section~\ref{sec-implementation} and toy example in Section~\ref{sec-toy_example}, Appendix~\ref{appendix_chp4}. \footnote{To our knowledge, no available R package directly accommodates the full combination of features considered here, namely cloned treatment duration strategies, artificial censoring due to deviation from the assigned strategy, natural censoring, longitudinal covariates, and RMST estimation. }

Results are summarized using boxplots across sample sizes ($n=500$, $1000$, $2000$, $4000$, $8000$), each based on \(50\) independent replicates. Appendix Table~\ref{tab:rmse_tv} reports the bias, ESE, MSE, and RMSE in Scenario~4 (strong confounding and dependent censoring).

\begin{figure}[H]
    \centering
    \includegraphics[width=1\linewidth]{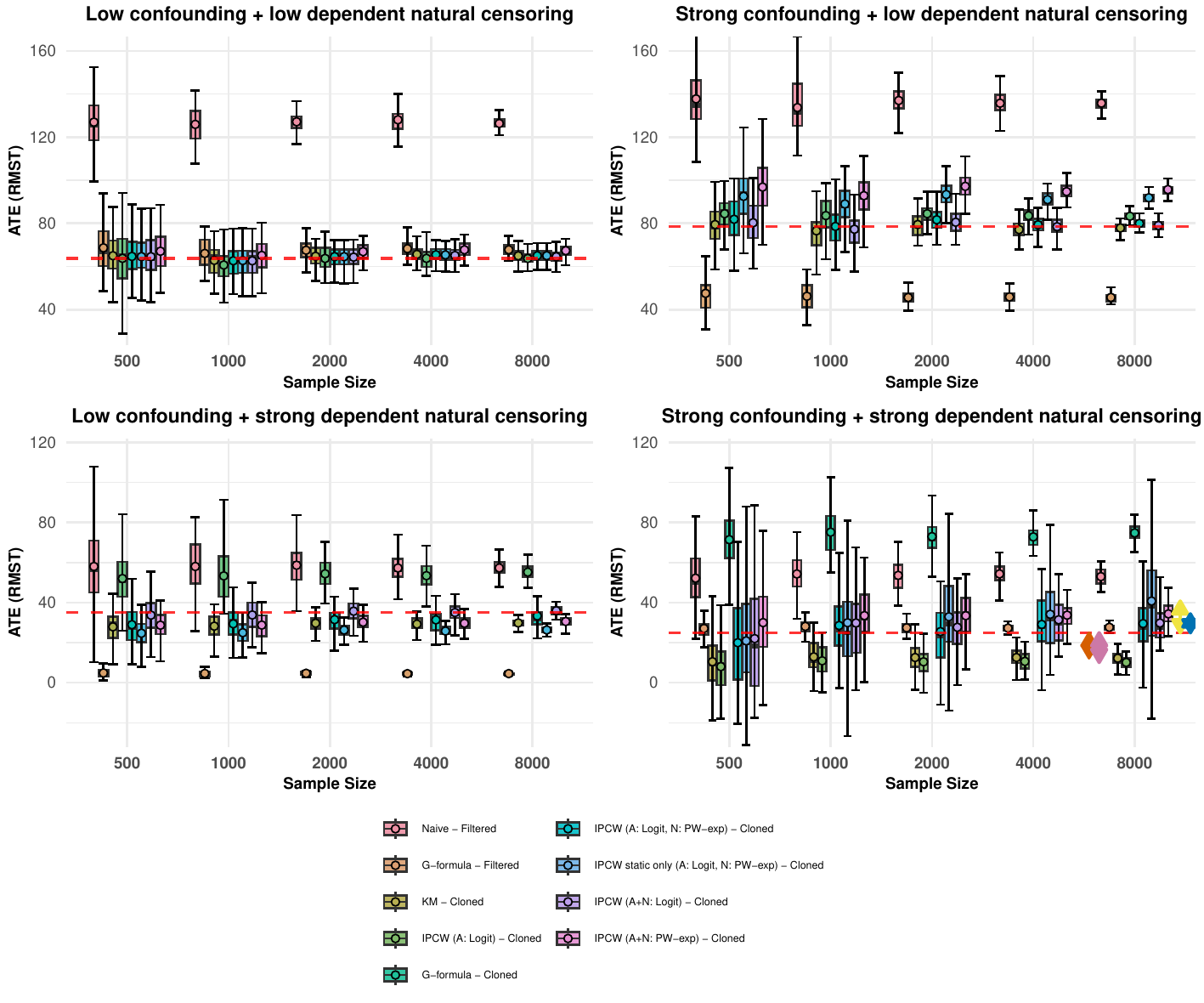}
    \caption{RMST difference estimates after cloning and artificial censoring under four simulation settings with time-dependent covariates, defined by combinations of low or high confounding and low or high dependent natural censoring. In estimator labels, A and N indicate the models used for artificial and natural censoring, respectively.}
    \label{fig:ate_tv}
\end{figure}
As in the previous settings, all estimators applied before cloning are biased because they cannot account for immortal time bias, as illustrated by "G-formula -- Filtered" (\textcolor{red_cbf2}{red arrow}). Similarly, adjusting only for artificial censoring in IPCW leads to substantial bias, whose magnitude depends on the informativeness of natural censoring, as highlighted by the \textcolor{purple_cbf2}{purple arrow} (IPCW (A: Logit) - Cloned).

Modeling the two censoring processes with a single pooled logistic model (IPCW (A+N: Logit) - Cloned) remains competitive, often adequately capturing the overall probability of remaining uncensored even when the two mechanisms arise from different models. Although relying on a single model is risky from a modeling perspective, it may reduce variance when the approximation is adequate. By contrast, the exponential specification (IPCW (A+N: PW--exp) - Cloned) is systematically misspecified and exhibits substantial bias, illustrated by the \textcolor{blue_cbf2}{blue arrow}. IPCW adjustment based only on baseline confounders (IPCW static only (A: Logit, N:PW--exp) - Cloned) is also biased, corresponding to the \textcolor{yellow_cbf2}{yellow arrow}. The consistent estimator in these simulations is IPCW adjusting for artificial censoring with a pooled logistic model and natural censoring with a piecewise exponential model (IPCW (A: Logit, N: PW--exp) - Cloned).

The main difference from the baseline-confounding scenarios is the marked bias of the G-formula in more challenging settings (G-formula - Cloned, right left panel), especially near positivity violations. Because the estimator must predict the full covariate trajectory under each treatment strategy, sequential prediction errors accumulate over time and induce increasing bias as the data-generating mechanism becomes more complex. These simulations also highlight the impact of IPCW misspecification. In complex time-varying settings, IPCW methods tend to outperform G-formula approaches because they rely only on observed covariate trajectories rather than simulated counterfactual ones.

Overall, when only baseline confounders are considered, many approaches are available, including weighting-based, outcome-regression, and doubly robust methods. With time-varying confounders, however, IPCW-based approaches appear to provide the best compromise between implementation complexity and empirical performance, although they remain sensitive to censoring-model misspecification. In the next section, we apply these methods to real-world breast cancer data to evaluate treatment duration effects.

\section{Effect of tamoxifen duration in breast cancer cohort}\label{sec-application}

\section{Conclusion}

In this work, we developed a methodological framework for evaluating treatment duration strategies in observational data. This included formalizing the causal assumptions underlying the cloning, censoring, and weighting (CCW) approach and comparing several estimation strategies through simulation studies. We then applied this framework to the Breast Cancer cohort to investigate the effect of extending treatment duration from 2 to 5 years. 


Our results illustrate the methodological gap between baseline and time-varying settings. Although baseline analyses are simpler, failing to account for time-varying confounding may induce substantial bias. Conversely, incorporating time-dependent covariates increases model complexity, variability, and reliance on stronger modeling assumptions.

Another critical aspect concerns the assumptions underlying the CCW framework, especially consistency under admissible treatment strategies. Assuming that relaxing the treatment rule does not alter the target causal estimand is strong and difficult to verify in practice. Violations may lead to targeting a different causal quantity than intended.

From a practical perspective, even rich cohort data raise substantial challenges, including complex preprocessing, imprecise event timing, and missing data. Extending the methods considered here to more realistic settings, particularly with missing data, remains an important area for future research.

Finally, practitioners face many modeling choices, including missing-data handling, estimator selection, and nuisance-model specification. No single approach can currently be uniformly recommended. Instead, our results support a sensitivity-analysis perspective, where conclusions rely on consistency across plausible specifications.

Overall, this work highlights both the potential and the limitations of causal inference methods for treatment duration in observational data and underscores the need for further methodological developments addressing positivity violations, model uncertainty, and complex longitudinal data structures.

\newpage

\subsubsection*{Acknowledgments}

This research was supported by Sanofi and Inria through the CIFRE program (Convention Industrielle de Formation par la Recherche).


\vspace*{1pc}
\subsubsection*{Conflict of Interest}
Charlotte Voinot and Bernard Sebastien are Sanofi employees and may hold shares and/or stock options in the company.  

\newpage

\bibliographystyle{apalike}
\bibliography{bibliography}
 \newpage
\vspace{2cm}
\begin{center}
    {\Large \textbf{Appendix}}
\end{center}

\begin{appendix}\label{appendix_chp4}

\begin{proposition}[Consistency of the IPCW--Kaplan--Meier estimator with known censoring weights]
\protect\hypertarget{prp-ipcwkm_multiply}{}\label{prp-ipcwkm_multiply}

Under Proposition~\ref{prp:seq-cens-riskset-timedep} and Assumptions~\ref{ass:cens-cond-timedep} (\nameref{ass:cens-cond-timedep}), \ref{ass:cens-cond-pos-timedep} (\nameref{ass:cens-cond-pos-timedep}),
for all $t \in [0,\tau]$, the oracle IPCW--Kaplan--Meier estimator
$S^{*,(d)}_{\mathrm{IPCW}}(t)$ defined in Definition~\ref{def-ipcwkm-chp4_timedep}, using the true censoring weights for both censoring process $W^{(d)}(t)$ defined in Equation~\ref{eq:ipcw-tot-timedep}, is a strongly consistent and asymptotically normal estimator of the counterfactual survival function $S^{(d)}(t)$.
\end{proposition}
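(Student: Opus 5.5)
The argument treats the oracle IPCW--Kaplan--Meier estimator as a product of ratios of empirical means. Each ratio converges almost surely to the discrete hazard of $T^{(d)}$. The delta method then gives asymptotic normality. Fix the finite grid of event times $t_1<\dots<t_J\le\tau$; in the discrete-time formulation this grid may be taken as the visit-defined intervals. Write
\[
\frac{D_j^{\mathrm{IPCW}}(d)}{N_j^{\mathrm{IPCW}}(d)}=\frac{n^{-1}\sum_{i} W_i^{(d)}(t_j)\Delta_i^{(d)}\mathbb I(\widetilde T_i^{(d)}=t_j)}{n^{-1}\sum_i W_i^{(d)}(t_j)\mathbb I(\widetilde T_i^{(d)}\ge t_j)} .
\]
Since the weights are the true ones, each summand is an i.i.d. function of the individual's observed data. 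Both clones of individual $i$ are functions of the same record, so independence across $i$ is preserved. By positivity, Proposition~\ref{prp:seq-cens-riskset-timedep}(iii) and Assumption~\ref{ass:cens-cond-pos-timedep}, each factor in the weight is at most $\epsilon^{-1}$. There are at most $K$ factors, so the weights are bounded by $\epsilon^{-2K}$. The strong law of large numbers therefore applies to numerator and denominator separately, and the central limit theorem applies to them jointly.

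The core step identifies the limits. I will show that
\[
\mathbb E\bigl[W^{(d)}(t_j)\mathbb I(\widetilde T^{(d)}\ge t_j)\bigr]=\mathbb P(T^{(d)}\ge t_j)
\]
and
\[
\mathbb E\bigl[W^{(d)}(t_j)\Delta^{(d)}\mathbb I(\widetilde T^{(d)}=t_j)\bigr]=\mathbb P(T^{(d)}=t_j).
\]
Their ratio is then the counterfactual discrete hazard $h^{(d)}(t_j)$, and $\prod_j(1-h^{(d)}(t_j))=S^{(d)}(t)$.

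To prove these identities:
\begin{itemize}
\item On the event $\{\mathcal E^{(d)}_k,\mathcal N^{(d)}_k\}$ the observed treatment history lies in $\mathcal A_{d,k}$. The admissible consistency assumptions \ref{ass:sutva-treat-timedep}--\ref{ass:sutva-cov-timedep} then let me replace $Y_\ell, K_\ell, X_\ell$ by their counterfactual versions under $g_d$.
\item Next, write $\mathbb I(\widetilde T^{(d)}\ge t_j)=\mathbb I(T^{(d)}\ge t_j)\prod_{l\le k}(1-\Gamma^{(d)}_l)(1-K^{(d)}_l)$, with $k$ the last visit before $t_j$.
\item Use the chain-rule factorization \eqref{eq:gen_weight_discrete_timedep} of the weight into per-visit conditional uncensoring probabilities.
\item Finally, iterate conditional expectations backward from visit $k$ to visit $1$. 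At each visit, condition on the history $(\overline X^{(d)}_{l-1},\overline X^{(d)}_{c,l-1},R^{(d)}_{l-1}=1)$. Conditional exchangeability of artificial and natural censoring, items (i)--(ii) of Proposition~\ref{prp:seq-cens-riskset-timedep} together with Assumption~\ref{ass:cens-cond-timedep}, makes the expectation of $\mathbb I(\Gamma_l^{(d)}=K_l^{(d)}=0)$ times the future-outcome functional factor. The uncensoring indicator's conditional mean then cancels exactly the corresponding factor of $W^{(d)}$.
\end{itemize}
This is the standard telescoping argument for the IPCW identity in sequential (monotone coarsening) form.

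This telescoping is the main obstacle. The weight in \eqref{eq:ipcw-tot-timedep} is written as a single joint conditional probability given the history up to $k-1$. The cancellation, however, needs the sequential product of one-step probabilities, each conditioned on its own time-$l$ history. I would first argue, via \eqref{eq:gen_weight_discrete_timedep} and sequential conditioning, that the oracle weight is to be read as
\[
\prod_{l\le k}\mathbb P\bigl(\Gamma_l^{(d)}=K_l^{(d)}=0\mid \overline X^{(d)}_{l-1},\overline X^{(d)}_{c,l-1},R^{(d)}_{l-1}=1\bigr)^{-1}.
\]
I would then check that the exchangeability statements, stated for $T^{(d)}$, also cover the future covariates entering later conditioning sets. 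Assumption~\ref{ass:cens-cond-timedep} and \ref{ass:texch-timedep} state this explicitly for $Y^{(d)}_\ell$ and $X^{(d)}_\ell$, $\ell\ge k+1$.

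Given the identities, strong consistency follows pointwise in $t$ from the SLLN and continuity of the map $(a_j,b_j)_j\mapsto\prod_{t_j\le t}(1-a_j/b_j)$. This map is continuous wherever $b_j=\mathbb P(T^{(d)}\ge t_j)>0$, which positivity and $t\le\tau$ guarantee. Uniformity over $[0,\tau]$ is automatic because the grid is finite.

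For asymptotic normality, apply the multivariate CLT to the $2J$-vector of weighted means. Its second moments are finite because the weights are bounded. The delta method applied to the same product map then gives $\sqrt n\,(S^{*,(d)}_{\mathrm{IPCW}}(t)-S^{(d)}(t))\to\mathcal N(0,\sigma^2(t))$, with $\sigma^2(t)$ given explicitly by the gradient of the product map and the covariance of the weighted counts. Equivalently, it can be expressed through the influence function $-S^{(d)}(t)\sum_{t_j\le t}\frac{W(t_j)\{\Delta\mathbb I(\widetilde T=t_j)-h^{(d)}(t_j)\mathbb I(\widetilde T\ge t_j)\}}{\mathbb P(T^{(d)}\ge t_j)(1-h^{(d)}(t_j))}$.
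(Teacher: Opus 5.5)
\textbf{Comparison with the paper's proof.} Your overall structure is the paper's: show that the weighted event and risk-set means converge to $\mathbb P(T^{(d)}=t_j)$ and $\mathbb P(T^{(d)}\ge t_j)$, then apply the SLLN and the product-limit map. The difference is how you obtain the two identities.

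The paper does it in one shot. It conditions once on $(X^{(d)}_c,X)$ and uses an independence between the \emph{cumulative} uncensoring event $\{\mathcal N_k,\mathcal E^{(d)}_k\}$ and $T^{(d)}$ given these covariates to cancel the weight. It then invokes consistency. In effect, it treats the covariate histories as if they were baseline.

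You instead take the weight to be a cumulative product of one-step uncensoring probabilities and telescope backward through the risk sets. Your route is the one that actually connects the one-step, risk-set conditions of Proposition~\ref{prp:seq-cens-riskset-timedep} to the weights. It also matches the product weights computed in Algorithm~\ref{alg:ipcw_artificial_natural}. The paper's route is shorter, but it needs a joint independence given the whole covariate history, which that proposition does not state. For normality, the paper only cites martingale arguments. Your finite-grid CLT plus delta method is elementary and gives an explicit, correct influence function.

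\textbf{Points to tighten.}

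(a) The product form cannot be derived from \eqref{eq:ipcw-tot-timedep} via \eqref{eq:gen_weight_discrete_timedep}. That display conditions on $R^{(d)}_{k-1}=1$, so it is literally a one-step probability. Your weight is therefore $\prod_{l\le k}W^{(d)}(l)$, a necessary redefinition that you should state as such.

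(b) The check on future covariates is unnecessary. Set $U_l=\mathbb I(\Gamma^{(d)}_l=K^{(d)}_l=0)$, $\mathcal H_l=(\overline X^{(d)}_l,\overline X^{(d)}_{c,l})$, $\pi_l=\mathbb P(U_l=1\mid\mathcal H_{l-1},R^{(d)}_{l-1}=1)$ and $Z=\mathbb I(T^{(d)}\ge t_j)$. The tower property alone reduces the time-$l$ step of the backward recursion to $\pi_l\,\mathbb E[Z\mid\mathcal H_{l-1},R^{(d)}_{l-1}=1,U_l=1]$. Exchangeability is then used only to drop $U_l=1$. This is fortunate, since no assumption covers future $X^{(d)}_{c,\ell}$. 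What you do need is $(\Gamma^{(d)}_l,K^{(d)}_l)\perp\!\!\!\perp T^{(d)}$ jointly given the combined history $\mathcal H_{l-1}$. Items (i)--(ii), stated with separate conditioning sets, do not literally imply this.

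(c) The bound $\epsilon^{-2K}$ holds for the factorized weights displayed after \eqref{eq:gen_weight_discrete_timedep}, not automatically for the joint one-step probability. The SLLN does not need the bound, because the identity itself shows the nonnegative summands are integrable; the CLT does need it.

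Finally, $\mathbb P(T^{(d)}\ge t_j)>0$ is an outcome-support condition, e.g.\ $S^{(d)}(\tau)>0$. It does not follow from censoring positivity.
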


\begin{proof}
For $k\le K$,
\begin{align*}
\mathbb P(T^{(d)}\ge k)
&=
\mathbb E\!\left[
\frac{
\mathbb E [\mathbb I(\mathcal N_{k},\ \mathcal E_{k}^{(d) }\mid X^{(d)}_{c},X)
]}{
\mathbb P(\mathcal N_{k},\ \mathcal E_{k}^{(d)}\mid X^{(d)}_{c},X)
}
\,
\mathbb I(T^{(d)}\ge k)
\right] \\[0.2cm]
&=
\mathbb E\!\left[
\frac{
\mathbb I(\mathcal N_{k},\ \mathcal E_{k}^{(d)})
}{
\mathbb P(\mathcal N_{k},\ \mathcal E_{k}^{(d)}\mid X^{(d)}_{c},X)
}
\,
\mathbb I(T^{(d)}\ge k)
\right] \\[0.2cm]
&=
\mathbb E\!\left[
\frac{
1
}{
\mathbb P(\mathcal N_{k},\ \mathcal E_{k}^{(d)}\mid X^{(d)}_{c},X)
}
\,
\mathbb I(T^{(d)} \wedge C \wedge G^{(d)} \geq t_j)
\right] \\[0.2cm]
&=
\mathbb E\!\left[
\frac{1}{
\mathbb P(\mathcal N_{k},\ \mathcal E_{k}^{(d)}\mid X^{(d)}_{c},X)
}
\,
\mathbb I(\widetilde T^{(d)}\ge k)
\right],
\end{align*}

Similarly,
\begin{align*}
\mathbb P(T^{(s)}=k)
&=
\mathbb E\!\left[
\frac{
\mathbb E [\mathbb I(\mathcal N_{k},\ \mathcal E_{k}^{(d) }\mid X^{(d)}_{c},X)
]}{
\mathbb P(\mathcal N_{k},\ \mathcal E_{k}^{(d)}\mid X^{(d)}_{c},X)
}
\,
\mathbb I(T^{(d)}=k)
\right] \\[0.2cm]
&=
\mathbb E\!\left[
\frac{
\mathbb I(\mathcal N_{k},\ \mathcal E_{k}^{(d)})\mathbb I(T^{(d)}=k)
}{
\mathbb P(\mathcal N_{k},\ \mathcal E_{k}^{(d)}\mid X^{(d)}_{c},X)
}
\right] \\[0.2cm]
&=
\mathbb E\!\left[
\frac{\Delta^{(d)}_{k}}{
\mathbb P(\mathcal N_{k},\ \mathcal E_{k}^{(d)}\mid X^{(d)}_{c},X)
}
\,
\mathbb I(\tilde{T}^{(d)}=k)
\right],
\end{align*}
where $\Delta^{(d)}_{k}$ is the event indicator at time $k$ in the cloned
dataset.

Therefore,
\[
\widehat S^{(d)}_{\mathrm{IPCW}}(k)
\;\xrightarrow{\text{a.s.}}\;
\prod_{j\le k}
\left(
1-\frac{\mathbb P(T^{(d}=j)}{\mathbb P(T^{(d)}\ge j)}
\right)
=
S^{(d)}(k),
\]
which proves strong consistency. Asymptotic normality follows from
standard martingale arguments for IPCW Kaplan--Meier estimators.
\end{proof}
\newpage

\section{Implementation}\label{sec-implementation}

The package \href{https://genentech.github.io/survivalCCW/index.html}{\texttt{survivalCCW}} \citep{survivalCCW} provides a useful reference implementation for clone--censor--weight analyses in survival settings, but it is intentionally a light-weight tool designed to streamline a fairly standard CCW workflow rather than a flexible framework for more general longitudinal settings. In particular, it does not directly accommodate our setup with time-varying covariates, more complex censoring structures or natural informative censoring, and its weighting procedure is geared toward the package's built-in CCW implementation rather than user-defined longitudinal extensions. The package \href{https://cran.r-project.org/package=gfoRmula}{\texttt{gfoRmula}} \citep{McGrath2020}, in turn, implements the classical parametric G-formula for longitudinal interventions in the standard longitudinal setting, rather than within a cloning--censoring framework. It should therefore be viewed here as a methodological benchmark rather than as a direct implementation of our procedure. In particular, its use after cloning with artificial censoring is not straightforward in its current form, as such an adaptation would require a non-trivial reformulation of the observed data structure, the censoring process, and the target estimand. This is especially true in time-dependent settings where censoring may itself depend on the evolving covariate and treatment history. 

\newpage
\newgeometry{top=0.5cm,bottom=0.5cm,left=0.5cm,right=0.5cm}
\SetAlFnt{\footnotesize}
\SetKwInOut{KwIn}{Input}
\SetKwInOut{KwOut}{Output}

\refstepcounter{algorithmct}
\begin{tcolorbox}[
  breakable,
  enhanced,
  colback=white,
  colframe=black!15,
  boxrule=0.8pt,
  sharp corners,
  title={Algorithm \thealgorithmct: IPCW estimation for artificial and natural censoring (no time interactions or stabilization): example},
  fonttitle=\bfseries,
  coltitle=black
]
\label{alg:ipcw_artificial_natural}
\DontPrintSemicolon

\KwIn{
  Cloned dataset \texttt{data} (\(n=2676\) clones from \(n=1338\) individuals) with variables: \texttt{ID\_clone}, \texttt{A} (binary strategy), \texttt{T\_obs}, \texttt{status}, \texttt{art.censor}; \\
  Visit grids \texttt{visits1} and \texttt{visits0}, defining the visit-based intervals in each arm; \\
  Covariates for the artificial censoring model: \texttt{X\_art}; Covariates for the natural censoring model: \texttt{X\_nat}
}
\KwOut{Weighted KM, RMST estimates by arm, RMST difference}

\textbf{Step 1: Construct the visit-based long-format datasets} \;

\ForEach{arm \(a \in \{0,1\}\)}{
  Restrict to clones with \(A=a\)\;

  Let \(0=v_0^{(a)} < v_1^{(a)} < \cdots < v_{K_a}^{(a)}\) denote the visit times in \texttt{visits$a$}\;

  Split the interval \([0,\min(T_{\mathrm{obs}},\tau)]\) at each visit time in \texttt{visits$a$} using \texttt{survSplit} (package \texttt{\href{https://cran.r-project.org/web/packages/survival/index.html}{survival}} \citep{Therneau2001}). This defines the visit-based intervals \([v_{j-1}^{(a)}, v_j^{(a)})\) on which both censoring models are specified\;

  For each interval, define: the event indicator, the artificial censoring indicator, the natural censoring indicator, the visit index, the covariates : the time-varying covariates using the 6-month look-back window before each visit and also baseline covariates\;

  Merge the resulting long-format datasets by \((\texttt{ID\_clone}, T_{\mathrm{start}}, T_{\mathrm{stop}})\)\;
}

Let \(\mathcal D_{\mathrm{visit}}^{(1)}\) and \(\mathcal D_{\mathrm{visit}}^{(0)}\) denote the visit-based long-format datasets in arms \(a=1\) and \(a=0\), respectively.

\textbf{Step 2: Estimate artificial censoring weights on the visit-based grid} \;

\ForEach{arm-specific dataset \(\mathcal D_{\mathrm{visit}}^{(a)}\)}{

  Fit a pooled logistic regression model:
  \[
    \Pr\!\bigl(\mathrm{art.censor}_{ij}=1 \mid \bar H_{ij}\bigr)
    =
    \mathrm{logit}^{-1}
    \Bigl(
    \alpha_j + \theta^\top X^{\mathrm{art}}_{ij}
    \Bigr),
  \]
  where \(\alpha_j\) is a visit-specific intercept, and \(X^{\mathrm{art}}_{ij} = (X_i, X_{ij})\) collects baseline and time-varying covariates for individual \(i\) in visit interval \(j\)\;

  For each individual \(i\) and visit interval \(j\), compute
  \[
    \hat p^{\mathrm{art}}_{ij}
    =
    \Pr\!\bigl(\mathrm{art.censor}_{ij}=1 \mid \bar H_{ij}\bigr),
    \qquad
    \hat p^{\mathrm{stay,art}}_{ij}
    =
    1-\hat p^{\mathrm{art}}_{ij}.
  \]

  Compute the estimated probability of remaining artificially uncensored up to the end of interval \(j\):
  \[
    \hat G_i^{\mathrm{art}}(v_j^{(a)})
    =
    \prod_{k \le j} \hat p^{\mathrm{stay,art}}_{ik}.
  \]

  Define the artificial censoring weight:
  \[
    \hat w_i^{\mathrm{art}}(v_j^{(a)})
    =
    \frac{1}{\hat G_i^{\mathrm{art}}(v_j^{(a)})}.
  \]
}

Because artificial censoring is only assessed at visit times, \(\hat w_i^{\mathrm{art}}(t)\) is constant within each visit interval\;

\textbf{Step 3: Estimate natural censoring weights on the visit-based grid} \;

\ForEach{arm-specific dataset \(\mathcal D_{\mathrm{visit}}^{(a)}\)}{

  Fit a piecewise exponential model for natural censoring by first fitting interval-specific hazards:
  \[
    \lambda_{ij}^{\mathrm{nat}}
    =
    \exp\!\Bigl(
    \alpha_j + \theta^\top X^{\mathrm{nat}}_{ij}
    \Bigr),
  \]
  where \(j\) indexes the visit interval \([v_{j-1}^{(a)}, v_j^{(a)})\), and \(X^{\mathrm{nat}}_{ij} = (X_i, X_{ij})\), so that the natural censoring hazard is assumed constant within each visit interval\;

  For each complete visit interval \(j\), define the estimated probability of remaining naturally uncensored over the whole interval:
  \[
    \hat p^{\mathrm{stay,nat}}_{ij}
    =
    \exp\!\Bigl(
      -\hat\lambda_{ij}^{\mathrm{nat}}
      \bigl(v_j^{(a)}-v_{j-1}^{(a)}\bigr)
    \Bigr).
  \]

  More generally, for a time \(t \in [v_{j-1}^{(a)}, v_j^{(a)})\), define the estimated probability of remaining naturally uncensored from the beginning of interval \(j\) up to time \(t\):
  \[
    \hat p^{\mathrm{stay,nat}}_{ij}(t)
    =
    \exp\!\Bigl(
      -\hat\lambda_{ij}^{\mathrm{nat}}
      \bigl(t-v_{j-1}^{(a)}\bigr)
    \Bigr).
  \]
}

\textbf{Step 4: Refine the data at observed event times for weight evaluation} \;

\ForEach{arm \(a \in \{0,1\}\)}{
  Starting from the visit-based dataset \(\mathcal D_{\mathrm{visit}}^{(a)}\), define a refined dataset \(\mathcal D_{\mathrm{eval}}^{(a)}\) by introducing the observed event times in arm \(a\) as additional cut points for weight evaluation\;
}

This additional splitting is used only to evaluate the censoring weights at the exact event times needed for the weighted Kaplan--Meier estimator. It does \emph{not} redefine the censoring models: the pooled logistic model and the piecewise exponential model remain indexed by the original visit intervals through the visit index\;

\textbf{Step 5: Evaluate the censoring weights at each event time} \;

\ForEach{individual \(i\) and refined interval ending at event time \(t\)}{

  Let \(j\) denote the original visit interval such that \(t \in [v_{j-1}^{(a)}, v_j^{(a)})\)\;

  The artificial censoring weight at time \(t\) is
  \[
    \hat w_i^{\mathrm{art}}(t)
    =
    \hat w_i^{\mathrm{art}}(v_j^{(a)}),
  \]
  that is, the value attached to the corresponding visit interval\;

  The natural censoring survival probability at time \(t\) is obtained by combining the contributions from all completed visit intervals before \(t\) with the partial contribution from the current visit interval:
  \[
    \hat G_i^{\mathrm{nat}}(t)
    =
    \prod_{\ell < j} \hat p^{\mathrm{stay,nat}}_{i\ell}
    \times
    \hat p^{\mathrm{stay,nat}}_{ij}(t).
  \]

  Define the natural censoring weight:
  \[
    \hat w_i^{\mathrm{nat}}(t)
    =
    \frac{1}{\hat G_i^{\mathrm{nat}}(t)}.
  \]

  Define the final IPCW weight:
  \[
    \hat w_i(t)
    =
    \hat w_i^{\mathrm{art}}(t)\times \hat w_i^{\mathrm{nat}}(t).
  \]
}

\textbf{Step 6: Weighted survival analysis} \;

Stack both refined arm-specific datasets. For each arm \(a \in \{0,1\}\), let
\(t_{(1)}<\dots<t_{(m)}\) denote the ordered distinct event times observed in arm \(a\)\;

At each event time \(t_k\), compute the weighted number of events and the weighted risk set size:
\[
  d_{a,k}
  =
  \sum_{i:\,A_i=a} \mathbf{1}\!\left(T_{i,\mathrm{stop}}=t_k,\ \delta_i=1\right)\hat w_i(t_k),
  \qquad
  Y_{a,k}
  =
  \sum_{i:\,A_i=a} \mathbf{1}\!\left(T_{i,\mathrm{start}}<t_k \le T_{i,\mathrm{stop}}\right)\hat w_i(t_k),
\]

The weighted Kaplan--Meier estimator in arm \(a\) is
\[
  \widehat S^{(a)}(t)
  =
  \prod_{t_{(k)}\le t}
  \left(1-\frac{d_{a,k}}{Y_{a,k}}\right).
\]

Compute the restricted mean survival time up to \(\tau\):
\[
  \mathrm{RMST}^{(a)}
  =
  \int_0^\tau \widehat S^{(a)}(t)\,\mathrm dt.
\]

\Return{RMST difference \(\mathrm{RMST}^{(1)}-\mathrm{RMST}^{(0)}\), arm-specific RMSTs, and the weighted long-format dataset.}

\end{tcolorbox}
\setcounter{AlgoLine}{0}
\refstepcounter{algorithmct}
\begin{tcolorbox}[
  breakable,
  enhanced,
  colback=white,
  colframe=black!15,
  boxrule=0.8pt,
  sharp corners,
  title={Algorithm \thealgorithmct: Parametric g-computation for RMST under static treatment strategies},
  fonttitle=\bfseries,
  coltitle=black
]
\label{alg:gformula_parametric_general}
\DontPrintSemicolon

\KwIn{
Observed longitudinal data; cloned data; target strategies $g_0$ and $g_1$; horizon $\tau$
}
\KwOut{
Estimated RMST under $g_0$ and $g_1$, and RMST difference
}
\textbf{Step 1: Fit pooled transition models for time-varying covariates} \;

For each time-varying covariate:\;

\Indp
- Construct a transition dataset with one row per individual and per transition \(k \to k+1\), \(k=0,\dots,K-2\)\;

- Fit a pooled parametric transition model of \(X_{k+1}\) given the observed history at time \(k\)\;

- Store the fitted model and residual standard deviation\;

\Indm

\textbf{Step 2: Fit strategy-specific outcome models on the observed data} \;

Within each strategy arm, fit a Cox proportional hazards model in counting-process form, using the observed longitudinal covariate history as predictors. The model includes baseline covariates and time-varying covariates updated at visit times, together with a baseline hazard stratified by visit interval:
\[
\lambda_a(t \mid X_i, \bar X_i(t))
=
\lambda_{0,a}^{(j)}(t)
\exp\!\Bigl(
\beta_a^\top X_i
+
\gamma_a^\top \bar X_i(t)
\Bigr),
\qquad t \in (v_{j-1},v_j].
\]
This specification allows the baseline hazard to vary flexibly across visit-defined intervals while accounting for time-varying covariates.

\textbf{Step 3: Reconstruct strategy-specific covariate trajectories} \;

For each target strategy $g_a$, the post-baseline covariate history is reconstructed sequentially from the fitted transition model through an averaged stochastic approximation of the next-step prediction.
For each target strategy $g_a$, we reconstructed the covariate history that would arise under $g_a$ for each individual.\;

\textbf{Step 4: Predict counterfactual survival under each strategy} \;

For each individual and each strategy $g_a$, plug the reconstructed covariate trajectory under $g_a$ into the fitted outcome model to predict the corresponding survival curve
\[
\widehat S_i^{(a)}(t), \qquad 0 \le t \le \tau.
\]

\textbf{Step 5: Compute individual and marginal RMST} \;

For each individual and strategy $g_a$, compute
\[
\widehat{\mathrm{RMST}}_i^{(a)}
=
\int_0^\tau \widehat S_i^{(a)}(t)\,dt.
\]

Estimate the marginal RMST under strategy $g_a$ by empirical averaging:
\[
\widehat{\mathrm{RMST}}^{(a)}
=
\frac{1}{n}\sum_{i=1}^n \widehat{\mathrm{RMST}}_i^{(a)}.
\]

\textbf{Step 6: Compute the contrast of interest} \;

Compute
\[
\widehat\theta_{\mathrm{RMST}}
=
\widehat{\mathrm{RMST}}^{(1)}-\widehat{\mathrm{RMST}}^{(0)}.
\]

\Return{$\widehat{\mathrm{RMST}}^{(0)}$, $\widehat{\mathrm{RMST}}^{(1)}$, and $\widehat\theta_{\mathrm{RMST}}$.}

\end{tcolorbox}

\restoregeometry
Unlike the classical parametric G-formula of Robins \citep{robins1986new}, which models the full longitudinal data-generating process under a target intervention (including the joint evolution of time-varying covariates, treatment, censoring, and outcome), our approach focuses on reconstructing unobserved time-varying covariate trajectories under each target strategy and combines them with a semi-parametric outcome model fitted on the observed data to obtain standardized RMST predictions.

\section{Illustrative toy examples with explicit calculations}\label{sec-toy_example}

We now provide two toy examples with explicit calculations to illustrate the main steps of Algorithms~\ref{alg:ipcw_artificial_natural} (only for artificial censoring) and~\ref{alg:gformula_parametric_general}. For simplicity, we consider a setting with three individuals, two post-baseline decision times \(t_1=1\), \(t_2=2\), two time-varying covariates \(X_{3,k}\) (continuous) and \(X_{4,k}\) (binary), and two target static treatment strategies
\[
g_0=(1,0,0),
\qquad
g_1=(1,1,1).
\]
We assume a horizon \(\tau=3\). The observed event time is denoted by \(T_{\mathrm{obs}}\), and \(\Delta=1\) indicates that the event occurred before \(\tau\).
\begin{figure}[H]
\centering
\begin{tikzpicture}[->, >=stealth, thick]
    \def\r{0.95cm}

    \node (L0) [draw, circle, minimum size=\r] at (0,2.8) {$X_{3,0}$};
    \node (L1) [draw, circle, minimum size=\r] at (1.9,2.8) {$X_{3,1}$};
    \node (L2) [draw, circle, minimum size=\r] at (3.8,2.8) {$X_{3,2}$};

    \node (Z0) [draw, circle, minimum size=\r] at (0,1.4) {$X_{4,0}$};
    \node (Z1) [draw, circle, minimum size=\r] at (1.9,1.4) {$X_{4,1}$};
    \node (Z2) [draw, circle, minimum size=\r] at (3.8,1.4) {$X_{4,2}$};

    \node (A0) [draw, circle, minimum size=\r] at (0,0) {$A_0$};
    \node (A1) [draw, circle, minimum size=\r] at (1.9,0) {$A_1$};
    \node (A2) [draw, circle, minimum size=\r] at (3.8,0) {$A_2$};

    \node (Y1) [draw, circle, minimum size=\r] at (0.95,-1.4) {$Y_1$};
    \node (Y2) [draw, circle, minimum size=\r] at (2.85,-1.4) {$Y_2$};
    \node (Y3) [draw, circle, minimum size=\r] at (4.75,-1.4) {$Y_3$};

    \draw (L0) -- (L1);
    \draw (L1) -- (L2);

    \draw (Z0) -- (Z1);
    \draw (Z1) -- (Z2);

    \draw (A0) -- (A1);
    \draw (A1) -- (A2);

    \draw (Y1) -- (Y2);
    \draw (Y2) -- (Y3);

    \draw[red_cbf2, bend right=38] (L0) to (A0);
    \draw[red_cbf2, bend left=8]  (Z0) to (A0);

    \draw[red_cbf2, bend right=38] (L1) to (A1);
    \draw[red_cbf2, bend left=8]  (Z1) to (A1);

    \draw[red_cbf2, bend right=38] (L2) to (A2);
    \draw[red_cbf2, bend left=8]  (Z2) to (A2);

    \draw[red_cbf2, bend left=12] (A0) to (L1);
    \draw[red_cbf2, bend left=20] (A0) to (Z1);

    \draw[red_cbf2, bend left=12] (A1) to (L2);
    \draw[red_cbf2, bend left=20] (A1) to (Z2);

    \draw[red_cbf2, bend left=18] (L0) to (Y1);
    \draw[red_cbf2, bend left=20]  (Z0) to (Y1);

    \draw[red_cbf2, bend left=18] (L1) to (Y2);
    \draw[red_cbf2, bend left=20]  (Z1) to (Y2);

    \draw[red_cbf2, bend left=18] (L2) to (Y3);
    \draw[red_cbf2, bend left=20]  (Z2) to (Y3);

    \draw[purple_cbf2] (A0) -- (Y1);
    \draw[purple_cbf2] (A1) -- (Y2);
    \draw[purple_cbf2] (A2) -- (Y3);

    \draw[purple_cbf2] (Y1) -- (A1);
    \draw[purple_cbf2] (Y2) -- (A2);

\end{tikzpicture}
\caption{Mini longitudinal DAG for the toy example. At each time \(t\), the current covariates \((X_{3,t},X_{4,t})\) affect treatment assignment \(A_t\); treatment affects future covariates; and both covariate and treatment histories affect the subsequent outcome process. $X_1$ and $X_2$, the baseline confounders, are not represented here. }
\label{fig:mini_dag_toy_y}
\end{figure}

\subsection{Toy example for Algorithm~\ref{alg:ipcw_artificial_natural}: artificial censoring and IPCW}
\paragraph{Observed data.}
Table~\ref{tab:toy_obs_ipcw} displays a small observed dataset with baseline covariates \(X_1,X_2\), treatment history \((A_0,A_1,A_2)\), time-varying covariates \((X_{3,0},X_{3,1},X_{3,2})\) and \((X_{4,0},X_{4,1},X_{4,2})\), and the observed outcome. 

\begin{table}[H]
\centering
\caption{Observed toy data for the IPCW illustration.}
\label{tab:toy_obs_ipcw}
\small
\begin{tabular}{c c c c c c c c c c c c c}
\toprule
ID & \(X_1\) & \(X_2\) & \(A_0\) & \(A_1\) & \(A_2\) & \(X_{3,0}\) & \(X_{3,1}\) & \(X_{3,2}\) & \(X_{4,0}\) & \(X_{4,1}\) & \(X_{4,2}\) & \((T_{\mathrm{obs}},\Delta)\) \\
\midrule
1 & 0 & 1 & 1 & 0 & 0 & 2.0 & 1.8 & NA & 0 & 1 & NA & \((1.5,1)\) \\
2 & 1 & 0 & 1 & 1 & 0 & 1.0 & 1.4 & 1.1 & 1 & 1 & 0 & \((2.5,1)\) \\
3 & 0 & 0 & 1 & 1 & 1 & 0.5 & 0.7 & 0.8 & 0 & 1 & 1 & \((3,0)\) \\
\bottomrule
\end{tabular}
\end{table}

The observed treatment histories are therefore
\[
\bar A_3^{(1)}=(1,0,0),\qquad
\bar A_3^{(2)}=(1,1,0),\qquad
\bar A_3^{(3)}=(1,1,1).
\]

\paragraph{Step 1: cloning and artificial censoring.}
Each individual is cloned into two copies, one assigned to \(g_0=(1,0,0)\) and one assigned to \(g_1=(1,1,1)\). A clone is artificially censored as soon as the observed treatment no longer matches its assigned strategy.

\begin{table}[H]
\centering
\caption{Cloned toy data and artificial censoring times.}
\label{tab:toy_cloned_ipcw}
\small
\setlength{\tabcolsep}{4pt}
\begin{tabular}{c c p{2.2cm} p{1.7cm} p{2.5cm} p{3.7cm}}
\toprule
ID\_clone & ID & Assigned strategy & Compatible up to & Artificial censoring time & Final observed follow-up \\
\midrule
1a & 1 & \(g_0=(1,0,0)\) & \(t_3\) & none & event at 1.5 \\
1b & 1 & \(g_1=(1,1,1)\) & \(t_1\) & \(t_2=2\) & event at 1.5 \\
2a & 2 & \(g_0=(1,0,0)\) & \(t_1\) & \(t_2=2\) & artificially censored at 2 \\
2b & 2 & \(g_1=(1,1,1)\) & \(t_2\) & \(t_3=3\) & event at 2.5 \\
3a & 3 & \(g_0=(1,0,0)\) & \(t_1\) & \(t_2=2\) & artificially censored at 2 \\
3b & 3 & \(g_1=(1,1,1)\) & \(t_3\) & none & naturally censored at 3 \\
\bottomrule
\end{tabular}
\end{table}

Thus:
\begin{itemize}
    \item in arm \(g_0\), clone \(1a\) remains adherent, whereas clones \(2a\) and \(3a\) are artificially censored at \(t_2\);
    \item in arm \(g_1\), clone \(1b\) is artificially censored at \(t_2\), clone \(2b\) is artificially censored at \(t_3\), and clone \(3b\) remains adherent.
\end{itemize}

\paragraph{Step 2: long format within each arm.}
To match Algorithm~\ref{alg:ipcw_artificial_natural}, the follow-up is split at the visit times. For illustration, consider the arm \(g_0\). The corresponding long-format data are shown in Table~\ref{tab:toy_long_g0}. This can be done using \texttt{survSplit} (package \texttt{\href{https://cran.r-project.org/web/packages/survival/index.html}{survival}} \citep{Therneau2001}). 

\begin{table}[H]
\centering
\caption{Long-format data in arm \(g_0\).}
\label{tab:toy_long_g0}
\small
\begin{tabular}{c c c c c c c c}
\toprule
ID\_clone & Interval & \(T_{\mathrm{start}}\) & \(T_{\mathrm{stop}}\) & \(X_{3,k}\) & \(X_{4,k}\) & art.censor & status\\
\midrule
1a & \(k=1\) & 0 & 1   & 2.0 & 0 & 0 & 0 \\
1a & \(k=2\) & 1 & 1.5 & 1.8 & 1 & 0 & 1  \\
2a & \(k=1\) & 0 & 1   & 1.0 & 1 & 0 & 0 \\
2a & \(k=2\) & 1 & 2   & 1.4 & 1 & 1 & 0  \\
3a & \(k=1\) & 0 & 1   & 0.5 & 0 & 0 & 0 \\
3a & \(k=2\) & 1 & 2   & 0.7 & 1 & 1 & 0  \\
\bottomrule
\end{tabular}
\end{table}

There is no interval \(k=3\) in arm \(g_0\), because all remaining follow-up has either ended with an event or been artificially censored by time \(2\).
\paragraph{Step 3: pooled logistic model for artificial censoring.}
Suppose that, in arm \(g_0\), one pooled logistic regression for artificial censoring is
\[
\Pr(\mathrm{art.censor}_k=1\mid X_1,X_2,X_{3,k},X_{4,k},\alpha_j)
=
\mathrm{logit}^{-1}\!\bigl(
-3 + 1.2\,\mathbf 1(k=2) + 0.8X_1 + 0.5X_2 + 0.6X_{3,k} + 1.0X_{4,k}
\bigr).
\]
We now compute the predicted probability of artificial censoring for each interval.

For clone \(1a\):
\[
\hat p_{1a,1}
=
\mathrm{logit}^{-1}(-3+0+0.8\times0+0.5\times1+0.6\times2.0+1.0\times0)
=
\mathrm{logit}^{-1}(-1.3)
\approx 0.214,
\]
\[
\hat p_{1a,2}
=
\mathrm{logit}^{-1}(-3+1.2+0.8\times0+0.5\times1+0.6\times1.8+1.0\times1)
=
\mathrm{logit}^{-1}(0.78)
\approx 0.686.
\]

For clone \(2a\):
\[
\hat p_{2a,1}
=
\mathrm{logit}^{-1}(-3+0+0.8\times1+0.5\times0+0.6\times1.0+1.0\times1)
=
\mathrm{logit}^{-1}(-0.6)
\approx 0.354,
\]
\[
\hat p_{2a,2}
=
\mathrm{logit}^{-1}(-3+1.2+0.8\times1+0.5\times0+0.6\times1.4+1.0\times1)
=
\mathrm{logit}^{-1}(0.84)
\approx 0.698.
\]

For clone \(3a\):
\[
\hat p_{3a,1}
=
\mathrm{logit}^{-1}(-3+0+0.8\times0+0.5\times0+0.6\times0.5+1.0\times0)
=
\mathrm{logit}^{-1}(-2.7)
\approx 0.063,
\]
\[
\hat p_{3a,2}
=
\mathrm{logit}^{-1}(-3+1.2+0.8\times0+0.5\times0+0.6\times0.7+1.0\times1)
=
\mathrm{logit}^{-1}(-0.38)
\approx 0.406.
\]

The corresponding estimated probabilities of \emph{remaining} artificially uncensored are
\[
\hat p^{\mathrm{stay}}_{ij}=1-\hat p_{ij}.
\]

\begin{table}[H]
\centering
\caption{Predicted artificial censoring probabilities and IPCW components in arm \(g_0\).}
\label{tab:toy_probs_g0}
\small
\begin{tabular}{c c c c c}
\toprule
ID\_clone & Interval \(k\) & \(\hat p_{ik}\) & \(\hat p^{\mathrm{stay}}_{ik}=1-\hat p_{ik}\) & \(S_i(t_k)\) \\
\midrule
1a & 1 & 0.214 & 0.786 & 1 \\
1a & 2 & 0.686 & 0.314 & 0.786 \\
2a & 1 & 0.354 & 0.646 & 1 \\
2a & 2 & 0.698 & 0.302 & 0.646 \\
3a & 1 & 0.063 & 0.937 & 1 \\
3a & 2 & 0.406 & 0.594 & 0.937 \\
\bottomrule
\end{tabular}
\end{table}

Recall that
\[
G_i(t_k)=\prod_{j<k}\hat p^{\mathrm{stay}}_{ij}.
\]
Therefore, at the beginning of interval \(k=2\),
\[
G_{1a}(t_2)=0.786,\qquad
G_{2a}(t_2)=0.646,\qquad
G_{3a}(t_2)=0.937.
\]
The corresponding IPCW weights are
\[
w_i(t_k)=\frac{1}{G_i(t_k)}.
\]
Hence,
\[
w_{1a}(t_2)=\frac{1}{0.786}=1.272,\qquad
w_{2a}(t_2)=\frac{1}{0.646}=1.548,\qquad
w_{3a}(t_2)=\frac{1}{0.937}=1.067.
\]

\paragraph{Step 4: Weighted Kaplan--Meier estimator in arm \(g_0\).}
This step and the next one can be implemented directly using the \texttt{survfit} function from the package \texttt{\href{https://cran.r-project.org/web/packages/survival/index.html}{survival}} \citep{Therneau2001}, by supplying the long-format dataset together with the corresponding weights. For clarity, however, we detail the calculation below.

In arm \(g_0\), the only observed event occurs for clone \(1a\) at time \(1.5\). At that time, all three clones are still under observation, since clones \(2a\) and \(3a\) are artificially censored only at time \(2\). Thus,
\[
d_{0}(1.5)=w_{1a}(1.5)=1,
\]
because before time \(1.5\), no artificial censoring has yet occurred, so \(G_i(1.5)=1\) and the weight is 1 for all individuals.

The weighted risk set at time \(1.5\) is
\[
Y_{0}(1.5)=1+1+1=3.
\]
Therefore the weighted Kaplan--Meier jump at time \(1.5\) is
\[
\widehat S^{(0)}(1.5)
=
1-\frac{d_0(1.5)}{Y_0(1.5)}
=
1-\frac{1}{3}
=
\frac{2}{3}.
\]
Since there are no further events in arm \(g_0\), the weighted survival curve is
\[
\widehat S^{(0)}(t)=
\begin{cases}
1, & 0\le t<1.5,\\[0.3em]
2/3, & 1.5\le t\le 3.
\end{cases}
\]
The corresponding RMST is
\[
\mathrm{RMST}^{(0)}
=
\int_0^3 \widehat S^{(0)}(t)\,dt
=
\int_0^{1.5} 1\,dt
+
\int_{1.5}^{3} \frac{2}{3}\,dt
=
1.5 + 1.5\times \frac{2}{3}
=
2.5.
\]

\paragraph{Step 5: weighted Kaplan--Meier in arm \(g_1\).}
In arm \(g_1\), clone \(2b\) has an event at time \(2.5\), while clone \(1b\) is artificially censored at \(2\), and clone \(3b\) remains event-free until \(3\).

Suppose that the corresponding artificial censoring model in arm \(g_1\) yields
\[
G_{2b}(2.5)=0.55,\qquad G_{3b}(2.5)=0.80.
\]
Then
\[
w_{2b}(2.5)=\frac{1}{0.55}=1.818,\qquad
w_{3b}(2.5)=\frac{1}{0.80}=1.25.
\]
At time \(2.5\), clone \(1b\) is no longer in the risk set because it was artificially censored at \(2\). Therefore
\[
d_1(2.5)=1.818,
\qquad
Y_1(2.5)=1.818+1.25=3.068.
\]
Hence
\[
\widehat S^{(1)}(2.5)
=
1-\frac{1.818}{3.068}
\approx 0.407.
\]
Since there is no event before \(2.5\), we have
\[
\widehat S^{(1)}(t)=
\begin{cases}
1, & 0\le t<2.5,\\[0.3em]
0.407, & 2.5\le t\le 3.
\end{cases}
\]
Thus
\[
\mathrm{RMST}^{(1)}
=
\int_0^3 \widehat S^{(1)}(t)\,dt
=
2.5 + 0.5\times 0.407
=
2.7035.
\]

Finally, the RMST contrast is
\[
\mathrm{RMST}^{(1)}-\mathrm{RMST}^{(0)}
=
2.7035-2.5
=
0.2035.
\]

\subsection{Toy example for Algorithm~\ref{alg:gformula_parametric_general}: parametric g-computation after cloning and artificial censoring}

We now illustrate the parametric g-computation approach in the same toy longitudinal setting, while keeping the cloning--censoring logic explicit. In contrast with the IPCW approach, artificial censoring is not handled through inverse probability weighting. Instead, after cloning and artificial censoring, we complete the covariate histories under each strategy by imputing only the covariate values that become missing after deviation from the assigned strategy. The observed strategy-compatible covariate values are kept unchanged.

As in the previous toy example, we consider three individuals, two post-baseline decision times \(t_1=1\) and \(t_2=2\), a horizon \(\tau=3\), two time-varying covariates \(X_{3,k}\) (continuous) and \(X_{4,k}\) (binary), and two target static treatment strategies
\[
g_0=(1,0,0),
\qquad
g_1=(1,1,1).
\]

\paragraph{Observed data.}
We start from the same observed data as in Table~\ref{tab:toy_obs_ipcw}. The observed treatment histories are
\[
\bar A_3^{(1)}=(1,0,0),\qquad
\bar A_3^{(2)}=(1,1,0),\qquad
\bar A_3^{(3)}=(1,1,1).
\]

\paragraph{Step 1: cloning, artificial censoring, and creation of missing covariate values.}
Each individual is cloned into two copies, one assigned to \(g_0\) and one assigned to \(g_1\). As in the IPCW illustration, each clone is artificially censored at the first visit where the observed treatment no longer matches the assigned strategy. Covariate values after that time are then treated as missing for that clone, because they are no longer observed under the assigned strategy.

The resulting cloned data are shown in Table~\ref{tab:toy_cloned_gf}. Observed covariate values are retained as long as the clone remains strategy-compatible, and only post-deviation covariates are set to \texttt{NA}.

\begin{table}[H]
\centering
\caption{Cloned data for the g-computation illustration. Covariate values are kept while the clone remains compatible with its assigned strategy, and set to \texttt{NA} after artificial censoring. The variable \texttt{art.censor} indicates whether the clone is artificially censored before the horizon \(\tau\).}
\label{tab:toy_cloned_gf}
\small
\begin{tabular}{c c c c c c c c c c c}
\toprule
ID\_clone & Assigned strategy & \(X_{3,0}\) & \(X_{4,0}\) & \(X_{3,1}\) & \(X_{4,1}\) & \(X_{3,2}\) & \(X_{4,2}\) & \(T_{\mathrm{obs}}\) & \(\Delta\) & art.censor \\
\midrule
1a & \(g_0=(1,0,0)\) & 2.0 & 0 & 1.8 & 1 & \texttt{NA} & \texttt{NA} & 1.5 & 1 & 0 \\
1b & \(g_1=(1,1,1)\) & 2.0 & 0 & 1.8 & 1 & \texttt{NA} & \texttt{NA} & 1.0 & 0 & 1 \\
2a & \(g_0=(1,0,0)\) & 1.0 & 1 & 1.4 & 1 & \texttt{NA} & \texttt{NA} & 1 & 0 & 1 \\
2b & \(g_1=(1,1,1)\) & 1.0 & 1 & 1.4 & 1 & 1.1 & 0 & 2 & 0 & 1 \\
3a & \(g_0=(1,0,0)\) & 0.5 & 0 & 0.7 & 1 & \texttt{NA} & \texttt{NA} & 1 & 0 & 1 \\
3b & \(g_1=(1,1,1)\) & 0.5 & 0 & 0.7 & 1 & 0.8 & 1 & 3 & 0 & 0 \\
\bottomrule
\end{tabular}
\end{table}
Thus:
\begin{itemize}
    \item in arm \(g_0\), clone \(1a\) contributes observed covariate information up to the event at time \(1.5\), whereas clones \(2a\) and \(3a\) are artificially censored at time \(1\). Their values \((X_{3,1},X_{4,1})\) remain observed, but their subsequent covariates \((X_{3,2},X_{4,2})\) are missing under \(g_0\);
    \item in arm \(g_1\), clone \(1b\) is artificially censored at time \(1\), so \((X_{3,1},X_{4,1})\) are still observed but \((X_{3,2},X_{4,2})\) are missing under \(g_1\); clone \(2b\) remains compatible until time \(2\), so both \((X_{3,1},X_{4,1})\) and \((X_{3,2},X_{4,2})\) are observed, although follow-up is then artificially censored before \(\tau\); clone \(3b\) remains fully compatible and therefore contributes a fully observed covariate history.
\end{itemize}

\paragraph{Step 2: fit arm-specific covariate models on the observed strategy-compatible data.}
We next fit, separately within each arm, parametric models for the time-varying covariates using the values that remain observed under the assigned strategy. In this toy example, all clones have observed values for \((X_{3,1},X_{4,1})\), and the only covariates that must be imputed are the second-visit covariates \((X_{3,2},X_{4,2})\) for clones artificially censored before time \(2\). Accordingly, we only specify models for \(X_{3,2}\) and \(X_{4,2}\).

For illustration, suppose that in arm \(g_0\), the fitted models are
\[
X_{3,2}^{(0)} = 0.3 + 0.6X_{3,1} + 0.2X_{4,1},
\qquad
\Pr\!\bigl(X_{4,2}^{(0)}=1 \mid X_{3,1},X_{4,1}\bigr)
=
\mathrm{logit}^{-1}\!\left(-0.8 + 0.7X_{3,1} + 0.9X_{4,1}\right),
\]
and in arm \(g_1\),
\[
X_{3,2}^{(1)} = 0.2 + 0.5X_{3,1} + 0.3X_{4,1},
\qquad
\Pr\!\bigl(X_{4,2}^{(1)}=1 \mid X_{3,1},X_{4,1}\bigr)
=
\mathrm{logit}^{-1}\!\left(-0.6 + 0.6X_{3,1} + 0.8X_{4,1}\right).
\]

These models are fitted on the portions of cloned follow-up that remain compatible with each strategy. In particular, they are not used to regenerate the whole covariate history from baseline for every clone, but only to fill in the covariate values that are missing because of artificial censoring.

\paragraph{Step 3: impute only the missing covariates under each strategy.}
We now complete the cloned dataset by imputing only the \texttt{NA} covariate values under each assigned strategy, while keeping the observed compatible covariates unchanged.

\medskip
\noindent
\textbf{Arm \(g_0\).}
Under \(g_0\), all three clones have observed values for \((X_{3,1},X_{4,1})\), but clone \(1a\) has missing \((X_{3,2},X_{4,2})\) because the event occurs at time \(1.5\), and clones \(2a\) and \(3a\) have missing \((X_{3,2},X_{4,2})\) because they are artificially censored at time \(1\). We therefore impute \((X_{3,2},X_{4,2})\) for all three clones in arm \(g_0\).

For clone \(1a\), with observed \((X_{3,1},X_{4,1})=(1.8,1)\),
\[
\widehat X_{3,2,1a}^{(0)}
=
0.3 + 0.6(1.8) + 0.2(1)
=
1.58,
\]
\[
\widehat{\Pr}\!\bigl(X_{4,2,1a}^{(0)}=1\bigr)
=
\mathrm{logit}^{-1}\!\bigl(-0.8 + 0.7(1.8) + 0.9(1)\bigr)
=
\mathrm{logit}^{-1}(1.36)
\approx 0.796.
\]
Using a deterministic plug-in prediction for this illustration, we set
\[
\widehat X_{4,2,1a}^{(0)}=1.
\]

For clone \(2a\), with observed \((X_{3,1},X_{4,1})=(1.4,1)\),
\[
\widehat X_{3,2,2a}^{(0)}
=
0.3 + 0.6(1.4) + 0.2(1)
=
1.34,
\]
\[
\widehat{\Pr}\!\bigl(X_{4,2,2a}^{(0)}=1\bigr)
=
\mathrm{logit}^{-1}\!\bigl(-0.8 + 0.7(1.4) + 0.9(1)\bigr)
=
\mathrm{logit}^{-1}(1.08)
\approx 0.746.
\]
Using a deterministic plug-in prediction for this illustration, we set
\[
\widehat X_{4,2,2a}^{(0)}=1.
\]

For clone \(3a\), with observed \((X_{3,1},X_{4,1})=(0.7,1)\),
\[
\widehat X_{3,2,3a}^{(0)}
=
0.3 + 0.6(0.7) + 0.2(1)
=
0.92,
\]
\[
\widehat{\Pr}\!\bigl(X_{4,2,3a}^{(0)}=1\bigr)
=
\mathrm{logit}^{-1}\!\bigl(-0.8 + 0.7(0.7) + 0.9(1)\bigr)
=
\mathrm{logit}^{-1}(0.59)
\approx 0.643.
\]
Again using a deterministic plug-in rule, we set
\[
\widehat X_{4,2,3a}^{(0)}=1.
\]

\medskip
\noindent
\textbf{Arm \(g_1\).}
Under \(g_1\), clone \(1b\) is artificially censored at time \(1\), so \((X_{3,1},X_{4,1})\) remain observed but \((X_{3,2},X_{4,2})\) must be imputed. Clones \(2b\) and \(3b\) already have observed values for \((X_{3,2},X_{4,2})\), so no covariate imputation is needed for these clones.

For clone \(1b\), with observed \((X_{3,1},X_{4,1})=(1.8,1)\),
\[
\widehat X_{3,2,1b}^{(1)}
=
0.2 + 0.5(1.8) + 0.3(1)
=
1.4,
\]
\[
\widehat{\Pr}\!\bigl(X_{4,2,1b}^{(1)}=1\bigr)
=
\mathrm{logit}^{-1}\!\bigl(-0.6 + 0.6(1.8) + 0.8(1)\bigr)
=
\mathrm{logit}^{-1}(1.28)
\approx 0.782.
\]
Using a deterministic plug-in prediction for this illustration, we set
\[
\widehat X_{4,2,1b}^{(1)}=1.
\]

The completed covariate histories are summarized in Table~\ref{tab:toy_completed_gf}.

\begin{table}[H]
\centering
\caption{Completed covariate histories after arm-specific imputation of missing covariates only. Observed strategy-compatible values are kept unchanged; only \texttt{NA} values are imputed.}
\label{tab:toy_completed_gf}
\small
\begin{tabular}{c c c c c c c c}
\toprule
ID\_clone & Assigned strategy & \(X_{3,0}\) & \(X_{4,0}\) & \(X_{3,1}\) & \(X_{4,1}\) & \(X_{3,2}\) & \(X_{4,2}\) \\
\midrule
1a & \(g_0\) & 2.0 & 0 & 1.8 & 1 & 1.58 & 1 \\
2a & \(g_0\) & 1.0 & 1 & 1.4 & 1 & 1.34 & 1 \\
3a & \(g_0\) & 0.5 & 0 & 0.7 & 1 & 0.92 & 1 \\
1b & \(g_1\) & 2.0 & 0 & 1.8 & 1 & 1.40 & 1 \\
2b & \(g_1\) & 1.0 & 1 & 1.4 & 1 & 1.1 & 0 \\
3b & \(g_1\) & 0.5 & 0 & 0.7 & 1 & 0.8 & 1 \\
\bottomrule
\end{tabular}
\end{table}
\paragraph{Step 4: fit an arm-specific time-varying Cox model and predict survival under each strategy.}
We now fit, within each arm, a Cox model with time-varying covariates using the cloned data after artificial censoring and covariate completion. Because all missing covariate values have been imputed under the assigned strategy, each clone is associated with a completed covariate history under that strategy.

For illustration, suppose that in each arm \(a\in\{0,1\}\), the hazard is modeled as
\[
\lambda^{(a)}(t \mid X_3(t),X_4(t))
=
\lambda_0^{(a)}(t)\,
\exp\!\bigl(
0.5\,X_3(t)+0.6\,X_4(t)
\bigr),
\]
where \(X_3(t)\) and \(X_4(t)\) are piecewise constant over the intervals \([0,1)\), \([1,2)\), and \([2,3]\), taking values \((X_{3,0},X_{4,0})\), \((X_{3,1},X_{4,1})\), and \((X_{3,2},X_{4,2})\), respectively.

Consider individual 2. Under \(g_0\), the completed covariate history is
\[
(X_{3,0},X_{4,0},X_{3,1},X_{4,1},X_{3,2},X_{4,2})=(1.0,1,1.4,1,1.34,1),
\]
whereas under \(g_1\), it is
\[
(X_{3,0},X_{4,0},X_{3,1},X_{4,1},X_{3,2},X_{4,2})=(1.0,1,1.4,1,1.1,0).
\]
These completed trajectories are inserted into the fitted arm-specific Cox models to obtain the corresponding predicted survival curves \(\widehat S_2^{(0)}(t)\) and \(\widehat S_2^{(1)}(t)\), from which the strategy-specific restricted mean survival times are computed:
\[
\widehat{\mathrm{RMST}}_2^{(0)}
=
\int_0^3 \widehat S_2^{(0)}(t)\,dt,
\qquad
\widehat{\mathrm{RMST}}_2^{(1)}
=
\int_0^3 \widehat S_2^{(1)}(t)\,dt.
\]
Because the completed covariate history under \(g_1\) yields a lower predicted hazard over the last interval, the resulting RMST is larger under \(g_1\) than under \(g_0\).

Proceeding similarly for all individuals yields the individual-level predicted RMSTs shown in Table~\ref{tab:toy_rmst_gf}.

\begin{table}[H]
\centering
\caption{Individual predicted RMSTs under each strategy in the g-computation toy example.}
\label{tab:toy_rmst_gf}
\small
\begin{tabular}{c c c c}
\toprule
ID & \(\widehat{\mathrm{RMST}}_i^{(0)}\) & \(\widehat{\mathrm{RMST}}_i^{(1)}\) & Difference \\
\midrule
1 & 2.05 & 2.18 & 0.13 \\
2 & 2.12 & 2.29 & 0.17 \\
3 & 2.36 & 2.47 & 0.11 \\
\bottomrule
\end{tabular}
\end{table}

The marginal RMST estimates are then obtained by averaging over the original individuals:
\[
\widehat{\mathrm{RMST}}^{(0)}
=
\frac{2.05+2.12+2.36}{3}
=
2.18,
\qquad
\widehat{\mathrm{RMST}}^{(1)}
=
\frac{2.18+2.29+2.47}{3}
=
2.31,
\]
so that
\[
\widehat\theta_{\mathrm{RMST}}
=
\widehat{\mathrm{RMST}}^{(1)}-\widehat{\mathrm{RMST}}^{(0)}
=
0.13.
\]

\section{Simulation details of Section~\ref{sec-simulation_baseline}} \label{sec-coef_baseline}


\begin{table}[ht]
\centering
\caption{General DGP parameters.}
\label{tab:dgp-s1}
\renewcommand{\arraystretch}{1.2}
\setlength{\tabcolsep}{6pt}
\small
\begin{tabular}{p{4.2cm} p{11.8cm}}
\toprule
\textbf{Block} & \textbf{Specification / parameters} \\
\midrule
\textbf{General DGP} &
$
X_1 \sim \mathcal N(\mu_1,\sigma_1^2),\;
X_2 \sim \mathcal N(\mu_2,\sigma_2^2),\;
X_3 \sim \mathcal N(\mu_3,\sigma_3^2)
$
\\[4pt]
&
for $k \in (0,1,\dots,K)$,
\\[-2pt]
&
$
A_0=1,\quad
\mathbb P(A_k=1 \mid A_{k-1},X)
=
\operatorname{logit}^{-1}\!\big(
\gamma_0+\gamma_A A_{k-1}+\gamma_{X1}X_1+\gamma_{X2}X_2+\gamma_t(k)
\big),
\quad
\gamma_t(k)=
\begin{cases}
\gamma_{t,\mathrm{pre}}, & k \le 2,\\
\gamma_{t,\mathrm{post}}, & k > 2
\end{cases}
$
\\[6pt]
&
$
\lambda_T(k \mid A_k,X)
=
\exp\!\big(
\alpha_0+\alpha_A A_k+\alpha_{X1}X_1+\alpha_{X2}X_2+\alpha_{X3}X_3+\alpha_t(k)
\big),
\quad
\alpha_t(k)=
\begin{cases}
\alpha_{t,\mathrm{pre}}, & k \le 2,\\
\alpha_{t,\mathrm{post}}, & k > 2
\end{cases}
$
\\[6pt]
&
$
\lambda_C(k \mid A_k,X)
=
\beta\exp\!\big(
\beta_0+\beta_A A_k+\beta_{X2}X_2+\beta_{X3}X_3+\beta_t(k)
\big),
\quad
\beta_t(k)=
\begin{cases}
\beta_{t,\mathrm{pre}}, & k \le 2,\\
\beta_{t,\mathrm{post}}, & k > 2
\end{cases}
$
\\
\midrule

\end{tabular}
\end{table}
\normalsize
\paragraph{Scenario 1 (Low confounding and low dependent censoring).} 
\begin{itemize}
    \item \textbf{Treatment parameters} 
$\gamma_0=1.1,\;\gamma_A=1,\;\gamma_{X1}=0.2,\;\gamma_{X2}=-0.05,\;
\gamma_{t,\mathrm{pre}}=0.4,\;\gamma_{t,\mathrm{post}}=-2.2$
\item \textbf{Survival parameters} 
$\alpha_0=-2.7,\;\alpha_A=-1.0,\;\alpha_{X1}=0.1,\;\alpha_{X2}=-0.08,\;\alpha_{X3}=-0.2,\;
\alpha_{t,\mathrm{pre}}=-0.2,\;\alpha_{t,\mathrm{post}}=0.3$
\item \textbf{Censoring parameters} 
$\beta=0.08,\;\beta_0=-0.4,\;\beta_A=0.05,\;\beta_{X2}=-0.2,\;\beta_{X3}=0.1,\;
\beta_{t,\mathrm{pre}}=0,\;\beta_{t,\mathrm{post}}=0$
\end{itemize}

\paragraph{Scenario 2 (High confounding and low dependent censoring).} 
\begin{itemize}
\item \textbf{Treatment parameters} 
$\gamma_0=1.1,\;\gamma_A=1,\;\gamma_{X1}=-0.5,\;\gamma_{X2}=0.5,\;
\gamma_{t,\mathrm{pre}}=0.4,\;\gamma_{t,\mathrm{post}}=-2.2$
\item \textbf{Survival parameters} 
$\alpha_0=-2.7,\;\alpha_A=-1.0,\;\alpha_{X1}=0.5,\;\alpha_{X2}=-0.8,\;\alpha_{X3}=-0.8,\;
\alpha_{t,\mathrm{pre}}=-0.2,\;\alpha_{t,\mathrm{post}}=0.3$
\item \textbf{Censoring parameters} 
$\beta=0.05,\;\beta_0=-0.4,\;\beta_A=0.05,\;\beta_{X2}=-0.2,\;\beta_{X3}=0.1$
\end{itemize}

\paragraph{Scenario 3 (Low confounding and high dependent censoring).} 
\begin{itemize}
\item \textbf{Treatment parameters} 
$\gamma_0=1.1,\;\gamma_A=1,\;\gamma_{X1}=0.2,\;\gamma_{X2}=-0.05,\;
\gamma_{t,\mathrm{pre}}=0.4,\;\gamma_{t,\mathrm{post}}=-2.2$
\item \textbf{Survival parameters} 
$\alpha_0=-1.7,\;\alpha_A=-1.0,\;\alpha_{X1}=0.1,\;\alpha_{X2}=-0.08,\;\alpha_{X3}=-0.2,\;
\alpha_{t,\mathrm{pre}}=-0.2,\;\alpha_{t,\mathrm{post}}=0.3$
\item \textbf{Censoring parameters} 
$\beta=0.1,\;\beta_0=-0.3,\;\beta_A=0.5,\;\beta_{X2}=0.5,\;\beta_{X3}=0.5,\;
\beta_{t,\mathrm{pre}}=-1.5,\;\beta_{t,\mathrm{post}}=0.5$
\end{itemize}

\paragraph{Scenario 4 (High confounding and high dependent censoring).} 
\begin{itemize}
\item \textbf{Treatment parameters} 
$\gamma_0=1.1,\;\gamma_A=1,\;\gamma_{X1}=-0.5,\;\gamma_{X2}=0.5,\;
\gamma_{t,\mathrm{pre}}=0.4,\;\gamma_{t,\mathrm{post}}=-2.2$
\item \textbf{Survival parameters} 
$\alpha_0=-2.7,\;\alpha_A=-1.0,\;\alpha_{X1}=0.5,\;\alpha_{X2}=-0.8,\;\alpha_{X3}=-0.8,\;
\alpha_{t,\mathrm{pre}}=-0.2,\;\alpha_{t,\mathrm{post}}=0.3$
\item\textbf{Censoring parameters} 
$\beta=0.5,\;\beta_0=-0.3,\;\beta_A=0.3,\;\beta_{X2}=0.5,\;\beta_{X3}=0.5,\;
\beta_{t,\mathrm{pre}}=-0.5,\;\beta_{t,\mathrm{post}}=0.5$
\end{itemize}

\begin{table}[H]
\caption{Performance of estimators (RMST difference in months) with 50 Monte Carlo replicate in scenario 4 with baseline confounders.}
\centering
\label{tab:rmse_baseline}
\begin{tabular}[t]{llrrr}
\toprule
estimator & estimate (sd estimate) & bias & MSE & RMSE\\
\midrule
Naive - Filtered & 99.4 (2.9) & 86.2 & 7441.0 & 86.3\\
G-formula - Filtered & 60.2 (2.5) & 47.0 & 2216.1 & 47.1\\
KM - Cloned & 34.2 (2.4) & 21.1 & 449.6 & 21.2\\
IPCW (A: Logit) - Cloned & 27.6 (2.6) & 14.4 & 214.8 & 14.7\\
G-formula (Cox) - Cloned & 13.0 (1.4) & \textcolor{green_cbf1}{-0.2} & \textcolor{green_cbf1}{2.0} & \textcolor{green_cbf1}{1.4}\\
\addlinespace
G-formula (Weibull) - Cloned & 8.7 (0.8) & -4.4 & 20.3 & 4.5\\
IPCW (A: Logit, N: PW-exp) - Cloned & 12.4 (7.2) & -0.8 & 51.9 & 7.2\\
IPCW (A+N: Logit) - Cloned & 20.5 (3.2) & 7.3 & 63.6 & 8.0\\
IPCW (A+N: PW-exp) - Cloned & 30.0 (2.4) & 16.8 & 287.3 & 17.0\\
\bottomrule
\end{tabular}
\end{table}

\section{Simulation details of Section~\ref{sec-simulation_tv}}\label{sec-coef_tv}


\begin{table}[H]
\centering
\caption{General DGP parameters for the time-dependent confounding setting.}
\label{tab:dgp-timedep-s1}
\renewcommand{\arraystretch}{1.2}
\setlength{\tabcolsep}{6pt}
\small
\begin{tabular}{p{2.8cm} p{11.8cm}}
\toprule
\textbf{Block} & \textbf{Specification / parameters} \\
\midrule
\textbf{General DGP} &
$
X_1 \sim \mathcal N(0.5,1),\;
X_2 \sim \mathcal N(1.0,1),\;
X_{3,0} \sim \mathcal N(1.0,1),\;
X_{4,0} \sim \mathcal N(-1.0,1)
$
\\[4pt]
&
for $k \in (0,1,\dots,K-2)$,
\\[-2pt]
&
$
A_0=1,\quad
\mathbb P(A_{k+1}=1 \mid A_k, X_1, X_2, X_{3,k+1})
=
\operatorname{logit}^{-1}\!\big(
\gamma_{0,\mathrm{long}}+\gamma_{X1}X_1+\gamma_{X2}X_2+\gamma_{X3}(0.1\,k)X_{3,k+1}
+\gamma_{A\mathrm{prev}}A_k+\gamma_t(k)
\big),
$
\\[-2pt]
&
$
\gamma_t(k)=
\begin{cases}
\gamma_{t,\mathrm{pre}}, & k \le k_{\mathrm{switch}},\\
\gamma_{t,\mathrm{post}}, & k > k_{\mathrm{switch}}
\end{cases}
$
\\[6pt]
&
$
X_{3,k+1}
=
\beta_{\mathrm{X3,intercept}}
+\beta_{\mathrm{X3,prev}}X_{3,k}
+\beta_{\mathrm{X3,Aprev}}A_k
+\varepsilon_{k+1},
\qquad
\varepsilon_{k+1}\sim\mathcal N(0,\sigma_{\mathrm{X3}}^2)
$
\\[6pt]
&
$
X_{4,k+1}
=
\epsilon_{\mathrm{X4,intercept}}
+\epsilon_{\mathrm{X4,prev}}X_{4,k}
+\eta_{k+1},
\qquad
\eta_{k+1}\sim\mathcal N(0,\sigma_{\mathrm{X4}}^2)
$
\\[6pt]
&
$
\lambda_T(k \mid A_k,X_1,X_2,X_{3,k},X_{4,k})
=
\exp\!\big(
\alpha_0+\alpha_A A_k+\alpha_{X1}X_1+\alpha_{X2}X_2+\alpha_{X3}X_{3,k}+\alpha_{X4}X_{4,k}+\alpha_t(k)
\big),
$
\\[-2pt]
&
$
\alpha_t(k)=
\begin{cases}
\alpha_{t,\mathrm{pre}}, & k \le 2,\\
\alpha_{t,\mathrm{post}}, & k > 2
\end{cases}
$
\\[6pt]
&
$
\lambda_C(k \mid A_k,X_1,X_2,X_{4,k})
=
\beta\exp\!\big(
\beta_0+\beta_A A_k+\beta_{X1}X_1+\beta_{X2}X_2+\beta_{X4}X_{4,k}+\beta_t(k)
\big),
$
\\[-2pt]
&
$
\beta_t(k)=
\begin{cases}
\beta_{t,\mathrm{pre}}, & k \le 3,\\
\beta_{t,\mathrm{post}}, & k > 3
\end{cases}
$
\\
\midrule

\end{tabular}
\end{table}
\normalsize

\paragraph{Scenario 1 (Low confounding and low dependent censoring).} 
\begin{itemize}
    \item \textbf{Treatment parameters} 
$
\gamma_{0,\mathrm{long}}=-0.10,\;\gamma_{X1}=0.2,\;\gamma_{X2}=-0.2,\;\gamma_{X3}=-0.2,\;
\gamma_{A\mathrm{prev}}=0.65,\;
\gamma_{t,\mathrm{pre}}=0.35,\;\gamma_{t,\mathrm{post}}=-0.40
$
    \item \textbf{Survival parameters} 
$
\alpha_0=-2,\;\alpha_A=-0.75,\;\alpha_{X1}=-0.2,\;\alpha_{X2}=-0.08,\;\alpha_{X3}=-0.08,\;\alpha_{X4}=0.2,\;
\alpha_{t,\mathrm{pre}}=-0.12,\;\alpha_{t,\mathrm{post}}=0.18
$
    \item \textbf{Censoring parameters} 
$
\beta=0.10,\;\beta_0=-1.5,\;\beta_A=0,\;\beta_{X1}=0.2,\;\beta_{X2}=-0.03,\;\beta_{X4}=-0.5,\;
\beta_{t,\mathrm{pre}}=-0.35,\;\beta_{t,\mathrm{post}}=-0.05
$
    \item \textbf{Time-varying covariate parameters} 
$
\beta_{\mathrm{X3,intercept}}=0.05,\;\beta_{\mathrm{X3,prev}}=0.15,\;\beta_{\mathrm{X3,Aprev}}=0.08,\;\sigma_{\mathrm{X3}}=0.22
$
\\

$
\epsilon_{\mathrm{X4,intercept}}=0.05,\;\epsilon_{\mathrm{X4,prev}}=0.15,\;\sigma_{\mathrm{X4}}=0.22
$
\end{itemize}

\paragraph{Scenario 2 (High confounding and low dependent censoring).} 
\begin{itemize}
    \item \textbf{Treatment parameters} 
$
\gamma_{0,\mathrm{long}}=-0.10,\;\gamma_{X1}=0.05,\;\gamma_{X2}=-0.05,\;\gamma_{X3}=-1,\;
\gamma_{A\mathrm{prev}}=0.62,\;
\gamma_{t,\mathrm{pre}}=0.24,\;\gamma_{t,\mathrm{post}}=-0.50
$

    \item \textbf{Survival parameters} 
$
\alpha_0=-3.2,\;\alpha_A=-1.35,\;\alpha_{X1}=-0.30,\;\alpha_{X2}=0.30,\;\alpha_{X3}=1,\;\alpha_{X4}=0.70,\;
\alpha_{t,\mathrm{pre}}=-0.08,\;\alpha_{t,\mathrm{post}}=0.35
$
    \item \textbf{Censoring parameters} 
$
\beta=0.08,\;\beta_0=-1.0,\;\beta_A=0,\;\beta_{X1}=0.02,\;\beta_{X2}=-0.02,\;\beta_{X4}=0.08,\;
\beta_{t,\mathrm{pre}}=-0.45,\;\beta_{t,\mathrm{post}}=-0.05
$
    \item \textbf{Time-varying covariate parameters} 
$
\beta_{\mathrm{X3,intercept}}=0,\;\beta_{\mathrm{X3,prev}}=0.78,\;\beta_{\mathrm{X3,Aprev}}=0.50,\;\sigma_{\mathrm{X3}}=0.18
$
\\

$
\epsilon_{\mathrm{X4,intercept}}=0,\;\epsilon_{\mathrm{X4,prev}}=0.50,\;\sigma_{\mathrm{X4}}=0.22
$
\end{itemize}

\paragraph{Scenario 3 (Low confounding and high dependent censoring).} 
\begin{itemize}
    \item \textbf{Treatment parameters} 
$
\gamma_{0,\mathrm{long}}=-0.55,\;\gamma_{X1}=0.08,\;\gamma_{X2}=-0.08,\;\gamma_{X3}=0.20,\;
\gamma_{A\mathrm{prev}}=1.00,\;
\gamma_{t,\mathrm{pre}}=1.00,\;\gamma_{t,\mathrm{post}}=-0.1
$

    \item \textbf{Survival parameters} 
$
\alpha_0=-3.20,\;\alpha_A=-1.00,\;\alpha_{X1}=-0.15,\;\alpha_{X2}=0.15,\;\alpha_{X3}=0.08,\;\alpha_{X4}=1.10,\;
\alpha_{t,\mathrm{pre}}=-0.15,\;\alpha_{t,\mathrm{post}}=0.20
$
    \item \textbf{Censoring parameters} 
$
\beta=0.26,\;\beta_0=-1.05,\;\beta_A=0.20,\;\beta_{X1}=0.20,\;\beta_{X2}=-0.20,\;\beta_{X4}=1.45,\;
\beta_{t,\mathrm{pre}}=0.45,\;\beta_{t,\mathrm{post}}=0.10
$
    \item \textbf{Time-varying covariate parameters} 
$
\beta_{\mathrm{X3,intercept}}=0.05,\;\beta_{\mathrm{X3,prev}}=0.70,\;\beta_{\mathrm{X3,Aprev}}=0.10,\;\sigma_{\mathrm{X3}}=0.30
$
\\

$
\epsilon_{\mathrm{X4,intercept}}=0,\;\epsilon_{\mathrm{X4,prev}}=0.92,\;\sigma_{\mathrm{X4}}=0.25
$
\end{itemize}

\paragraph{Scenario 4 (High confounding and high dependent censoring).} 
\begin{itemize}
    \item \textbf{Treatment parameters} 
$
\gamma_{0,\mathrm{long}}=-1.15,\;\gamma_{X1}=0.30,\;\gamma_{X2}=-0.30,\;\gamma_{X3}=2.5,\;
\gamma_{A\mathrm{prev}}=1.2,\;
\gamma_{t,\mathrm{pre}}=1.40,\;\gamma_{t,\mathrm{post}}=-3.00
$

    \item \textbf{Survival parameters} 
$
\alpha_0=-4,\;\alpha_A=-1.10,\;\alpha_{X1}=-0.35,\;\alpha_{X2}=0.35,\;\alpha_{X3}=1.10,\;\alpha_{X4}=0.85,\;
\alpha_{t,\mathrm{pre}}=-0.10,\;\alpha_{t,\mathrm{post}}=0.35
$
    \item \textbf{Censoring parameters} 
$
\beta=0.28,\;\beta_0=-0.85,\;\beta_A=0.75,\;\beta_{X1}=0.70,\;\beta_{X2}=-0.90,\;\beta_{X4}=1,\;
\beta_{t,\mathrm{pre}}=0.45,\;\beta_{t,\mathrm{post}}=0.05
$
    \item \textbf{Time-varying covariate parameters} 
$
\beta_{\mathrm{X3,intercept}}=0.10,\;\beta_{\mathrm{X3,prev}}=0.85,\;\beta_{\mathrm{X3,Aprev}}=0.85,\;\sigma_{\mathrm{X3}}=0.25
$
\\

$
\epsilon_{\mathrm{X4,intercept}}=0,\;\epsilon_{\mathrm{X4,prev}}=0.85,\;\sigma_{\mathrm{X4}}=0.35
$
\end{itemize}

\begin{table}[H]
\caption{Performance of estimators (RMST difference in months) with 50 Monte Carlo replicate in scenario 4 with time-varying confounders.}
\centering
\label{tab:rmse_tv}
\begin{tabular}[t]{llrrr}
\toprule
estimator & estimate (sd estimate) & bias & MSE & RMSE\\
\midrule
Naive - Filtered & 53.0 (3.8) & 28.1 & 801.7 & 28.3\\
G-formula - Filtered & 27.7 (1.4) & 2.7 & 9.4 & 3.1\\
KM - Cloned & 12.2 (3.1) & -12.7 & 171.6 & 13.1\\
IPCW (A: Logit) - Cloned & 10.1 (2.9) & -14.8 & 228.0 & 15.1\\
G-formula - Cloned & 74.8 (4.6) & 49.9 & 2507.6 & 50.1\\
\addlinespace
IPCW (A: Logit, N: PW-exp) - Cloned & 29.5 (13.1) & 4.6 & 188.1 & 13.7\\
IPCW static only (A: Logit, N: PW-exp) - Cloned & 40.8 (31.6) & 15.8 & 1225.4 & 35.0\\
IPCW (A+N: Logit) - Cloned & 29.8 (10.2) & 4.9 & 125.8 & 11.2\\
IPCW (A+N: PW-exp) - Cloned & 34.4 (5.8) & 9.4 & 121.4 & 11.0\\
\bottomrule
\end{tabular}
\end{table}

\end{appendix}

\end{document}